\documentclass[12pt]{article}

\usepackage{graphicx} 
\usepackage{latexsym}
\usepackage{amsmath}
\usepackage{amsfonts}
\usepackage{amssymb}
\usepackage{graphicx}
\usepackage{amsthm}
\usepackage{geometry}
\usepackage{color}
\usepackage{bm}

\usepackage[final]{showlabels}

\usepackage{hyperref}
\hypersetup{
    colorlinks=true,
    linkcolor=blue,
    filecolor=magenta,      
    urlcolor=cyan,
    citecolor=cyan,
    pdftitle={Fournodavlos, Schlue (2024)},
    pdfpagemode=FullScreen,
    }

\usepackage[center,it]{titlesec}
\usepackage[affil-it]{authblk}

\hyphenation{Schwarz-schild}
\hyphenation{time-like}
\hyphenation{space-like}
\hyphenation{develop-ment}

\addtolength{\hoffset}{-1cm}
\addtolength{\textwidth}{2cm}
\addtolength{\voffset}{-1cm}
\addtolength{\textheight}{2cm}

\theoremstyle{plain}
\newtheorem{theorem}{Theorem}[section]
\newtheorem{proposition}{Proposition}[section]
\newtheorem{lemma}[proposition]{Lemma}
\newtheorem{corollary}[theorem]{Corollary}

\theoremstyle{definition}
\newtheorem{definition}[proposition]{Definition}

\theoremstyle{remark}
\newtheorem{remark}[proposition]{Remark}

\numberwithin{equation}{section}
\setcounter{tocdepth}{1}

\allowdisplaybreaks[4]

\DeclareMathOperator{\Ric}{Ric}
\DeclareMathOperator{\tr}{tr}

\renewcommand{\div}{\mbox{div}}

\newcommand{\ud}{\mathrm{d}}
\newcommand{\vol}[1]{\mathrm{vol}_{#1}}

\def\g{{\bm g}}
\def\Nabla{{\bm \nabla}}
\def\R{{\bm R}}
\def\RIC{{\bm{\mathrm{Ric}}}}

\def\K{\mathcal{K}_{a,m}}
\def\kGamma{{}^{\mathcal{K}}\bm{\Gamma}}

\def\tg{\widetilde{g}}
\def\tk{\widetilde{k}}
\def\tPhi{\widetilde{\Phi}}
\def\tGamma{\widetilde{\Gamma}}
\def\tnabla{\widetilde{\nabla}}

\def\hg{\widehat{g}}
\def\hk{\widehat{k}}
\def\hPhi{\widehat{\Phi}}
\def\hGamma{\widehat{\Gamma}}
\def\hnabla{\widehat{\nabla}}

\newcommand{\Oe}{\mathcal{O}(\widetilde{\varepsilon})}
\newcommand{\Os}[1]{\mathcal{O}(e^{#1 H s})}

\newcommand{\Er}[1]{\mathrm{Error}_{#1}}

\begin{document}

\title{\textbf{\Large Stability of the expanding region\\ of Kerr de Sitter spacetimes}}

\author[$\star$ $\dag$]{Grigorios Fournodavlos}

\author[$\ast$]{Volker Schlue}

\affil[$\star$]{\small University of Crete, Department of Mathematics \& Applied Mathematics, Voutes~Campus,~70013~Heraklion,~Greece\vskip.2pc gfournodavlos@uoc.gr \vskip.2pc \ }

\affil[$\dag$]{\small
Institute of Applied and Computational Mathematics,
FORTH, 70013 Heraklion, Greece\vskip.2pc \ }

\affil[$\ast$]{\small University of Melbourne, School of Mathematics and Statistics, Parkville~VIC~3010, Australia\vskip.2pc  volker.schlue@unimelb.edu.au\vskip.2pc \  }



\maketitle

\begin{abstract}
    We prove the nonlinear stability of the cosmological region of Kerr de Sitter spacetimes. More precisely, we show that solutions to the Einstein vacuum equations with positive cosmological constant arising from data on a cylinder that is uniformly close to the Kerr de Sitter geometry (with possibly different mass and angular momentum parameters at either end) are future geodesically complete and display asymptotically de Sitter-like degrees of freedom.
    The proof uses an ADM formulation of the Einstein equations in parabolic gauge. Together with a well-known theorem of Hintz-Vasy [Acta Math. 220 (2018)], our result yields a global stability result for Kerr de Sitter from Cauchy data on a spacelike hypersurface bridging two black hole exteriors. 
\end{abstract}

\tableofcontents

\section{Introduction}

In the presence of a cosmological constant $\Lambda>0$, the Einstein vacuum equations take the form
\begin{equation}
\label{eq:EVE}
    \RIC[\g]=\Lambda \g\,,
\end{equation}
where $\RIC[\g]$ is the Ricci curvature of the spacetime metric $\g$ on a $1+3$-dimensional Lorentzian manifold $(\mathcal{M},\g)$. Since the sum of the sectional curvatures $\bm{K}_i$ is negative,\footnote{Here $(e_0,e_1,e_2,e_3)$ is an orthonormal frame, $e_0$ is timelike, and $\bm{K}_i=\bm{R}(e_0,e_i,e_0,e_i)$ are the sectional curvatures associated to the planes spanned by $e_0$ and $e_i$.}
\begin{equation*}
    \RIC[\g](e_0,e_0)=\sum_{i=1}^3\, \bm{K}_i=\Lambda \: \g(e_0,e_0)<0\,,
\end{equation*}
solutions to \eqref{eq:EVE} may exhibit \emph{expansion} in all directions. The simplest example of a spacetime with this property is \textbf{de Sitter space}, which models a spatially closed expanding universe with topology $\mathbb{R}\times\mathbb{S}^3$.
The exact solutions to \eqref{eq:EVE}  that motivate this paper are the \textbf{Kerr de Sitter spacetimes} $(\mathcal{M},\g_{\K})$. In addition to black hole interior and exterior regions, they contain a spatially open expanding, or \emph{cosmological region}, with  topology $\mathbb{R}\times\mathbb{S}^2\times\mathbb{R}$.\footnote{Carter gives an excellent discussion of the maximal extension of Kerr de Sitter in \cite{carter}. For an introduction to the global geometry of the cosmological region, specifically in the context of the Cauchy problem see \cite{glw,schlue:weyl}.}

\smallskip
It is known since the work of Friedrich that de Sitter space is \emph{stable} as a solution to \eqref{eq:EVE}: Small perturbations of the initial data on $\mathbb{S}^3$ lead to future geodesically complete spacetimes \cite{friedrich:desitter}. The proof demonstrates in particular the existence of \emph{asymptotic functional degrees of freedom}, but it does \emph{not} apply to Kerr de Sitter spacetimes.\footnote{In \cite{friedrich:desitter} a conformal transformation is used to pass from \eqref{eq:EVE} to the \emph{conformal field equations} which turn out to be regular at the future boundary. In this way, Friedrich was able to reduce the global stability problem for de Sitter to a local in time problem, and identify the asymptotic degrees of freedom with the data on the conformal boundary. In Kerr de Sitter, the desired conformal transformation fails to be regular at $\iota^+$, and  this approach is limited to spatially compact subsets of the cosmological region, and has been applied away from the endpoints in \cite{valiente-kroon:marica,valiente-kroon:edgar}.} The exterior of slowly rotating Kerr de Sitter black holes have been proven to be \emph{asymptotically stable} in a series of influential papers by Hintz and Vasy \cite{vasy:13,hintz:vasy:16:global,hi:vasy:stability}: In the domain bounded by the event and cosmological horizons, small perturbations settle down exponentially fast to a nearby member of the  Kerr de Sitter family. In this paper, we complete the proof of the \emph{global} nonlinear stability of Kerr de Sitter, proving that the cosmological region is stable:

\smallskip
\begingroup
\leftskip 10pt
\rightskip 10pt

\noindent\textsl{Small perturbations on $\mathbb{R}\times\mathbb{S}^2$ which converge exponentially fast at both ends to nearby members of the Kerr de Sitter family lead to future geodesically complete solutions to \eqref{eq:EVE} which display asymptotically de Sitter-like degrees of freedom.}

\endgroup

\bigskip
We proceed with the precise statements.
\vspace{-15pt}
\paragraph{Kerr de Sitter metric.}
Recall the Penrose diagram of the Kerr de Sitter metric $\g_{\K}$ in Fig.~\ref{fig:penrose}, for small angular momentum $a$, $a^2\ll m^2\ll \Lambda^{-1}$. 
The cosmological region $\mathcal{R}$ lies to the future of the black hole exteriors $\mathcal{S}_1$, and $\mathcal{S}_2$, and is separated from these by the \emph{cosmological horizons} $\mathcal{C}_1$ and $\mathcal{C}_2$. The conformal boundary at infinity is denoted by $\mathcal{I}^+$, and the black hole regions $\mathcal{B}_1$, and $\mathcal{B}_2$ are in the complement of its past. 
\begin{figure}[tb]
  \centering
  \includegraphics[scale=1.4]{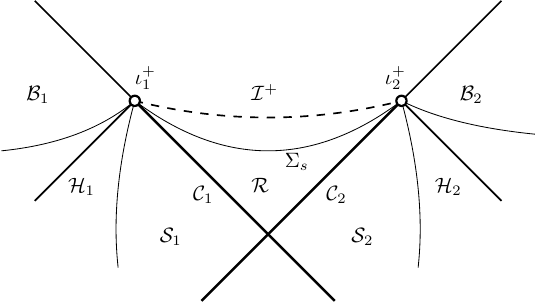}
  \caption{Penrose diagram of Kerr de Sitter geometry.}
  \label{fig:penrose}
\end{figure}
The conformal diagram in Fig.~\ref{fig:penrose} depicts various level sets of a function $r$,
which are timelike, spacelike, or null depending on the values of the polynomial
\begin{equation}
    \Delta_r=(r^2+a^2)\Bigr(1-\frac{\Lambda}{3}r^2\Bigr)-2mr\,,
\end{equation}
which may be positive, negative, or zero, respectively \cite{carter,gibbons:hawking}.
In the cosmological region $r$ is a time-function, and the metric takes the form
\begin{equation}
    \g_{\K}=-\Phi_{\K}^2\ud r^2+(g_{\K})_r\,,\qquad \Phi_{\K}^{-2}=\frac{\Lambda}{3}r^2-1+\mathcal{O}(r^{-1})\,,
\end{equation}
where $(g_{\K})_r$ is a Riemannian metric. Using a reparametrization of time 
\begin{equation}
    r=e^{Hs}\,,\qquad H=\sqrt{\frac{\Lambda}{3}}\,,
\end{equation}
that leads to an expression for the metric that is more common in cosmology,\footnote{In particular, in comparison to FLRW spacetimes; see for instance \cite{rod:speck:FLRW:euler,fournodavlos:FLRW} and references therein.}
we note that $\g_{\K}$ can then be expressed in coordinates so that in $\mathcal{R}$:
\begin{gather}
    \g_{\K}=-\Phi_{\K}^2\ud s^2+(g_{s,\K})_{ij}\ud x^i\ud x^j\,,\\
    \Phi_{\K}^2=1+\Os{-2}\,,\qquad (g_{s,\K})_{ij}=\Os{2}\,;
\end{gather}
for explicit formulas see  Section~\ref{sec:kerr}.  \textbf{Schwarzschild de Sitter} is obtained by setting $a=0$. In fact, $g_{\K}$ is a metric on $\mathbb{R}\times\mathbb{S}^2$ and there are coordinates $(x^1,x^2,x^3)=(t,x^2,x^3)$ such that $(g_{\K})_{ij}$ is independent of the $t$, and $(x^2,x^3)$ are spherical coordinates on $\mathbb{S}^2$. In other words, the metric $g_{\K}$ is translation invariant along the cylinder, and for $a=0$ takes the form:\footnote{See \cite[Section 3, 5 \& Appendix B]{glw} and \cite{gibbons:hawking,glw} for a detailed discussion of the geometry of the cosmological region, in these and comoving coordinates.}

\begin{equation} \label{eq:sds:metric}
    g_{\mathcal{K}_{m,0}}=\Bigl(\frac{\Lambda r^2}{3}+\frac{2m}{r}-1\Bigr)\ud t^2+r^2 \ud \sigma^2_{\mathbb{S}^2}
\end{equation}

\paragraph{Cauchy problem for Kerr de Sitter.}\footnote{For an introduction to the global stability problem for Kerr de Sitter see  \cite{hi:vasy:stability,schlue:weyl}; also \cite[Ch.~6]{dr:clay}.}
Consider a spacelike hypersurface $\Sigma$ in Schwarzschild de Sitter as depicted in Fig.~\ref{fig:cauchy}. The nonlinear stability result of Hintz and Vasy in \cite{hi:vasy:stability} implies that for initial data $(g,k)$ close to the data induced by a Schwarzschild de Sitter metric $\g_{\mathcal{K}_{0,m}}$,  the solution to \eqref{eq:EVE} converges to nearby members of the Kerr de Sitter family in both $\mathcal{S}_1$ and $\mathcal{S}_2$:\footnote{The statement applies independently to $\mathcal{S}_1$  and $\mathcal{S}_2$ by domain of dependence. The result is obtained in a generalized harmonic gauge, which is itself determined dynamically together with the final states $\g_{\mathcal{K}_{a_i,m_i}}, i=1,2$. The specific gauge used  in \cite{hi:vasy:stability} is not relevant for the following, except that the time variable $t$ which is used to express the rate of convergence is comparable to Schwarzschild de Sitter time, $T(t)=1$ where $\mathcal{L}_T \g_{\mathcal{K}_{0,m}}=0$.} 
\begin{theorem}[Stability of the stationary black hole exterior of slowly rotating Kerr de Sitter, Theorem~1.1 in \cite{hi:vasy:stability}] \label{thm:hv}
    For smooth initial data $(g,k)$, close to the data $(g_{\mathcal{K}_{0,m}},k_{\mathcal{K}_{0,m}})$ induced by the Schwarzschild de Sitter metric $\g_{\mathcal{K}_{0,m}}$ on $\Sigma$ (cf.~Figure~\ref{fig:cauchy}) in a higher Sobolev norm,
    there exists a solution to \eqref{eq:EVE} attaining the prescribed initial data on $\Sigma$,
    and parameters $(a_1,m_1)$ and $(a_2,m_2)$ with $\sum_{i=1}^2 |a_i|+|m_i-m|\ll 1$, so that
\begin{equation}
    \label{eq:HV}
    \g-\g_{\mathcal{K}_{a_i,m_i}}=\mathcal{O}(e^{-\alpha_i t}) \quad\text{: in } \mathcal{S}_i\,,\qquad i=1,2\,,
\end{equation}
    for constants $\alpha_i>0$, $i=1,2$ independent of the initial data.
\end{theorem}

In fact, the stability result \eqref{eq:HV} holds on a domain that goes \emph{beyond} the event and cosmological horizons, \emph{uniformly} in $r$; see \cite[Theorem~1.4]{hi:vasy:stability} for precise statements, and Section~3.5 therein for a discussion of the extension beyond the horizon. Therefore the existence of a development is known on a domain (whose future boundary is indicated by the dash dotted lines in Figure~\ref{fig:cauchy}) which  contains a level set $\Sigma_{s_0}\simeq\mathbb{R}\times\mathbb{S}^2$ of $r$, with the property that the geometric data on $\Sigma_{s_0}$ converges along one end to that induced by $\g_{\mathcal{K}_{a_1,m_1}}$, and that induced by $\g_{\mathcal{K}_{a_2,m_2}}$ along the other. This is the initial data for the evolution problem we consider.

\begin{figure}[tb]
  \centering
  \includegraphics[scale=1.4]{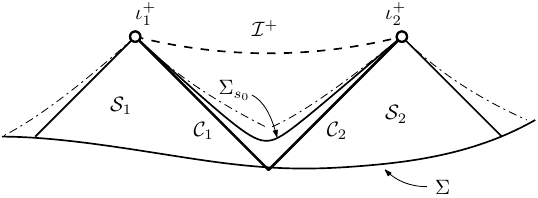}
  \caption{Cauchy problem for Kerr de Sitter.}
  \label{fig:cauchy}
\end{figure}

\paragraph{Geometric set-up for the main theorem.}
We establish the existence of a spacetime $(\mathcal{R},\g)$ which is foliated by the level sets of a time function $s$,
\begin{equation}
    \mathcal{R}=\bigcup_{s\in[s_0,\infty)} \Sigma_s\,, \qquad\Sigma_s\simeq \mathbb{R}\times\mathbb{S}^2\,,
\end{equation}
with each leaf $\Sigma_s$ being diffeomorphic to a cylinder $\mathbb{R}\times\mathbb{S}^2$. We choose coordinates so that
\begin{equation} \label{eq:g:intro}
    \g=-\Phi^2\ud s^2+(g_s)_{ij}\ud x^i \ud x^j\,,
\end{equation}
where $\Phi$ is the lapse function of the foliation, and $g_s$ is a Riemannian metric on $\Sigma_s$; see Section~\ref{sec:ADM}. The foliation is determined by a choice of the lapse function  which we take to be the \emph{solution of a parabolic PDE};\footnote{The PDE satisfied by the lapse function is the consequence of a geometric condition that involves a reference metric; see \eqref{lapse:intro} below. We defer the derivation of the PDE to Section~\ref{sec:evoleqs}. The specific gauge choice is of course central to the global existence proof in this setting. For a broader discussion of the gauge in relation to other works in the literature see Remark~\ref{rmk:gauge} below.} see Section~\ref{sec:evoleqs}.

In fact, in coordinates $(x^1,x^2,x^3)=(t,x^2,x^3)$ on $\mathbb{R}\times\mathbb{S}^2$,
the metric components of the solution $\g$ as well as of any member $\g_{\K}$ of the Kerr de Sitter family can be viewed as functions of $(r,t,x^2,x^3)$ in the fixed charts for the expanding region of Schwarzschild de Sitter.   Then also 
\begin{equation}
    \label{metric.ref.intro}
    \widetilde{\g}=\chi(t)  \g_{\mathcal{K}_{a_1,m_1}}+ (1-\chi(t) ) \g_{\mathcal{K}_{a_2,m_2}}
\end{equation}
is a metric on $\mathcal{R}$, where $\chi$ is a smooth cutoff function on $\mathbb{R}$ with $\chi(t)=0$ for $t\leq -1$, and $\chi(t)=1$ for $t\geq 1$. With this choice a \emph{reference metric}, which in the above coordinates again takes the form $\widetilde{\g}=-\tPhi^2\ud s^2+(\tg_s)_{ij}\ud x^i\ud x^j$ we can define
\begin{equation}
    \hPhi=\Phi-\tPhi\,,\quad\hg=g-\tg\,,\quad\:\hk=k-\tk\,,
\end{equation}
where $g$, $\tg$, and $k$, $\tk$ are the first and second fundamental forms of $\g$, and $\widetilde{\g}$ on $\Sigma_s$, respectively. In these coordinates,
we then also obtain from \eqref{eq:HV} that for $s=s_0$, $\hPhi,\hg,\hk\to 0$ as $t\to\pm\infty$; cf.~Figure~\ref{fig:cylinder}.

\paragraph{Hyperbolicity and Energies.}
In Section~\ref{sec:evoleqs} we cast the Einstein equations \eqref{eq:EVE} as a system of first order variation equations for  $\hg$, and $\hk$, (and the renormalised Christoffel symbols $\hGamma=\Gamma-\tGamma$). 
We show that in the gauge
\begin{equation}
    \label{lapse:intro}
    \Phi-\tPhi=\tr_g k-\tr_{\tg}\tk\,,
\end{equation}
the system of equations is essentially \emph{symmetric hyperbolic}, \footnote{The hyperbolic structure of the system is seen only in conjunction with the \emph{constraint equations}; see in particular Section~\ref{subsec:en.id.discussion}.}
in the sense that we have an \emph{energy identity} for the system derived in Section~\ref{sec:future}. 
The energies we use for the global existence argument are based on standard higher order Sobolev norms on $\Sigma_s$:
\begin{equation}\label{en.def.intro}
    \mathcal{E}_N(s)=\|\widehat g\|^2_{H^N(\Sigma_s,g)}+\|\widehat g^{-1}\|^2_{H^N(\Sigma_s,g)}
+e^{3Hs}\|\widehat\Phi\|^2_{H^N(\Sigma_s,g)}
+e^{2Hs}\big(\|\widehat\Gamma \|^2_{H^N(\Sigma_s,g)}
+\|\widehat k\|^2_{H^N(\Sigma_s,g)}\big)
\end{equation}
While suppressed from the notation,
the norms in the Sobolev spaces $H^N(\Sigma_s,g)=H^N_{\alpha_1,\alpha_2}(\Sigma_s,g)$ are weighted to incorporate exponential decay towards the two ends of the cylinder; see Section~\ref{sec:norms},~\ref{subsec:Boots}.
For fixed $\alpha_1\geq 0$, and $\alpha_2\geq 0$ in the definition of the norms, the Sobolev embedding reads; cf.~Figure~\ref{fig:cylinder}:
\begin{equation}
     \|(e^{\alpha_1 t}+e^{-\alpha_2 t})\mathcal{T}\|_{W^{N,\infty}(\Sigma_s,g)} \leq C  \|\mathcal{T}\|_{H_{\alpha_1,\alpha_2}^{N+2}(\Sigma_s,g)}
\end{equation}
Note that $\mathcal{E}_N(s)$ in \eqref{en.def.intro} refers to the energy of the \emph{renormalised} quantities, and measures the distance from the reference metric $\widetilde{\g}$. 

Moreover $\mathbb{R}\times\mathbb{S}^2$ is endowed with the standard metric on the cylinder, 
and in coordinates $(x^1,x^2,x^3)=(t,\theta,\phi)$,
\[\mathring{g}=\ud t^2+d\theta^2+\sin^2\theta \ud\phi^2\,,\] and corresponding (unweighted) norms $W^{M,\infty}(\mathbb{R}\times\mathbb{S}^2,\mathring{g})$ are defined in Section~\ref{sec:norms}, to measure the size of the asymptotic geometric quantities at infinity.

\paragraph{Main theorem.} We are now ready to state the main result in this paper, which is a \emph{global existence} result for the cosmological region, for initial data close to Kerr de Sitter. For \emph{local existence} in parabolic gauge see Appendix~\ref{sec:app}.
\begin{theorem}[Stability of the expanding region of Kerr de Sitter] \label{thm:fs}
Let $(\mathcal{R},\g)$ be a local in time solution to \eqref{eq:EVE} in parabolic gauge,
expressed relative to a time function $s$ whose level sets satisfy the geometric gauge condition \eqref{lapse:intro}, and 
\[ \mathcal{R}=\bigcup_{s=s_0}^{s_1}\Sigma_s\,, \qquad \Sigma_s\simeq \mathbb{R}\times\mathbb{S}^2\,,\]
for some $s_1>s_0>0$, with $\g$ expressed in local coordinates  as in \eqref{eq:g:intro}.

For $i=1,2$, let $|a_i|\ll m_i \ll \Lambda^{-1/2}$, and $\alpha_i\geq 0$.
Suppose for some $\mathring{\varepsilon}>0$,  the initial data $(g_0,k_0)$ on $\Sigma_{s_0}$
is $\mathring{\varepsilon}$-close to the data induced by a Kerr de Sitter metric $\g_{\K}$ expressed in this gauge, with parameters $(a_1,m_1)$ at one end, and $(a_2,m_2)$ at the other end,
in the sense that for some $N\geq 4$, with the energy defined in \eqref{en.def.intro}:
\begin{equation} \label{eq:thm:fs:small}
    \mathcal{E}_N(s_0)=\mathring{\varepsilon}^2\,.
\end{equation}

\begin{enumerate}
    \item Then, for $|a_1-a_2|$, $|m_1-m_2|$, and $\mathring{\varepsilon}>0$ sufficiently small,
    the solution is \emph{global},
\begin{equation}\label{eq:fs:R}
    \mathcal{R}=\bigcup_{s=s_0}^\infty\Sigma_s\,,\qquad\text{and}\quad \mathcal{E}_N(s)\leq C\: \mathcal{E}_N(s_0)\qquad \text{for all }s\geq s_0\,.
\end{equation}

\item Furthermore,  we have the following asymptotics for the spatial part of the metric \eqref{eq:g:intro}:
\begin{equation} \label{asym.free}
    g_{ij}(s,x)=g^\infty_{ij}(x) e^{2Hs}+h_{ij}(s,x)\,,\qquad g^{\infty}_{ij}=\tg^{\infty}_{ij}+\hg^{\infty}_{ij}\,, \qquad h_{ij}=\widetilde{h}_{ij}+\widehat{h}_{ij}\,,
\end{equation}
    where $\tg^{\infty}$, $g^{\infty},h,\widetilde{h}$ are metrics on $\mathbb{R}\times\mathbb{S}^2$,  with $\tg^{\infty}_{ij},\widetilde{h}_{ij}$ induced by the reference metric, satisfying
    \begin{equation}\label{asym.g.decay.intro}
         \|(e^{\alpha_1 t}+e^{-\alpha_2 t})\hg^{\infty}\|_{W^{N-4,\infty}(\mathbb{R}\times\mathbb{S}^2,\mathring{g})}\leq C \mathring{\varepsilon}\qquad \|(e^{\alpha_1 t}+e^{-\alpha_2 t})\widehat{h}\|_{W^{N-4,\infty}(\mathbb{R}\times\mathbb{S}^2,\mathring{g})}\leq C \mathring{\varepsilon} \,.
    \end{equation}
\item Finally, for $N\ge6$, the lapse function admits the asymptotic expansion
\begin{equation}
\label{asym.free.Phi}
\Phi(s,x)=1+\Phi^\infty(x)e^{-2Hs}+\Psi(s,x)\,,\qquad \Phi^\infty=\widehat{\Phi}^\infty+\widetilde{\Phi}^\infty\,,\qquad\Psi=\widehat{\Psi}+\widetilde{\Psi}\,,
\end{equation}
where $\widetilde{\Phi}^\infty,\widetilde{\Psi}$ are functions induced by the reference metric, satisfying
\begin{align}\label{asym.Phi.decay.intro}
\begin{split}
          \|(e^{\alpha_1 t}+e^{-\alpha_2 t}) \widehat{\Phi}^\infty\|_{W^{N-6,\infty}(\mathbb{R}\times\mathbb{S}^2,\mathring{g})}\leq&\, C\mathring{\varepsilon}\,,\\ 
          \| (e^{\alpha_1 t}+e^{-\alpha_2 t}) \widehat{\Psi} \|_{W^{N-6,\infty}(\mathbb{R}\times\mathbb{S}^2,\mathring{g})} \leq&\, C \mathring{\varepsilon} e^{-4Hs}\,.
          \end{split}
     \end{align}
\end{enumerate}
\end{theorem}
\begin{proof}
The global stability statement \emph{(I)} is proven in Corollary~\ref{cor:global}. The precise asymptotic behavior statements \emph{(II-III)} are the subject of Proposition~\ref{prop:k.Phi.ref.est}.
\end{proof}

\begin{figure}
    \centering
    \includegraphics[]{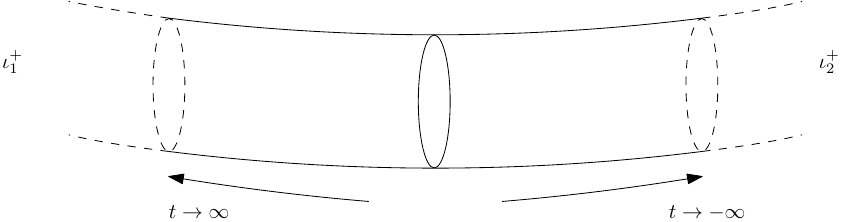}
    \caption{Topology of the the level sets $\Sigma_s$ diffeomorphic to $\mathbb{R}\times\mathbb{S}^2$.}
    \label{fig:cylinder}
\end{figure}

The stability of the expanding region of Kerr de Sitter established in Theorem~\ref{thm:fs} is a self-contained result for the Einstein equations \eqref{eq:EVE}, and independent of the  stability of the stationary black hole exterior given by Theorem~\ref{thm:hv}. Indeed for any smooth Cauchy data on $\Sigma_{s_0}$ sufficiently close to Kerr de Sitter satisfying \eqref{eq:thm:fs:small}, the solution is global to the future as described in Theorem~\ref{thm:fs}. In particular, this result does \emph{not} require the data to converge exponentially to Kerr de Sitter: Theorem~\ref{thm:fs}  holds with $\alpha_1=\alpha_2=0$. Nor does it rely on a smallness assumption for the angular momenta $a_i$, $i=1,2$.\footnote{In particular, if the black hole exteriors $\mathcal{S}_i$ were proven to be stable in the whole subextremal range, then Theorem~\ref{thm:fs} provides an immediate extension to the cosmological region.} However for the global stability problem for \emph{slowly} rotating Kerr de Sitter spacetimes, Theorem~\ref{thm:hv} implies that the assumptions of Theorem~\ref{thm:fs} are satisfied for some $(a_i,m_i)$, $\alpha_i>0$, $i=1,2$, yielding a global nonlinear stability result, in the domain of dependence of a Cauchy hypersurface bridging two black hole exteriors:

\begin{corollary}
    For smooth initial data $(g,k)$ on $\Sigma$ as in Figure~\ref{fig:cauchy},
    close to the initial data induced by the Schwarzschild de Sitter metric as in Theorem~\ref{thm:hv},
    there exists a solution to \eqref{eq:EVE} on $\mathcal{S}_1\cup\mathcal{R}\cup \mathcal{S}_2$,\footnote{Here $\mathcal{S}_i$, $i=1,2$ are diffeomorphic to the domains to the future of $\Sigma$ in Schwarzschild de Sitter, containing a small extension of the stationary black hole exteriors, uniformly in $r$; \cite[Section 3.5]{hi:vasy:stability}. } attaining the prescribed Cauchy data on $\Sigma$, which satisfies the asymptotic stability statement \eqref{eq:HV} in $\mathcal{S}_i$, $i=1,2$, and $(\mathcal{R},g)$ is global in the sense of Theorem~\ref{thm:fs}~(I). Moreover, the solution satisfies the orbital stability (towards $\mathcal{I}^+$) and asymptotic stability (towards $\iota^+_i$, $i=1,2$) statements of Theorem~\ref{thm:fs}~(II,III) with $\alpha_i>0$, $i=1,2$ as in \eqref{eq:HV}.
\end{corollary}

\begin{remark}[Exponential decay]
    The estimates \eqref{asym.g.decay.intro} and  \eqref{asym.Phi.decay.intro} show in particular that exponential decay established in \eqref{eq:HV} along $\mathcal{C}_i$, $i=1,2$, is inherited along every $\Sigma_s$ towards $\iota_i^+$, $i=1,2$.
    Exponential decay in Kerr de Sitter \emph{along} the cosmological horizon is well-established for small angular momentum in various settings: For the linear wave equation \cite{bony:sds,sd:quasi,sd:beyond,mavro:sds}, quasi-linear equations \cite{hi:quasi,hintz:vasy:16:global,hintz:kds:resonance,mavro:quasi}, and -- justifying our assumption here -- for the Einstein equations \cite{hi:vasy:stability,fang:kds:linear,fang:kds:nonlinear}.
\end{remark}

\begin{remark}[Reference metric]
The fact that it is possible to prove global existence with a \emph{fixed} reference metric $\widetilde{\g}$ means that on \emph{every} time slice $\Sigma_s$, $s\geq s_0$, the solution tends to $\g_{\K}$ with the \emph{same} parameters -- $(a_1,m_1)$ towards $\iota^+_1$, and $(a_2,m_2)$ towards $\iota^+_2$, see Fig~\ref{fig:cauchy}.
This is markedly different from the proofs of the major black hole stability theorems \cite{hi:vasy:stability,dhr:linear,ks:schwarzschild,dhrt:schwarzschild,kszeftel:elena,ks:kerr}: In the $\Lambda>0$ case, an iterative scheme for the linearised equations is implemented in \cite{hi:vasy:stability} to make successive gauge corrections and to find the parameters $(a,m)$ of the final state; in the case $\Lambda=0$ the modulation techniques  \cite{gcm:kerr,gcm:effective,shen:kerr} are used in \cite{ks:kerr} both to anchor the gauge, and to determine the parameters $(a,m)$ of the final state.
Heuristically, the reason that the parameters of the Kerr de Sitter metric remain unchanged in the cosmological region is that they are only relevant for convergence along \emph{spacelike} hypersurfaces --- which at their endpoints are unaffected by the perturbation.
\end{remark}

\begin{remark}[Functional degrees of freedom]
The main asymptotic degrees of freedom are \emph{functional} in nature:\footnote{This is already the case for the wave equation \cite{vasy:wave:ds,glw}, and gives rise to a scattering problem \cite{bernhardt}.}
The leading orders of  the solution $\g$ in $s$ are not captured by a member $\g_{\K}$ of the Kerr de Sitter family, but given by a free function, as in \eqref{asym.free}. \footnote{Yet \emph{locally} in the past of a given point on the conformal boundary the solution is asymptotically de Sitter. Cf.~discussions of the \emph{cosmic no hair} conjecture in \cite{hakan:nohair,schlue:weyl}; for a proof in \emph{spherical symmetry} see \cite{costa:nohair}. } These functions are free up to constraints which take a particularly simple form on the conformal boundary \cite{rendall:asymptotics, valiente:book}.
The significance of this freedom for the theory of gravitational radiation has been suggested by Ashtekar et al \cite{ashtekar:PRL}.
We remark, however, that the free functions $g_{ij}^\infty(x)$ in the expansion \eqref{asym.free} capture only part of the asymptotic data set on the conformal boundary. The remaining free functions appear in higher order expansions of the metric, in particular, at order $\mathcal{O}(e^{-Hs})$ for $g_{ij}(s,x)$; (see also Remark~\ref{rmk:expansion}).
In the setting of FLRW spacetimes this is obtained in \cite[Theorem 1.3]{fournodavlos:FLRW}.
More recently, Hintz-Vasy established a Fefferman-Graham type expansion in the present setting, and smoothness of the rescaled metric at the conformal boundary \cite{hintz:vasy:csm:24}.
We refer in this context also to the gluing and scattering constructions \cite{hintz:gluing,hintz:scattering,valiente-kroon:edgar}.
\end{remark}

\begin{remark}[Topology]
    A significant aspect of our theorem is that the topology of the spatial slices $\Sigma_s$ is $\mathbb{R}\times\mathbb{S}^2$ (and thus in the conformal diagram of Figure~\ref{fig:cauchy} the hypersurfaces $\Sigma_s$ extend to the end points $\iota_1^+$ and $\iota_2^+$, see Figure~\ref{fig:cylinder}). Indeed for \emph{compact} subsets $K\subset\Sigma_s$, with $s\geq s_0$ taken sufficiently large,
    a range of results in the literature imply that the domain of dependence $\mathcal{D}^+(K)$ is contained in a perturbation of \emph{de Sitter space}:\footnote{For a longer discussion of the spatially compact setting and its relation to earlier work on de Sitter see \cite[Section 1.6]{schlue:weyl}. The de Sitter solution $(\mathbb{H},h)$ can be realized as a hyperboloid $\mathbb{H}$ in $\mathbb{R}^{4+1}$, with metric $h=m\rvert_{\mathbb{H}}$ induced by the ambient Minkowski metric $m$. See also \cite{vasy:wave:ds} or \cite{schlue:optical}  where the embedding $\mathbb{H}\subset\mathbb{R}^{4+1}$ is used, and its geometric properties are further discussed.} 
    In particular the conformal method applies \cite{friedrich:desitter,valiente-kroon:marica}, and more generally, Ringstr\"om proved geodesic completeness from spatially bounded data irrespective of the topology \cite{ringstrom:invent}.
\end{remark}

\begin{remark}[Parabolic gauge] \label{rmk:gauge}
The gauge \eqref{lapse:intro} is crucial for the global existence proof in this setting and  leads to a foliation that correctly identifies $\mathcal{I}^+$. (In particular, no refoliation is necessary in the continuation argument for Theorem~\ref{thm:fs}.) A related problem occurs in the study of stable spacelike singularities, where related ADM gauges have been used to \emph{synchronize the singularity} \cite{rodspeck:bigbang,fournodavlos:rodspeck,fluk:kasner}; see also \cite{alexakis:greg}.
In the setting of expanding cosmologies, related gauges have been used in \cite{uggla:prl,fournodavlos:FLRW,ftodd}.
\end{remark}

\begin{remark}[Global Penrose diagram]
Since \cite{hi:vasy:stability} provides the stability of the regions $\mathcal{S}_i$ on a domain that extends beyond both the cosmological horizon \emph{and} the event horizon, the result of Hintz-Vasy can in principle be combined with a theorem of Dafermos-Luk on the $C^0$-stability of the Cauchy horizon \cite[Section 1.6]{dluk}.\footnote{For results in \emph{spherical symmetry} with $\Lambda>0$ in this region see \cite{costa:maxwell:I,costa:maxwell:II,costa:maxwell:III}.}
Further to Corollary~1.3, the combination of all three theorems shows in particular that the Penrose diagram in Fig.~\ref{fig:penrose} is dynamically stable.
\end{remark}

    

\begin{description}
    \item[Acknowledgements.] This paper resolves a problem that was first suggested by Mihalis Dafermos during Volker's Ph.D.~thesis. Over the years we have benefited from conversations with many people in the field, and we would like to thank (in alphabetical order) Spyros Alexakis, Abhay Ashtekar, Peter Hintz, Gustav Holzegel,  Jonathan Luk, Hans Ringstr\"om, Jared Speck, Jacques Smulevici, J\'er\'emie Szeftel, and Andr\'as Vasy for their time and interest in this project. We would also like to thank the IHP in Paris, MATRIX in Australia, and the Mittag-Leffler Institute in Sweden for their hospitality on several occasions. G.F.
gratefully acknowledges the support of the ERC starting grant 101078061 SINGinGR, under the European
Union’s Horizon Europe program for research and innovation. 
V.S.~is grateful for the support from the Humboldt Professorship of Gustav Holzegel.

\item[Recent developments.]
    Since the first appearance of this preprint,
    Hintz and Vasy have given an alternative proof of the stability of the cosmological region in generalised harmonic gauge \cite{hintz:vasy:csm:24}, allowing in particular a localisation near timelike infinity.
\end{description}

\section{Covariant \textsc{ADM} formulation of the Einstein equations in parabolic gauge}
\label{sec:eqs}

In this section we derive the Einstein equations for a decomposition of a $1+3$-dimensional Lorentzian manifold $(\mathcal{M},\g)$ with respect to a time-function $s$ which satisfies a parabolic gauge, see \eqref{lapse:intro}.
For a given reference metric $\widetilde{\g}$, this gauge equates the deviation of the corresponding lapse functions to the deviation in the mean curvatures of the leaves $\Sigma_s$ of the foliation.
The Einstein equations then become a first order system for the first and second fundamental forms of $\Sigma_s$, coupled to a parabolic equation for the lapse function.

\subsection{Preliminaries of the \textsc{ADM} decomposition}
\label{sec:ADM}

We recall various geometric notions in the ADM formalism, all defined locally with respect to a differentiable time-function $s$; for an introduction to this formulation see for instance \cite{Christ-notes}.

Given any time-function $s$ (namely a differentiable function $s$ on $\mathcal{M}$ with the property that $V(s)>0$ for any future-directed timelike vectorfield $V$), we denote by $\Sigma_s$ the level sets of $s$  (a family of spacelike hypersurfaces in $\mathcal{M}$), and by $T$ the timelike vectorfield colinear to the normal $N$ defined by $T^\mu=-\g^{\mu\nu}\partial_\nu s$. The \emph{lapse function} $\Phi$ is defined $\Phi^{-2}=-\g(T,T)$, and the unit normal to $\Sigma_s$ is then given by $N=\Phi T$.
    For any choice of coordinates $(x^1,x^2,x^3)$ on a given level set $\Sigma_{s_0}$,
    we can assign to any point $p\in \Sigma_s$ the coordinates $(s,x^1,x^2,x^3)$ if $p=\psi_{t}(q)$, $t=s-s_0$, where $\psi_{t}$ is the $1$-parameter group of diffeomorphism generated by $T$, and $q\in \Sigma_{s_0}$ has coordinates $(x^1,x^2,x^3)$. In these coordinates, the metric takes the form:\footnote{As usual, we use the summation convention and Latin indices range in $\{1,2,3\}$, while Greek indices range in $\{0,1,2,3\}$. }
\begin{equation}\label{metric}
\g=-\Phi^2 \ud s^2+ {g}_{ij}\ud x^i \ud x^j
\end{equation}

We denote the first and second fundamental forms of $\Sigma_s$ by $g_s$, and $k_s$, respectively, and usually suppress the subscript. They are defined by  
    \begin{equation}
  \label{def:ff}
  (g_s)_p=\g_p\Bigr\rvert_{\mathrm{T}_p\Sigma_s}\,,
  \qquad   (k_s)_p(X,Y)=\g_p(\nabla_X N,Y)\,,\qquad X,Y\in\mathrm{T}_p\Sigma_s, \quad p\in\mathcal{M}.
\end{equation}
Here $N=\Phi T$ is the unit normal, 
and we have a coordinate frame $E_i=\frac{\partial}{\partial x^i}$
which is Lie transported by $T=\frac{\partial}{\partial s}$: $[T,E_i]=0$.
In the frame $(E_0=N,E_1,E_2,E_3)$ the metric components are
\begin{equation}
    \g_{00}=-1\,,\qquad \g_{0i}=0\,,\qquad \g_{ij}=g_{ij}\,.
\end{equation}

The first variation formula is
\begin{equation}
    \label{eq:firstvariation}
  \frac{\partial g_{ij}}{\partial s}=2\Phi\, k_{ij}\,,
\end{equation}
and we will express similarly the second variation equation for $\partial_s k_{ij}$ as well as the Gauss-Codazzi equations of the embedding of $\Sigma_s$ in $\mathcal{M}$ in this frame.

While all spacetime quantities are set in bold, we print $\Sigma_s$-tangent tensors in standard font.
For instance, while the Riemann curvature of $(\mathcal{M},\g)$ is $\bm{R}_{\alpha\beta\mu\nu}$,
the components of the Riemann curvature of $(\Sigma_s,g_s)$ are $R_{mnij}$.
The  Einstein vacuum equations are
\begin{equation}
    \label{EVE}
    \RIC[\g]=\Lambda \g\,.
\end{equation}
While we denote the Ricci curvature of $\g$ by $\RIC[\g]$,
and the Ricci curvature of $g$ by $\Ric[g]$,
we often denote the components of the Ricci curvature simply by:
\begin{equation}
    \label{eq:Ric:short}
    \Ric_{ij}=g^{mn}R_{minj}
\end{equation}
Similarly for the Levi-Civita connections of $g,\g$: We denote by $\nabla$ the connection induced by $\Nabla$ on $\Sigma_s$.

The second variation formula is
\begin{equation}
  \label{eq:secondvariationalformula}
  \frac{\partial k_{ij}}{\partial s}=\nabla_i\nabla_j\Phi+\Phi\Bigl\{-\R_{i0j0}+k_i^m k_{mj}\Bigr\}\,.
\end{equation}
The aim is to obtain a closed system of evolution equations for $g_s$, and $k_s$,
and for that purpose we first eliminate the curvature component $\R_{0i0j}$ in \eqref{eq:secondvariationalformula}  using the Einstein equations \eqref{EVE}.

The \textbf{Codazzi equations} are:
\begin{equation}
  \nabla_i k_{jm}-\nabla_j k_{im}=\R_{m0ij}\,,
\end{equation}
and upon contracting and using \eqref{EVE} we obtain:
\begin{equation}
  \label{eq:perppara}
  \nabla^i k_{ji}-\nabla_j\tr k=\RIC_{0j}[\g]=0
\end{equation}

The  \textbf{Gauss equation} reads:
\begin{equation}
  \label{eq:gauss}
  R_{minj}+k_{mn}k_{ij}-k_{mj}k_{ni}=\R_{minj}
\end{equation}
A first contraction yields
\begin{equation}
  \label{eq:gauss:contracted}
  \Ric_{ij}[g]+\tr k\:k_{ij}-k^n{}_{i}\,k_{nj}=\R_{0i0j}+\RIC_{ij}[\g]=\R_{0i0j}+\Lambda g_{ij}\,,
\end{equation}
which gives a formula for $\R_{i0j0}$
which we may substitute into the second variation formula \eqref{eq:secondvariationalformula}:
\begin{equation}
  \frac{\partial k_{ij}}{\partial s}=\nabla_i\nabla_j\Phi-\Phi\Bigl\{\Ric_{ij}[g]+\tr k\, k_{ij}-2k_i{}^mk_{mj}-\RIC_{ij}[\g]\Bigr\}\label{eq:parapara:RIC}
    \end{equation}
A second contraction of \eqref{eq:gauss:contracted} gives:
\begin{equation}
  \label{eq:perpperp}
  \tr_g\Ric+(\tr k)^2-\lvert k\rvert^2=2\RIC_{00}[\g]+{\bm{\mathrm{tr}}}_{\g}\RIC =2\Lambda\,,
\end{equation}
where $R=\tr_g \Ric$ is the scalar curvature  of $g$.
This is  the \emph{Hamiltonian constraint}:
\begin{equation}
\label{Ham.const} R-\lvert k\rvert^2+(\tr k)^2=\,2\Lambda
\end{equation}
Together with \eqref{eq:perppara},
which is also referred to as the \emph{momentum constraint}
\begin{equation}
\label{mom.const} \div_g k - \ud \tr_g k=\,0 \,,   
\end{equation}
these are the \emph{constraint equations} for the first and second fundamental form,
complementing the \emph{evolution equations} \eqref{eq:firstvariation} for $g$ and \eqref{eq:parapara:RIC} for $k$. Combining \eqref{eq:secondvariationalformula}, \eqref{eq:perpperp}, we also have the following evolution equation for the mean curvature:
\begin{align}\label{eq:trk}
\partial_s\mathrm{tr}k=\Delta_g\Phi-\Phi |k|^2-\Phi {\bf Ric}_{00}[{\boldsymbol g}].
\end{align}

For future reference we also record the formula for the Riemann curvature of $g$ in local coordinates:
\begin{equation}
R^a{}_{bmn}=\partial_m\Gamma^a_{nb}-\partial_n\Gamma^a_{mb}+\Gamma^a_{mc}\Gamma^c_{nb}-\Gamma^a_{nc}\Gamma^c_{mb}\,,
\end{equation}
where
\begin{equation} \label{def:Gamma}
    \Gamma_{ic}^a=\frac{1}{2}g^{ab}(\partial_ig_{cb}+\partial_cg_{ib}-\partial_bg_{ic})\,.
\end{equation}
The curvature satisfies the  cyclic identity:
\begin{equation}
  \label{eq:cyclic:identity:components}
R^a{}_{bmn}+R^a{}_{mnb}+R^a{}_{nbm}=0\,,\quad\text{where }R^a{}_{bmn}=g^{ac}R_{cnbm}\,,
\end{equation}
together with the symmetries
$R^a_{\phantom{a}bnm}=-R^a_{\phantom{a}bmn}\,,\ R_{bamn}=-R_{abmn}\,,$
this implies the pair symmetry:
\begin{equation}
  \label{eq:curvature:symmetry:pair:components}
  R_{mnab}=R_{abmn}
\end{equation}
Finally, we have in local coordinates that
\begin{equation}
\label{eq:Ricci:local}
\Ric_{mn}=R^a_{\phantom{a}man}=\partial_a\Gamma^a_{nm}-\partial_n\Gamma^a_{am}+\Gamma_{ac}^a  \Gamma_{nm}^c-\Gamma_{nc}^a \Gamma_{am}^c\,.
\end{equation}
The analogous formulas are valid for the Riemann curvature of $\g$.

\subsection{System of evolution equations in parabolic gauge}
\label{sec:evoleqs}

%
We have already encountered the first variation equation \eqref{eq:firstvariation} for $g_s$,
which can also be expressed as an equation for the components of $g_s^{-1}$:
\begin{subequations}
\begin{align}\label{1stvar}
\partial_s {g}_{ij}=&\,2\Phi k_{ij},\\
\label{1stvar.inv}\partial_sg^{ij}=&-2\Phi k^{ij},
\end{align}
\end{subequations}
where $k^{ij}=g^{im}g^{jn}k_{mn}$.
Moreover we write the second variation equation \eqref{eq:parapara:RIC} as 
\begin{equation}
    \label{2ndvar}\partial_sk_{ij}=\, {\nabla}_i {\nabla}_j\Phi-\Phi( {\Ric}_{ij}+k_{ij}k_l{}^l-2{k_i}^lk_{jl})
+\Phi\Lambda g_{ij}\,.
\end{equation}
In fact, it is convenient to 
view the second fundamental form as a $(1,1)$ tensor and work with $k_i{}^j=g^{cj}k_{ic}$ instead of $k_{ij}$.
  Then the equations \eqref{1stvar}, \eqref{2ndvar} become:
\begin{subequations}
\begin{align}\label{1stvar.2}
\partial_s {g}_{ij}=&\,2\Phi g_{ja}k_i{}^a\\
\label{1stvar.inv.2}\partial_sg^{ij}=&-2\Phi g^{ia}k_a{}^j\\
\label{2ndvar.2}\partial_sk_i{}^j+\Phi k_l{}^lk_i{}^j=&\, \nabla_i \nabla^j\Phi-\Phi \Ric_i{}^j+\Phi\Lambda\delta_i{}^j
\end{align}
\end{subequations}
In Section~\ref{sec:ric} below, the Ricci curvature $\Ric_i{}^j$ is suitably expressed in terms of the Christoffel symbols $\Gamma_{ic}^a$.
We are led to consider, in addition to the \eqref{1stvar.2}  and \eqref{2ndvar.2}, the following evolution equation for $\Gamma_s$:
\begin{align}\label{Gamma.eq}
\partial_s\Gamma_{ic}^a=\nabla_i(\Phi k_c{}^a)
+\nabla_c(\Phi k_i{}^a)-g^{ab}g_{cj}\nabla_b(\Phi k_i{}^j)
\end{align}
This is an immediate consequence of \eqref{1stvar.2}, \eqref{1stvar.inv.2} and the formula \eqref{def:Gamma}:
\begin{equation}
\begin{split}
    \partial_s\Gamma^a_{ic}=&-2\Phi k_d{}^a\Gamma^d_{ic}+g^{ab}\bigl(\partial_i(\Phi k_{cb})+\partial_c(\Phi k_{ib})-\partial_b(\Phi k_{ic})\bigr)\\
    =&\,g^{ab}\bigl(\nabla_i(\Phi k_{cb})+\nabla_c(\Phi k_{ib})-\nabla_b(\Phi k_{ic})\bigr)
\end{split}
\end{equation}

\subsubsection{Reference metric and gauge}

To set up the stability problem, 
we will make the choice of a reference metric in Section~\ref{sec:background}:
\begin{align}\label{metric.ref}
\widetilde{\g} =-\widetilde{\Phi}^2\ud s^2+ \widetilde{g}_{ij}\ud x^i\ud x^j,
\end{align}
defined on the same differentiable manifold $\mathcal{M}$,
and denote by $\widetilde{\RIC}_{\mu\nu}=\RIC[\widetilde{\g}]_{\mu\nu}$ the components of the Ricci curvature of $\widetilde{\g}$.
Also, we denote by $\widetilde{\nabla},\widetilde{\Gamma}_{ij}^a,\widetilde{\Ric}_{ij}=\Ric[\tg]_{ij}$ the Levi-Civita connection, Christoffel symbols, and Ricci curvature associated to $\tg$. 

Define 
\begin{align}\label{diff}
\begin{split}
\widehat\Phi=\Phi-\widetilde\Phi,\qquad \widehat{g}_{ij}=g_{ij}-\widetilde{g}_{ij},\qquad \widehat{g}^{ij}=g^{ij}-\widetilde{g}^{ij},\\
\widehat\nabla=\nabla-\widetilde{\nabla},\qquad\widehat\Gamma_{ij}^a=\Gamma_{ij}^a-\widetilde{\Gamma}_{ij}^a,\qquad\widehat{k}_i{}^j=k_i{}^j-\widetilde{k}_i{}^j,
\end{split}
\end{align}
where  
\begin{equation}
    \widetilde{k}_i{}^j=\widetilde{g}^{ja}\widetilde{k}_{ia}=\frac{1}{2}\widetilde \Phi^{-1}\widetilde g^{ja}\partial_s\widetilde{g}_{ia}\,.
\end{equation}

The remaining gauge freedom is the choice of the time-function $s$.
This is the choice of a lapse function,
and in this work we set
\begin{align}
\Phi-\widetilde{\Phi}=k_l{}^l-\widetilde{k}_l{}^l\,,
\end{align}
or equivalently 
\begin{equation}
    \label{lapse}
    \widehat\Phi=\widehat{k}_l{}^l\,.
\end{equation}

\begin{remark}[Geometric interpretation of differences]
In the definition of the hatted quantities \eqref{diff}, a common identification of the coordinate vectorfields is made.
Strictly speaking, $\g$ and $\widetilde{\g}$ are metrics on different manifolds. However, they are expressed in the same chart.
If we view $(s,x^1=t,x^2,x^3)$ as standard coordinates for $\widetilde{\g}$,
then $-\widetilde{\Phi}^2$, and $\widetilde{g}_{ij}$ refer to the components of $\widetilde{\g}$ with respect to the coordinate basis $(\partial_s, \partial_{x^i})$ in the chart.
However, $-\Phi^2$, and $g_{ij}$ are the components of the metric $\g$
with respect to the basis $(\ud\psi\cdot \partial_s,\ud\psi\cdot\partial_{x^i})$,
where $\psi:(s,x)\mapsto \psi(s,x)\in\mathcal{R}$ is a diffeomorphism, a chart for $(\mathcal{R},\g)$. These are the coordinate vectorfields in the given chart and typically also denoted by $(\partial_s,\partial_{x^i})$. This allows us to view $g_{ij}$ and $\widetilde{g}_{ij}$ as functions of $(s,x^1,x^2,x^3)$ in the same chart.
In other words, the hatted quantities can be viewed as the components of $\psi^\ast \g-\widetilde{\g}$ is the standard coordinate chart of $\widetilde{\g}$.
In particular, with this identification both $g_{ij}$ and $\widetilde{g}_{ij}$ are $\Sigma_s$-tensors.
\end{remark}

\begin{remark}[Contractions with the metric]
The hats in \eqref{diff} do not commute with the metric. For example, $g_{aj}\widehat{k}_i{}^j=\widehat{k}_{ij}\neq k_{ij}-\widetilde{k}_{ij}$, since the raising/lowering of indices of the tilde variables is performed with respect to $\widetilde{g}$. To avoid confusion, we will not change the type of the tensors with hats, that is to say, we will always treat $\widehat{k}$ as a $(1,1)$ tensor, $\widehat{g}$ as a $(0,2)$ etc. 
\end{remark}

\begin{remark}[Elliptic gauge choices]
    A \emph{maximal gauge}, where each level set of the time function has zero mean curvature,
 \begin{equation}
     \label{maximal}
     \tr_g k=0\,,
 \end{equation}
    leads to an \emph{elliptic} equation for the lapse function.
    Here, the choice \eqref{lapse}
leads to a \emph{parabolic} equation for the lapse, which is well-posed in the future direction; see \eqref{Phi.hat.eq} below.
\end{remark}

\begin{remark}[Identity for differences of Christoffel symbols]
Recall that the difference of Christoffel symbols is a $(1,2)$ tensor:
\begin{align}
\label{Gamma.diff}
\widehat{\Gamma}_{ij}^a=\frac{1}{2}\widetilde{g}^{ab}(\nabla_i\widehat{g}_{jb}+\nabla_j\widehat{g}_{ib}-\nabla_b\widehat{g}_{ij})
\end{align}
Also, note that
\begin{align}\label{Gamma.g.rel}
\widetilde{g}_{ac}\widehat\Gamma_{ij}^c+\widetilde{g}_{jc}\widehat\Gamma_{ia}^c=\nabla_i\widehat g_{ja}\,.
\end{align}
\end{remark}

\subsubsection{First and second variation equations for differences}
\label{sec:vareq}

Given a reference metric, we first derive the first variation equations for the differences $\hg_{ij}$.

\begin{lemma}[First variation equations]\label{lem:hat.eq:1st}
The variables $\widehat{g}_{ij},\widehat{g}^{ij}$ satisfy the evolution equations:
\begin{subequations}
\begin{align}
\label{g.hat.eq}\partial_s\widehat{g}_{ij}-2H\widehat{g}_{ij}=&\,2\Phi g_{ja}\widehat{k}_i{}^a+
2\widehat{\Phi}g_{ja}\widetilde{k}_i{}^a
+2\widetilde{\Phi}(\widetilde{k}_i{}^a-H\widetilde{\Phi}^{-1}\delta_i{}^a)\widehat{g}_{ja}\,,\\  \label{g.inv.hat.eq}\partial_s\widehat{g}^{ij}+2H\widehat{g}^{ij}=&-2\Phi g^{ia}\widehat{k}_a{}^j
-2\widehat{\Phi}g^{ia}\widetilde{k}_a{}^j
-2\widetilde{\Phi}(\widetilde{k}_a{}^j-H\widetilde{\Phi}^{-1}\delta_a{}^j)\widehat{g}^{ia}\,,
\end{align}
\end{subequations}
where $H=\sqrt{\frac{\Lambda}{3}}$.
Moreover
\begin{align}
    \label{Gamma.hat.eq}\partial_s\widehat{\Gamma}_{ic}^a=&\,\Phi\nabla_i\widehat{k}_c{}^a
+\Phi\nabla_c\widehat{k}_i{}^a-g^{ab}g_{cj}\Phi\nabla_b\widehat{k}_i{}^j+\mathfrak{G}_{ic}^a\,,\\
\label{frak.G}\mathfrak{G}_{ic}^a=&\,\Phi\widehat\nabla_i\widetilde k_c{}^a
+\Phi\widehat\nabla_c\widetilde k_i{}^a-g^{ab}g_{cj}\Phi\widehat\nabla_b \widetilde k_i{}^j\\
\notag&+\widehat\Phi\widetilde\nabla_i\widetilde k_c{}^a
+\widehat\Phi\widetilde\nabla_c\widetilde k_i{}^a-(\widehat g^{ab}g_{cj}\Phi
+\widetilde g^{ab}\widehat g_{cj}\Phi
+\widetilde g^{ab}\widetilde g_{cj}\widehat\Phi)\widetilde\nabla_b \widetilde k_i{}^j\\
\notag&+k_c{}^a\nabla_i\widehat\Phi
+k_i{}^a\nabla_c\widehat\Phi-g^{ab}g_{cj}k_i{}^j\nabla_b\widehat\Phi\\
\notag&+\widehat k_c{}^a\widetilde\nabla_i\widetilde\Phi
+\widehat k_i{}^a\widetilde\nabla_c\widetilde\Phi-(\widehat g^{ab}g_{cj}k_i{}^j+\widetilde g^{ab}\widehat g_{cj}k_i{}^j+\widetilde g^{ab}\widetilde g_{cj}\widehat k_i{}^j)\widetilde\nabla_b\widetilde\Phi\,.
\end{align}
\end{lemma}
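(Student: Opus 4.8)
The plan is to derive \eqref{g.hat.eq}, \eqref{g.inv.hat.eq} by subtracting the reference-metric versions of \eqref{1stvar.2}, \eqref{1stvar.inv.2} from the exact equations and then rearranging terms so that the combination $\partial_s\widehat g_{ij}-2H\widehat g_{ij}$ appears on the left. Concretely, I would start from $\partial_s g_{ij}=2\Phi g_{ja}k_i{}^a$ and $\partial_s\widetilde g_{ij}=2\widetilde\Phi\widetilde g_{ja}\widetilde k_i{}^a$, subtract, and split $\Phi g_{ja}k_i{}^a-\widetilde\Phi\widetilde g_{ja}\widetilde k_i{}^a$ telescopically using $\Phi=\widetilde\Phi+\widehat\Phi$, $g_{ja}=\widetilde g_{ja}+\widehat g_{ja}$, $k_i{}^a=\widetilde k_i{}^a+\widehat k_i{}^a$. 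The main point is that the term $2\widetilde\Phi\widetilde g_{ja}\widetilde k_i{}^a$ that is purely "tilde times $\widehat g$" after the split, namely $2\widetilde\Phi\,\widetilde k_i{}^a\,\widehat g_{ja}$, should be separated into $2H\widehat g_{ij}$ plus the remainder $2\widetilde\Phi(\widetilde k_i{}^a-H\widetilde\Phi^{-1}\delta_i{}^a)\widehat g_{ja}$; the heuristic is that $\widetilde k_i{}^a\approx H\widetilde\Phi^{-1}\delta_i{}^a$ for the background, so that the remainder is genuinely small, while the $2H$ is moved to the left-hand side to exhibit the correct growth weight. After collecting the mixed terms $2\Phi g_{ja}\widehat k_i{}^a$ and $2\widehat\Phi g_{ja}\widetilde k_i{}^a$ (using $\Phi g_{ja}$ and $\widehat\Phi g_{ja}$ rather than their tilde counterparts to absorb the remaining cross terms exactly), one obtains precisely \eqref{g.hat.eq}. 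The equation for $\widehat g^{ij}$ follows identically from \eqref{1stvar.inv.2}, or alternatively by differentiating $g^{im}g_{mj}=\delta^i_j$; either route gives \eqref{g.inv.hat.eq} with the signs flipped.

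For \eqref{Gamma.hat.eq}, \eqref{frak.G} the plan is to subtract the reference version of \eqref{Gamma.eq} from \eqref{Gamma.eq} itself. Writing $\partial_s\Gamma_{ic}^a=\nabla_i(\Phi k_c{}^a)+\nabla_c(\Phi k_i{}^a)-g^{ab}g_{cj}\nabla_b(\Phi k_i{}^j)$ and the analogous identity for $\widetilde\Gamma_{ic}^a$ with $\widetilde\nabla,\widetilde\Phi,\widetilde k,\widetilde g$, I would subtract and then systematically telescope every factor: replace $\nabla$ by $\widetilde\nabla+\widehat\nabla$ (equivalently insert $\widehat\Gamma$ correction terms), $\Phi$ by $\widetilde\Phi+\widehat\Phi$, $k$ by $\widetilde k+\widehat k$, and $g^{ab}g_{cj}$ by its tilde value plus $\widehat g$ corrections. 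The terms in which $\nabla$ acts on the full $\Phi\widehat k$ with the full metric $g^{ab}g_{cj}$ are collected as the principal part $\Phi\nabla_i\widehat k_c{}^a+\Phi\nabla_c\widehat k_i{}^a-g^{ab}g_{cj}\Phi\nabla_b\widehat k_i{}^j$ displayed in \eqref{Gamma.hat.eq}; everything else is by definition lumped into $\mathfrak G_{ic}^a$. The four lines of \eqref{frak.G} then organize as: (line 1) the difference $\widehat\nabla$ acting on $\widetilde\Phi\widetilde k$, written with the full $\Phi$; (line 2) the $\widehat\Phi$ and $\widehat g$ corrections to the $\widetilde\nabla(\widetilde\Phi\widetilde k)$ term; (line 3) the terms where the derivative falls on $\widehat\Phi$ with $k$ undifferentiated and full metric; (line 4) the terms where $\widetilde\nabla$ falls on $\widetilde\Phi$ with $\widehat k$ or $\widehat g$ corrections. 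This is purely a matter of grouping — one must check that the sum of lines 1–4 is exactly the full difference minus the principal part, which is a finite bookkeeping check using the Leibniz rule and the telescoping identities.

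The calculation is conceptually routine — it is subtraction plus Leibniz plus relabeling — so the "hard part" is not mathematical depth but combinatorial care: there are many ways to split each product, and the specific split chosen in \eqref{frak.G} is made so that later (in the energy estimates of Section~\ref{sec:future}) each term in $\mathfrak G$ has a clean smallness or decay property. In particular one must be careful that the coefficient multiplying $\widetilde\nabla_b\widetilde k_i{}^j$ in line 2 is precisely $(\widehat g^{ab}g_{cj}\Phi+\widetilde g^{ab}\widehat g_{cj}\Phi+\widetilde g^{ab}\widetilde g_{cj}\widehat\Phi)$ — i.e. a telescoping of $g^{ab}g_{cj}\Phi-\widetilde g^{ab}\widetilde g_{cj}\widetilde\Phi$ — and similarly in line 4 for $\widetilde\nabla_b\widetilde\Phi$; getting these coefficients in the stated "staircase" form, rather than some equivalent but less convenient grouping, is the only subtlety. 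I would also double-check the sign and the index placement in the $\widehat\nabla_i\widetilde k_c{}^a$ terms of line 1, since $\widehat\nabla$ acting on a $(1,1)$-tensor produces $\widehat\Gamma$-contractions with both an upper and a lower slot, and it is precisely the cancellation of the "dangerous" $\widehat\Gamma\cdot\widetilde k$ pieces against the principal part that must be verified to land on exactly \eqref{Gamma.hat.eq}.
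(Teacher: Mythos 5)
Your proposal is correct and follows exactly the paper's route: the paper's proof simply subtracts the reference-metric versions of \eqref{1stvar.2}, \eqref{1stvar.inv.2}, \eqref{Gamma.eq} from the equations for $g_{ij},g^{ij},\Gamma_{ic}^a$ and declares the telescoping "straightforward," which is precisely the bookkeeping you carry out (and your splitting of $2\widetilde\Phi\widetilde k_i{}^a\widehat g_{ja}$ into $2H\widehat g_{ij}$ plus a remainder, and the staircase coefficients in lines 2 and 4 of \eqref{frak.G}, are the correct groupings). The only small caveat is your closing remark about a cancellation of $\widehat\Gamma\cdot\widetilde k$ pieces "against the principal part": no such cancellation is needed, since the $\Phi\widehat\nabla\widetilde k$ terms simply sit in $\mathfrak G$ (their internal cancellations are exploited only later, in Lemma~\ref{lem:frak.est}).
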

\begin{proof}
To derive the equations \eqref{g.hat.eq}, \eqref{g.inv.hat.eq}, \eqref{Gamma.hat.eq}, we use the fact that the corresponding variables of the reference metric satisfy the equations \eqref{1stvar.2}, \eqref{1stvar.inv.2}, \eqref{Gamma.eq}, and subtract them from the equations satisfied by $g_{ij},g^{ij},\Gamma_{ic}^a$. The computations are straightforward. 
\end{proof}

While the first variation formulas do not rely on the gauge condition, it is used in
the following derivation of the second variation equation for $\hk_i{}^j$  which shows that \eqref{lapse} is a parabolic gauge. 
The derivation uses a specific expression for the Ricci curvature, which we present first.

\begin{lemma}[Ricci curvature]
\label{sec:ric}
The Ricci curvature of $g$ can be expressed in the form
\begin{align}\label{Ricci.3}
\begin{split}
\Ric_i{}^j=&
\frac{1}{3}g^{cj}(\nabla_a\Gamma_{ci}^a-\nabla_c\Gamma_{ia}^a)
+\frac{2}{3}g^{ab}(\nabla_i\Gamma^j_{ab}-\nabla_a\Gamma^j_{bi})\\
&+\frac{1}{3}g^{cj}(\Gamma^a_{cb}\Gamma^b_{ai}-\Gamma^a_{ab}\Gamma_{ci}^b)
+\frac{2}{3}g^{ab}(\Gamma^c_{ib}\Gamma_{ac}^j-\Gamma^c_{ab}\Gamma^j_{ci})\,.
\end{split}
\end{align}
Here $\nabla \Gamma$ is interpreted tensorially, e.g. $$\nabla_a\Gamma_{ji}^a:=\partial_a\Gamma_{ji}^a+\Gamma^a_{ab}\Gamma^b_{ij}-\Gamma_{aj}^b\Gamma_{bi}^a-\Gamma_{ai}^b\Gamma_{jb}^a\,.$$  
\end{lemma}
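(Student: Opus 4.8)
The plan is to derive the identity \eqref{Ricci.3} by writing the full Ricci tensor $\Ric_{mn}$ in the standard form \eqref{eq:Ricci:local} and then symmetrizing the two ``divergence of Christoffel'' terms against each other using the cyclic/pair symmetries of the Riemann curvature recorded in \eqref{eq:cyclic:identity:components}--\eqref{eq:curvature:symmetry:pair:components}. Concretely, from \eqref{eq:Ricci:local} one has $\Ric_{mn}=\partial_a\Gamma^a_{nm}-\partial_n\Gamma^a_{am}+\text{(quadratic)}$, and the point is that $\Ric_{mn}$ can also be written in the manifestly symmetric ``divergence form'' $\Ric_{mn}=\nabla_a\Gamma^a_{nm}-\nabla_n\Gamma^a_{am}+(\text{quadratic in }\Gamma)$ after converting partials to covariant derivatives (the extra $\Gamma\cdot\Gamma$ terms just get absorbed into the quadratic piece). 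The coefficients $\tfrac13$ and $\tfrac23$ in \eqref{Ricci.3} arise because one takes a convex combination of two equivalent expressions for $\Ric_{mn}$: the ``first contraction'' $R^a{}_{man}$ and a ``traced-first-index'' contraction obtained via the pair symmetry $R_{mnab}=R_{abmn}$, together with the cyclic identity which lets one trade $\nabla_i\Gamma^j_{ab}$ against $\nabla_a\Gamma^j_{bi}$.

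First I would expand $\nabla_a\Gamma^a_{ci}$, $\nabla_c\Gamma^a_{ia}$, $\nabla_i\Gamma^j_{ab}$, $\nabla_a\Gamma^j_{bi}$ into partial derivatives plus $\Gamma\cdot\Gamma$ terms — bearing in mind that $\Gamma^a_{ic}$ is \emph{not} a tensor, so $\nabla$ acting on it must be interpreted formally as in \eqref{def:Gamma}-based bookkeeping (this is really just a convenient packaging of partial derivatives; cf.\ the analogous manipulation already performed in \eqref{Gamma.eq}). Then I would check that the linear (second-derivative-of-$g$) parts of the right-hand side of \eqref{Ricci.3} reduce, after using the Gauss formula $R^a{}_{bmn}=\partial_m\Gamma^a_{nb}-\partial_n\Gamma^a_{mb}+\dots$ and the symmetries of $R$, to $g^{cj}R^a{}_{cai}=g^{cj}\Ric_{ci}=\Ric_i{}^j$. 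Concretely: the term $\tfrac13 g^{cj}(\nabla_a\Gamma^a_{ci}-\nabla_c\Gamma^a_{ia})$ is $\tfrac13 g^{cj}R^a{}_{cai}$ up to quadratic terms, while $\tfrac23 g^{ab}(\nabla_i\Gamma^j_{ab}-\nabla_a\Gamma^j_{bi})$, after raising $j$ and using the cyclic identity to rewrite $\nabla_i\Gamma^j{}_{ab}-\nabla_a\Gamma^j{}_{bi}$ as $-R^j{}_{bai}$-type combinations, contributes the remaining $\tfrac23 g^{cj}R^a{}_{cai}$. The two pieces sum to $g^{cj}\Ric_{ci}$. Finally I would verify that the quadratic-in-$\Gamma$ remainder matches the last line of \eqref{Ricci.3}; this is the bulk of the computation but is purely algebraic.

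The main obstacle — really the only subtle point — is getting the combinatorics of the cyclic and pair symmetries exactly right so that the weights $\tfrac13,\tfrac23$ emerge and the quadratic terms land in precisely the displayed form, rather than in some equivalent-but-different grouping. One has to be careful that ``$\nabla$'' on the non-tensorial $\Gamma$ is used consistently (the covariant-derivative notation is a shorthand here) and that the index raisings with $g^{cj}$ versus $g^{ab}$ are tracked, since $\widehat\Gamma$ will later be treated as a genuine tensor while $\Gamma$ itself is not. Once the identity is pinned down at the linear level, the quadratic terms are forced, so I would organize the write-up as: (i) recall $\Ric_{mn}=g^{ac}R_{amcn}$ and expand via \eqref{eq:Ricci:local}; (ii) form the two contractions and take the convex combination; (iii) apply \eqref{eq:cyclic:identity:components}--\eqref{eq:curvature:symmetry:pair:components} to collapse the linear part to $\Ric_i{}^j$; (iv) collect the quadratic terms and read off the last line of \eqref{Ricci.3}.
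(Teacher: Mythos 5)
Your proposal is correct and follows essentially the same route as the paper: the identity \eqref{Ricci.3} is obtained by writing $\Ric_i{}^j$ in two equivalent ways — once as $g^{cj}(\nabla_a\Gamma^a_{ci}-\nabla_c\Gamma^a_{ia}+\Gamma\cdot\Gamma)$ from \eqref{eq:Ricci:local}, and once as $g^{ab}(\nabla_i\Gamma^j_{ab}-\nabla_a\Gamma^j_{bi}+\Gamma\cdot\Gamma)$ via the pair symmetry \eqref{eq:curvature:symmetry:pair:components}, with $\nabla\Gamma$ interpreted tensorially exactly as you indicate — and then taking the convex combination with weights $\tfrac13$ and $\tfrac23$. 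The only cosmetic difference is that the paper derives the two forms directly and adds them, rather than verifying the linear and quadratic pieces separately as you outline.
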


\begin{proof}
 Starting from the expression \eqref{eq:Ricci:local}
we can expand  the  expression for the Ricci curvature:
\begin{align}\label{Ricci}
\Ric_i{}^j=&\,g^{cj}(\partial_a\Gamma^a_{ci}-\partial_c\Gamma^a_{ia}+\Gamma^a_{ab}\Gamma_{ci}^b-\Gamma^a_{cb}\Gamma^b_{ai})\\
\notag=&\,g^{cj}(\nabla_a\Gamma_{ci}^a-\nabla_c\Gamma_{ia}^a+\Gamma^a_{cb}\Gamma^b_{ai}-\Gamma^a_{ab}\Gamma_{ci}^b)
\end{align}
Alternatively, we write
using the pair symmetry of the curvature tensor:
\begin{align}\label{Ricci.2}     
\notag \Ric_i{}^j=&R^{b\phantom{ib}j}_{\phantom{b}ib}=g^{ba}R_{aib}^{\phantom{aib}j}=g^{ab}R_{b\phantom{j}ai}^{\phantom{b}j}\\
    =&g^{ab}\bigl(\partial_i \Gamma_{ab}^j -\partial_a \Gamma_{ib}^j+\Gamma_{ic}^j \Gamma^c_{ab}-\Gamma_{ac}^j \Gamma^c_{ib}\bigr)\\
\notag=&\,g^{ab}(\nabla_i\Gamma^j_{ab}-\nabla_a\Gamma^j_{bi}+\Gamma^c_{ib}\Gamma_{ac}^j-\Gamma^c_{ab}\Gamma^j_{ci})
\end{align}
Combining \eqref{Ricci} and \eqref{Ricci.2} gives \eqref{Ricci.3}.
\end{proof}

The motivation for these manipulations will be discussed in Remark~\ref{rem:symm}. 
In fact, we have already seen in \eqref{Gamma.hat.eq} that the evolution equation for $\widehat{\Gamma}$ contains derivatives of $\widehat{k}$.
Now returning to \eqref{2ndvar.2},
we will see that the evolution equation for $\widehat{k}$ can be expressed 
in terms of derivatives of $\widehat{\Gamma}$,
in such a way that in the derivation of the energy estimate, which couples these evolution equations, several terms cancel. This uses the formulas for the Ricci curvature in Lemma~\ref{sec:ric}.

\begin{lemma}[Second variation equations]
\label{lem:hat.2nd}
The variables $\widehat{\Phi},\widehat{k}_i{}^j$ satisfy the evolution equations:
  \begin{align}
\label{k.hat.eq}\partial_s\widehat{k}_i{}^j+3H\widehat{k}_i{}^j=&
\,g^{cj}\nabla_i\nabla_c\widehat\Phi-\widetilde{\Phi}(\widetilde{k}_l{}^l-3H\widetilde{\Phi}^{-1})\widehat{k}_i{}^j+\mathfrak{K}_i{}^j+(\widetilde{\mathfrak{I}}_k)_i{}^j\\
\notag&+\frac{1}{3}\Phi g^{cj}(\nabla_c\widehat\Gamma_{ia}^a-\nabla_a\widehat\Gamma_{ci}^a)
+\frac{2}{3}\Phi g^{ab}(\nabla_a\widehat\Gamma^j_{bi}-\nabla_i\widehat\Gamma^j_{ab})\,,
  \end{align}  
  and
  \begin{equation}
\label{Phi.hat.eq}\partial_s\widehat{\Phi}-\Delta_g\widehat{\Phi}+2H\widehat{\Phi}=\mathfrak{F}+\widetilde{\mathfrak{I}}_\Phi\,,
  \end{equation}
where $H=\sqrt{\frac{\Lambda}{3}}$, and
\begin{align}
\label{frak.K}\mathfrak{K}_i{}^j=&-\widehat{\Phi}k_l{}^lk_i{}^j-\widetilde{\Phi}\widehat{\Phi}k_i{}^j+\Lambda\delta_i{}^j\widehat{\Phi}
+\widehat{g}^{cj}\partial_i\partial_c\widetilde{\Phi}
-\widehat{g}^{cj}\widetilde{\Gamma}_{ic}^a\partial_a\widetilde{\Phi}-g^{cj}\widehat{\Gamma}_{ic}^a\partial_a\widetilde{\Phi}\\
\notag&+\frac{1}{3}\Phi g^{cj}(\widehat{\nabla}_c\widetilde\Gamma_{ia}^a-\widehat\nabla_a\widetilde\Gamma_{ci}^a)
+\frac{2}{3}\Phi g^{ab}(\widehat\nabla_a\widetilde\Gamma^j_{bi}-\widehat\nabla_i\widetilde\Gamma^j_{ab})\\
\notag&+\frac{1}{3}(\widehat{\Phi} g^{cj}+\widetilde{\Phi}\widehat{g}^{cj})(\widetilde{\nabla}_c\widetilde\Gamma_{ia}^a-\widetilde\nabla_a\widetilde\Gamma_{ci}^a)
+\frac{2}{3}(\widehat{\Phi} g^{ab}+\widetilde{\Phi}\widehat{g}^{ab})(\widetilde\nabla_a\widetilde\Gamma^j_{bi}-\widetilde\nabla_i\widetilde\Gamma^j_{ab})\\
\notag&+\frac{1}{3}\Phi g^{cj}(\widehat\Gamma^a_{ab}\Gamma_{ci}^b-\widehat\Gamma^a_{cb}\Gamma^b_{ai})+\frac{1}{3}\Phi g^{cj}(\widetilde\Gamma^a_{ab}\widehat\Gamma_{ci}^b-\widetilde\Gamma^a_{cb}\widehat\Gamma^b_{ai})\\
\notag&+\frac{1}{3}(\widehat\Phi g^{cj}+\widetilde\Phi\widehat{g}^{cj})(\widetilde\Gamma^a_{ab}\widetilde\Gamma_{ci}^b-\widetilde\Gamma^a_{cb}\widetilde\Gamma^b_{ai})
+\frac{2}{3}\Phi g^{ab}(\widehat\Gamma^c_{ab}\Gamma^j_{ci}-\widehat\Gamma^c_{ib}\Gamma_{ac}^j)\\
\notag&+\frac{2}{3}\Phi g^{ab}(\widetilde\Gamma^c_{ab}\widehat\Gamma^j_{ci}-\widetilde\Gamma^c_{ib}\widehat\Gamma_{ac}^j)
+\frac{2}{3}(\widehat\Phi g^{ab}+\widetilde\Phi\widehat{g}^{ab})(\widetilde\Gamma^c_{ab}\widetilde\Gamma^j_{ci}-\widetilde\Gamma^c_{ib}\widetilde\Gamma_{ac}^j)\,,
\\
\label{frak.F}\mathfrak{F}=&-2\widehat\Phi \widehat{k}_i{}^j\widetilde{k}_j{}^i
-\Phi \widehat{k}_i{}^j\widehat{k}_j{}^i
-\widehat{\Phi}(\widetilde{k}_i{}^j\widetilde{k}_j{}^i-\Lambda)-\widetilde\Phi \widehat{k}_i{}^j(\widetilde{k}_j{}^i
-H\widetilde{\Phi}^{-1}\delta_j{}^i)\\
\notag&
+\widehat{g}^{ab}\partial_a\partial_b\widetilde{\Phi}
-g^{ab}\widehat{\Gamma}_{ab}^c\partial_c\widetilde{\Phi}
-\widehat{g}^{ab}\widetilde{\Gamma}_{ab}^c\partial_c\widetilde{\Phi}\,.
\end{align}
The terms
$(\widetilde{\mathfrak{I}}_k)_i{}^j,\widetilde{\mathfrak{I}}_\Phi$
only contain variables of the reference metric $\widetilde{\g}$ and are equal to:
\begin{equation}
    \label{eq:J:k.Phi}
    (\widetilde{\mathfrak{I}}_k)_i{}^j=-\widetilde{\Phi}\Bigl(\widetilde{\RIC}_i{}^j-\Lambda \delta_i{}^j\Bigr),\qquad
\widetilde{\mathfrak{I}}_\Phi=\tPhi\Bigl(\widetilde{\RIC}_{00}+\Lambda\Bigr)\,.
\end{equation}
\end{lemma}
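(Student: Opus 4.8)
The plan is to obtain both evolution equations by subtracting, from the exact relations satisfied by $(g,k,\Phi)$, the corresponding relations for the reference quantities $(\widetilde g,\widetilde k,\widetilde\Phi)$, and then to reorganise the difference into a highest-order (principal) part, the friction terms $3H\widehat{k}$ and $2H\widehat\Phi$, the quadratic error terms $\mathfrak{K}$, $\mathfrak{F}$, and the purely-reference inhomogeneities $(\widetilde{\mathfrak{I}}_k)$, $\widetilde{\mathfrak{I}}_\Phi$. The one structural point to keep in mind is that $\widetilde\g$ is \emph{not} a vacuum metric: wherever \eqref{EVE} was used to pass from \eqref{eq:parapara:RIC} to \eqref{2ndvar} for $g$, the analogous step for $\widetilde\g$ retains the spacetime Ricci tensor $\widetilde{\RIC}$ of the reference, and this is precisely the origin of $(\widetilde{\mathfrak{I}}_k)_i{}^j=-\widetilde\Phi(\widetilde{\RIC}_i{}^j-\Lambda\delta_i{}^j)$ and $\widetilde{\mathfrak{I}}_\Phi=\widetilde\Phi(\widetilde{\RIC}_{00}+\Lambda)$, as recorded in \eqref{eq:J:k.Phi}.

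For \eqref{k.hat.eq} I would start from the mixed second variation equation \eqref{2ndvar.2} for $g$ together with its $\widetilde\g$-analogue $\partial_s\widetilde k_i{}^j+\widetilde\Phi\,\widetilde k_l{}^l\,\widetilde k_i{}^j=\widetilde\nabla_i\widetilde\nabla^j\widetilde\Phi-\widetilde\Phi\,\widetilde\Ric_i{}^j+\widetilde\Phi\,\widetilde{\RIC}_i{}^j$ (the $(1,1)$ form of \eqref{eq:parapara:RIC} for $\widetilde\g$, which uses only the second variation formula \eqref{eq:secondvariationalformula} and the contracted Gauss equation \eqref{eq:gauss:contracted}), and subtract. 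In the curvature term I would insert the symmetrised expression \eqref{Ricci.3} for $\Ric_i{}^j$ (and likewise for $\widetilde\Ric_i{}^j$), decompose $\Phi=\widetilde\Phi+\widehat\Phi$, $g^{ij}=\widetilde g^{ij}+\widehat g^{ij}$, $\Gamma^a_{ij}=\widetilde\Gamma^a_{ij}+\widehat\Gamma^a_{ij}$, $\nabla=\widetilde\nabla+(\widehat\Gamma\,\cdot\,)$, and extract the top-order piece: by the very design of \eqref{Ricci.3} (the point of Remark~\ref{rem:symm}) it is exactly $\Phi$ times the four terms $\tfrac13 g^{cj}(\nabla_c\widehat\Gamma_{ia}^a-\nabla_a\widehat\Gamma_{ci}^a)+\tfrac23 g^{ab}(\nabla_a\widehat\Gamma^j_{bi}-\nabla_i\widehat\Gamma^j_{ab})$ displayed in \eqref{k.hat.eq}, whose coefficients $\tfrac13,\tfrac23$ — combined with the $\widehat\Gamma$-evolution \eqref{Gamma.hat.eq} from Lemma~\ref{lem:hat.eq:1st} — are what make $(\widehat k,\widehat\Gamma)$ a symmetric-hyperbolic first-order system. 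Everything else from the curvature difference (quadratic terms $\widehat\Gamma\!\cdot\!\Gamma$, $\widehat\Gamma\!\cdot\!\widetilde\Gamma$, all terms carrying $\widehat g^{ij}$ or $\widehat\Phi$, and the $\widehat\nabla\widetilde\Gamma$ terms) together with the remainder $\widehat g^{cj}\partial_i\partial_c\widetilde\Phi-\widehat g^{cj}\widetilde\Gamma_{ic}^a\partial_a\widetilde\Phi-g^{cj}\widehat\Gamma_{ic}^a\partial_a\widetilde\Phi$ coming from $\nabla_i\nabla^j\Phi-\widetilde\nabla_i\widetilde\nabla^j\widetilde\Phi$ is collected into $\mathfrak{K}_i{}^j$, giving \eqref{frak.K}. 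For the zeroth-order part I would use the gauge \eqref{lapse} to write $k_l{}^l=\widetilde k_l{}^l+\widehat\Phi$ and expand $\Phi k_l{}^l k_i{}^j-\widetilde\Phi\,\widetilde k_l{}^l\,\widetilde k_i{}^j$: the piece $-\widetilde\Phi\,\widetilde k_l{}^l\,\widehat k_i{}^j=-3H\widehat k_i{}^j-\widetilde\Phi(\widetilde k_l{}^l-3H\widetilde\Phi^{-1})\widehat k_i{}^j$ produces the friction term $+3H\widehat k_i{}^j$ on the left and the explicit linear coefficient on the right, while the quadratic remainder $-\widehat\Phi k_l{}^l k_i{}^j-\widetilde\Phi\widehat\Phi k_i{}^j$ and $\widehat\Phi\Lambda\delta_i{}^j$ (from $\Phi\Lambda\delta_i{}^j-\widetilde\Phi\Lambda\delta_i{}^j$) join $\mathfrak{K}_i{}^j$.

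For the parabolic equation \eqref{Phi.hat.eq} the decisive observation is that, by the gauge \eqref{lapse}, $\widehat\Phi=\widehat k_l{}^l=\tr_g k-\tr_{\widetilde g}\widetilde k$, so $\partial_s\widehat\Phi=\partial_s\tr_g k-\partial_s\tr_{\widetilde g}\widetilde k$. Tracing \eqref{2ndvar.2} gives $\partial_s\tr_g k+\Phi(\tr_g k)^2=\Delta_g\Phi-\Phi R+3\Phi\Lambda$ with $R=\tr_g\Ric$; substituting the Hamiltonian constraint \eqref{Ham.const} in the form $R=2\Lambda+|k|^2-(\tr_g k)^2$, the two $\Phi(\tr_g k)^2$ terms cancel and leave the clean relation $\partial_s\tr_g k=\Delta_g\Phi+\Phi\Lambda-\Phi|k|^2$. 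This cancellation — which turns the lapse equation into a genuine forward heat equation rather than leaving behind a transport term — is the whole point of the gauge \eqref{lapse}; the constraint is essential here. The same computation for $\widetilde\g$, using instead the full contracted Gauss equation \eqref{eq:perpperp} for $\widetilde\g$ (which keeps $\widetilde{\RIC}_{00}$), yields $\partial_s\tr_{\widetilde g}\widetilde k=\Delta_{\widetilde g}\widetilde\Phi-\widetilde\Phi|\widetilde k|^2-\widetilde\Phi\,\widetilde{\RIC}_{00}$. Subtracting, writing $\Delta_g\Phi-\Delta_{\widetilde g}\widetilde\Phi=\Delta_g\widehat\Phi+\widehat g^{ab}\partial_a\partial_b\widetilde\Phi-g^{ab}\widehat\Gamma_{ab}^c\partial_c\widetilde\Phi-\widehat g^{ab}\widetilde\Gamma_{ab}^c\partial_c\widetilde\Phi$, splitting $\Phi\Lambda+\widetilde\Phi\,\widetilde{\RIC}_{00}=\widehat\Phi\Lambda+\widetilde{\mathfrak{I}}_\Phi$, and expanding $-\Phi|k|^2+\widetilde\Phi|\widetilde k|^2$ in hatted quantities, one extracts the friction term $2H\widehat\Phi$ from the linear-in-$\widehat k$ contribution $-2\widetilde\Phi\widehat k_i{}^j\widetilde k_j{}^i$ via the gauge identity $\widehat k_i{}^j H\widetilde\Phi^{-1}\delta_j{}^i=H\widehat k_i{}^i=H\widehat\Phi$, and the remainder reassembles into $\mathfrak{F}$ as in \eqref{frak.F}.

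The main obstacle is not conceptual but organisational: tracking the large number of quadratic and reference-metric error terms and verifying that, once \eqref{Ricci.3}, the gauge \eqref{lapse} and the Hamiltonian constraint \eqref{Ham.const} are applied at exactly the right moments, everything regroups into precisely the stated $\mathfrak{K}$, $\mathfrak{F}$, $(\widetilde{\mathfrak{I}}_k)$ and $\widetilde{\mathfrak{I}}_\Phi$, with the friction coefficients $3H$ and $2H$ emerging explicitly and the whole system closing in the hatted variables. (The momentum constraint \eqref{mom.const} is not needed for this lemma; it enters only later, in the energy/hyperbolicity analysis.)
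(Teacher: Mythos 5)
Your proposal is correct and follows essentially the same route as the paper's own proof: subtract the reference analogue of \eqref{2ndvar.2} (retaining $\widetilde{\RIC}$, which produces $(\widetilde{\mathfrak{I}}_k)_i{}^j$ and $\widetilde{\mathfrak{J}}_\Phi$), use the symmetrised expression \eqref{Ricci.3} for the Ricci difference to isolate the $\tfrac13,\tfrac23$ principal $\widehat\Gamma$-terms, and for the lapse equation trace \eqref{2ndvar.2}, eliminate $R$ via the Hamiltonian constraint, do the same for the reference keeping $\widetilde{\RIC}_{00}$, and invoke the gauge $\widehat\Phi=\widehat k_l{}^l$ to extract the friction terms $3H\widehat k$ and $2H\widehat\Phi$. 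All the key identities you cite (the decomposition of $\Phi k_l{}^l k_i{}^j-\widetilde\Phi\widetilde k_l{}^l\widetilde k_i{}^j$, of $\nabla_i\nabla^j\Phi-\widetilde\nabla_i\widetilde\nabla^j\widetilde\Phi$, of $\Delta_g\Phi-\Delta_{\widetilde g}\widetilde\Phi$, and the reference trace equation with $-\widetilde\Phi\widetilde{\RIC}_{00}$) agree with those in the paper's proof.
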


\begin{proof}
    For \eqref{k.hat.eq}, recall that $\widetilde{\g}$ is not an exact solution of the Einstein equations.
The starting point here is \eqref{eq:parapara:RIC},
which we can write as:
\begin{equation}\label{k.ref.eq}
\begin{split}
\partial_s\widetilde k_i{}^j+\widetilde\Phi \widetilde k_l{}^l\widetilde k_i{}^j=&\, \widetilde{g}^{cj}\widetilde\nabla_i \widetilde\nabla_c\widetilde\Phi-\widetilde\Phi  \Ric[\tg]_i{}^j
+\widetilde{\Phi}\RIC[\widetilde{\g}]_i{}^j\\
=&\widetilde{g}^{cj}\widetilde\nabla_i \widetilde\nabla_c\widetilde\Phi-\widetilde\Phi \widetilde \Ric_i{}^j + \widetilde{\Phi} \Lambda \delta_i{}^j -(\widetilde{\mathfrak{I}}_k)_i{}^j
\end{split}
\end{equation}
Subtracting it from \eqref{2ndvar.2} and using \eqref{lapse} then results in 
\begin{multline}
\partial_s\widehat{k}_i{}^j+3H\widehat{k}_i{}^j+\hPhi k_l{}^l k_i{}^j+\tPhi\hPhi k_i{}^j+\tPhi\bigl(\tk_l{}^l-3H\tPhi^{-1}\bigr)\hk_i{}^j =\\
    =\nabla_i\nabla^j\hPhi+\hg^{jc}\tnabla_{i}\partial_c\tPhi+g^{jc}\hGamma_{ic}^b\partial_b\tPhi
    -\Phi \Ric_i{}^j+\tPhi\widetilde{\Ric}_i{}^j+\hPhi\Lambda\delta_i{}^j+(\widetilde{\mathfrak{I}}_k)_i{}^j\,.
\end{multline}
This already accounts for all terms in the first line of \eqref{k.hat.eq} together with the first line in \eqref{frak.K}.
It remains to compute the difference of the Ricci curvatures.
In view of \eqref{Ricci.3} we have:
\begin{equation}
    \begin{split}
        \Phi \Ric_i{}^j-\tPhi\widetilde{\Ric}_i{}^j=&\phantom{+}\frac{1}{3}\Phi g^{cj}(\nabla_a\hGamma_{ci}^a-\nabla_c\hGamma_{ia}^a)
+\frac{2}{3} \Phi g^{ab}(\nabla_i\hGamma^j_{ab}-\nabla_a\hGamma^j_{bi})\\
&+\frac{1}{3}\Phi g^{cj}(\hnabla_a\tGamma_{ci}^a-\hnabla_c\tGamma_{ia}^a)
+\frac{2}{3} \Phi g^{ab}(\hnabla_i\tGamma^j_{ab}-\hnabla_a\tGamma^j_{bi})\\
&+\frac{1}{3} \Phi g^{cj}(\hGamma^a_{cb}\Gamma^b_{ai}-\hGamma^a_{ab}\Gamma_{ci}^b)
+\frac{2}{3} \Phi g^{ab}(\hGamma^c_{ib}\Gamma_{ac}^j-\hGamma^c_{ab}\Gamma^j_{ci})\\
&+\frac{1}{3} \Phi g^{cj}(\tGamma^a_{cb}\hGamma^b_{ai}-\tGamma^a_{ab}\hGamma_{ci}^b)
+\frac{2}{3} \Phi g^{ab}(\tGamma^c_{ib}\hGamma_{ac}^j-\tGamma^c_{ab}\hGamma^j_{ci})\\
&+\frac{1}{3}\bigl(\hPhi \tg^{cj}+\Phi\hg^{cj}\bigr)(\tnabla_a\tGamma_{ci}^a-\tnabla_c\tGamma_{ia}^a)
+\frac{2}{3} \bigl(\hPhi \tg^{ab}+\Phi \hg^{ab}\bigr)(\tnabla_i\tGamma^j_{ab}-\tnabla_a\tGamma^j_{bi})\\
&+\frac{1}{3} \bigl(\hPhi \tg^{cj}+\Phi \hg^{cj}\bigr)(\tGamma^a_{cb}\tGamma^b_{ai}-\tGamma^a_{ab}\tGamma_{ci}^b)
+\frac{2}{3} \bigl(\hPhi \tg^{ab}+\Phi \hg^{ab}\bigr)(\tGamma^c_{ib}\tGamma_{ac}^j-\tGamma^c_{ab}\tGamma^j_{ci})
    \end{split}
\end{equation}

For the equation \eqref{Phi.hat.eq}, we consider \eqref{eq:trk} for a solution to \eqref{eq:EVE}:
\begin{equation}\label{trk.eq}
\partial_sk_l{}^l=\Delta_g\Phi+\Lambda\Phi-\Phi |k|^2
\end{equation}
The corresponding equation for the mean curvature of the reference metric is found by contracting \eqref{k.ref.eq}: 
\begin{equation}    \partial_s\tk_l{}^l+\tPhi\bigl(\tk_l{}^l\bigr)^2=\widetilde{\Delta}\tPhi-\tPhi\widetilde{R}+3\Lambda\tPhi-(\widetilde{\mathfrak{J}}_k)_l{}^l
\end{equation}
and using the twice contracted Gauss equation:
\begin{equation}
    \widetilde{R}+(\widetilde{k}_l{}^l)^2-\tk_i{}^j\tk_j{}^i=2\widetilde{\RIC}_{00}+\widetilde{\R}=2\Lambda+\tPhi^{-1}\widetilde{\mathfrak{J}}_\Phi-\tPhi^{-1}(\widetilde{\mathfrak{J}}_k)_j{}^j
\end{equation}
%
%
Thus we have:
\begin{equation}
    \label{trk.ref.eq}
    \partial_s\widetilde k_l{}^l=\widetilde\Delta\widetilde\Phi-\widetilde\Phi \widetilde k_i{}^j\widetilde k_j{}^i+ \Lambda \tPhi-\widetilde{\mathfrak{J}}_\Phi
\end{equation}
 Subtracting \eqref{trk.ref.eq} from \eqref{trk.eq} gives
 \begin{equation}
     \partial_s \hk_l{}^l=\Delta\Phi-\widetilde{\Delta}\tPhi-\Phi k_{i}{}^j k_j{}^i+\tPhi \tk_{i}{}^j \tk_j{}^i+\Lambda\hPhi+\widetilde{\mathfrak{J}}_\Phi\,.
     \end{equation}
 Since
 \begin{equation}
     \begin{split}
         \Phi k_{i}{}^j k_j{}^i-\tPhi \tk_{i}{}^j \tk_j{}^i =& \,\hPhi k_{i}{}^j k_j{}^i+\tPhi \hk_{i}{}^j k_j{}^i+\tPhi \tk_{i}{}^j \hk_j{}^i\\
         =&\,\hPhi \hk_{i}{}^j \hk_j{}^i+2\hPhi \hk_{i}{}^j \tk_j{}^i+\hPhi \tk_{i}{}^j \tk_j{}^i+\tPhi \hk_{i}{}^j \hk_j{}^i+2\tPhi \hk_{i}{}^j \tk_j{}^i\\
         =&\,\Lambda\hPhi+2H\hk_j{}^j\\
         &+\Phi \hk_{i}{}^j \hk_j{}^i+2\hPhi \hk_{i}{}^j \tk_j{}^i+\hPhi \Bigl(\tk_{i}{}^j \tk_j{}^i-\Lambda\Bigr)+2\tPhi \hk_{i}{}^j \Bigl(\tk_j{}^i-H\tPhi^{-1}\delta_j{}^i\Bigr)\,,
     \end{split}
 \end{equation}
 and also 
 \begin{equation}
 \begin{split}
     \Delta\Phi-\widetilde{\Delta}\tPhi=&g^{ij}\nabla_i\partial_j\Phi+\hg^{ij}\tnabla_i\partial_j\tPhi-g^{ij}\tnabla_i\partial_j\tPhi\\
=&\Delta\hPhi-g^{ij}\hGamma_{ij}^k\partial_k\tPhi+\hg^{ij}\tnabla_i\partial_j\tPhi\,,
 \end{split}
      \end{equation}
      we obtain  that
      \begin{equation}
          \partial_s\hk_l{}^l+2H\hk_l{}^l=\Delta\hPhi+\mathfrak{F}+\widetilde{\mathfrak{J}}_\Phi
      \end{equation}
      which finally implies \eqref{Phi.hat.eq},
      by virtue of the gauge condition \eqref{lapse}. 
\end{proof}

Finally we turn to the constraint equations for differences.
\begin{lemma}[Constraint equations]
\label{lemma:constraint:h}
\begin{align}\label{mom.const.hat}
\nabla_j\widehat k_i{}^j=&\,\partial_i\widehat{\Phi}-\widehat{\Gamma}_{jc}^j(\widetilde{k}_i{}^c-H\delta_i{}^c)+\widehat{\Gamma}_{ji}^c(\widetilde{k}_c{}^j-H\delta_c{}^j)
+\widetilde{\mathfrak{C}}_i\\
\label{mom.const.hat.ii}
     g^{im}\nabla_m\hk_i{}^j=&\,g^{jc}\partial_c \hPhi+\hg^{jc}\tnabla_c(\tk_l{}^l-3H)-\hg^{im}\tnabla_m(\tk_i{}^j-H\delta_i{}^j)\\
     \notag&-g^{im}\widehat{\Gamma}_{mc}^j(\widetilde{k}_i{}^c-H\delta_i{}^c)+g^{im}\widehat{\Gamma}_{mi}^c(\widetilde{k}_c{}^j-H\delta_c{}^j)+\widetilde{\mathfrak{C}}^j
\end{align}
    The terms $\widetilde{\mathfrak{C}}_i,\widetilde{\mathfrak{C}}^j$ only contain variables of the reference metric $\widetilde{\g}$ and are equal to:
    \begin{equation}\label{frak.C}
    \widetilde{\mathfrak{C}}_i=-\widetilde{\RIC}_{0i}\,,\qquad \widetilde{\mathfrak{C}}^j=\widetilde{g}^{ij}\widetilde{\mathfrak{C}}_i\,.
    \end{equation}
\end{lemma}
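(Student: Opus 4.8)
The plan is to obtain both identities of Lemma~\ref{lemma:constraint:h} directly from the momentum constraint \eqref{mom.const}, cast as a $(1,1)$-tensor equation, by subtracting the corresponding \emph{inhomogeneous} identity for the reference metric $\widetilde{\g}$ and then invoking the gauge condition \eqref{lapse}. The key observation is that \eqref{eq:perppara} is a purely geometric identity, valid for any foliated Lorentzian metric before \eqref{EVE} is imposed: raising an index on $k$, it reads $\nabla_j k_i{}^j=\partial_i k_l{}^l$ for the solution $\g$ (using $\RIC_{0j}[\g]=0$), whereas for $\widetilde{\g}$, which is \emph{not} a solution, the same identity yields $\widetilde{\nabla}_j\tk_i{}^j=\partial_i\tk_l{}^l+\widetilde{\RIC}_{0i}$, and this is the origin of the inhomogeneous term $\widetilde{\mathfrak{C}}_i=-\widetilde{\RIC}_{0i}$.

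For \eqref{mom.const.hat} I would substitute $k_i{}^j=\hk_i{}^j+\tk_i{}^j$ and $k_l{}^l=\hk_l{}^l+\tk_l{}^l$ into $\nabla_j k_i{}^j=\partial_i k_l{}^l$. Since $\hGamma$ is a $(1,2)$ tensor, $\nabla_j\tk_i{}^j=\widetilde{\nabla}_j\tk_i{}^j+\hGamma_{jb}^j\tk_i{}^b-\hGamma_{ji}^b\tk_b{}^j$; replacing $\widetilde{\nabla}_j\tk_i{}^j$ by $\partial_i\tk_l{}^l+\widetilde{\RIC}_{0i}$ and $\partial_i\hk_l{}^l$ by $\partial_i\hPhi$ via \eqref{lapse}, the $\partial_i\tk_l{}^l$ contributions cancel and what remains is exactly \eqref{mom.const.hat}. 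The $H\delta_i{}^c$, $H\delta_c{}^j$ insertions in \eqref{mom.const.hat} contribute nothing to the identity --- their Christoffel contractions combine to $H\hGamma_{ji}^j-H\hGamma_{ji}^j=0$ --- and are kept only so that the small quantities $\tk_i{}^c-H\delta_i{}^c$ appear explicitly, which is convenient for the later weighted estimates.

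For \eqref{mom.const.hat.ii} I would start instead from the \emph{other} contraction of the momentum constraint, which for the symmetric tensor $k$ reads $g^{im}\nabla_m k_i{}^j=g^{jc}\partial_c k_l{}^l$; the point is that $\hk$ is \emph{not} symmetric with respect to $g$ (cf.~the remark following \eqref{diff}), so this contraction genuinely differs from the previous one after renormalisation and must be recorded separately. Proceeding as above, but now additionally splitting $g^{im}=\tg^{im}+\hg^{im}$ and $g^{jc}=\tg^{jc}+\hg^{jc}$, the term $\tg^{im}\widetilde{\nabla}_m\tk_i{}^j$ is recognised as the reference momentum constraint (contributing $\widetilde{\mathfrak{C}}^j=\tg^{ij}\widetilde{\mathfrak{C}}_i$), the reference trace terms $\tg^{jc}\partial_c\tk_l{}^l$ cancel, and the surviving $\hg^{jc}\partial_c\tk_l{}^l$, $\hg^{im}\widetilde{\nabla}_m\tk_i{}^j$ together with the $g^{im}\hGamma$-contractions assemble into the right-hand side of \eqref{mom.const.hat.ii}, once more with harmless $H\delta$ insertions whose Christoffel parts cancel pairwise.

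The remaining work is routine index bookkeeping. The single point that needs care --- and the most likely source of error --- is tracking the two connections $\nabla$ and $\widetilde{\nabla}$ simultaneously with the fact that raising and lowering indices with $g$ versus $\widetilde{g}$ do not commute with the hat operation; this is precisely why the two contractions of the momentum constraint do not collapse into one renormalised identity, and why both \eqref{mom.const.hat} and \eqref{mom.const.hat.ii} are needed.
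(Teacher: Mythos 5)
Your proposal is correct and follows essentially the same route as the paper: subtract the reference metric's (inhomogeneous) Codazzi identity from the solution's momentum constraint in each of the two contractions, split $k=\hk+\tk$, $\nabla=\tnabla+\hnabla$, $g^{-1}=\tg^{-1}+\hg^{-1}$, and invoke the gauge $\hk_l{}^l=\hPhi$, with $\widetilde{\mathfrak{C}}_i=-\widetilde{\RIC}_{0i}$ arising exactly as you describe. Your explicit check that the $H\delta$ insertions cancel pairwise, and your remark on why the two contractions must be recorded separately (since $\hk$ is treated as a $(1,1)$ tensor and index raising with $g$ versus $\tg$ does not commute with the hat), are consistent with, and slightly more detailed than, the paper's own argument.
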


\begin{proof}
From the Codazzi equations \eqref{eq:perppara} we know that
\begin{equation}
 \tnabla_j\tk_i{}^j-\tnabla_i\tk_l{}^l=\RIC[\widetilde{\g}]_{0i}\,,
\end{equation}
which we subtract from the momentum constraint \eqref{mom.const.hat} to get
\begin{equation}
    \nabla_j\hk_i{}^j-\nabla_i\hk_l{}^l=-\hnabla_j\tk_i{}^j+\widetilde{\mathfrak{C}}_i=-\widehat{\Gamma}_{jc}^j(\widetilde{k}_i{}^c-H\delta_i{}^c)+\widehat{\Gamma}_{ji}^c(\widetilde{k}_c{}^j-H\delta_c{}^j)+\widetilde{\mathfrak{C}}_i\,.
\end{equation}
In view of the gauge condition \eqref{lapse} this is \eqref{mom.const.hat}.
Alternatively, we can also write \eqref{eq:perppara} as
\begin{equation}
    \tg^{im}\tnabla_m \tk_i{}^j-\tg^{jc}\tnabla_c \tk_l{}^l=\tg^{jc}\widetilde{\RIC}_{0c}\,,
\end{equation}
to obtain after subtracting that
\begin{equation}
    g^{im}\nabla_m\hk_i{}^j-g^{jc}\nabla_c \hk_l{}^l=-g^{im}\hnabla_m\tk_i{}^j-\hg^{im}\tnabla_m\tk_i{}^j+\hg^{jc}\tnabla_c\tk_l{}^l+\widetilde{\mathfrak{C}}^j\,.
\end{equation}
In view of the gauge condition, this gives \eqref{mom.const.hat.ii} after expanding the first term on the RHS.
\end{proof}

\begin{remark}\label{rem:symm}
The equations \eqref{k.hat.eq}, \eqref{Gamma.hat.eq} are not symmetric hyperbolic in $\widehat{k}_i{}^j,\widehat\Gamma_{ic}^a$, due to the presence of the terms $\frac{1}{3}\Phi g^{cj}\nabla_c\widehat\Gamma_{ia}^a$, $-\frac{2}{3}\Phi g^{ab}\nabla_i\widehat\Gamma_{ab}^j$ in the RHS of \eqref{k.hat.eq}. However, the latter terms can be treated in the energy estimates by integrating by parts and using the constraint equations \eqref{mom.const.hat}, see Section \ref{subsec:en.id.discussion}.
\end{remark}

The system of equations derived in this section are summarized in Appendix~\ref{app.eqs}.

\section{The background reference metric}
\label{sec:background}

In Section~\ref{sec:ADM}, we have considered a general spacetime $(\mathcal{M},\g)$ foliated by the level sets of a time function $s$. We have also introduced coordinates $(s,x)$. In these coordinates, we will now consider a class of reference metrics of the form \eqref{metric.ref} which are constructed from the family of Kerr de Sitter metrics.
\begin{definition}
\label{def:Os}
We write
\begin{equation*}
    f=\mathcal{O}(\eta e^{m Hs})\,,
\end{equation*}
for some $m\in\mathbb{Z}$ and $\eta>0$, if $f(s,x)$ is a smooth function depending only on the Kerr de Sitter metrics considered, 
with the property that 
\begin{equation}\label{big.O.est}
 |\partial_s^i\partial^\alpha_xf|\leq C_{i,\alpha} \eta e^{mHs},
\end{equation}
for any $i,\alpha$ and $(s,x)\in \mathcal{M}$.
\end{definition}

\subsection{Kerr de Sitter metric}
\label{sec:kerr}

In Boyer Lindquist coordinates $(t,r,\theta,\phi)$, the 
Kerr de Sitter metric reads 
\begin{align}\label{Kerr.metric}
\g_{\mathcal{K}_{a,m}}=\frac{\rho^2}{\Delta_r}\ud r^2
+\frac{\rho^2}{\Delta_\theta} \ud\theta^2
+\sin^2\theta\frac{\Delta_\theta}{\rho^2}\bigl(a\ud t-\frac{r^2+a^2}{\Delta_0}\ud\phi\bigr)^2-\frac{\Delta_r}{\rho^2}\bigl(\ud t-\frac{a\sin^2\theta}{\Delta_0}\ud\phi\bigr)^2,
\end{align}
where we adopt the convention \cite{carter}:
\begin{subequations}
\begin{gather}\label{Kerr.conv}
\rho^2=r^2+a^2\cos^2\theta,\qquad\Delta_r=(r^2+a^2)(1-\frac{\Lambda}{3}r^2)-2mr,\\
\Delta_\theta=1+\frac{\Lambda}{3}a^2\cos^2\theta,\qquad
\Delta_0=1+\frac{\Lambda}{3}a^2.
\end{gather}
\end{subequations}
The cosmological region is the domain  $\Delta_r<0$, where $r$ is a time function. With the following reparametrization of the time function $r$, to
\begin{align}\label{s.ref}
s=H^{-1}\ln r\quad\Leftrightarrow\quad r=e^{Hs}\,,\qquad H=\sqrt{\frac{\Lambda}{3}}\,,
\end{align}
the Kerr de Sitter metric \eqref{Kerr.metric} then takes the form
\begin{align}\label{Kerr.metric.s}
\g_{\mathcal{K}_{a,m}}=-(\Phi_{\mathcal{K}_{a,m}})^2\ud s^2
+(g_{\mathcal{K}_{a,m}})_{ij}\ud x^i \ud x^j,\qquad x^1=t,\,x^2=\theta,\,x^3=\phi\,,
\end{align}
where 
\begin{align}\label{Kerr.metric.lead}
\notag \Phi_{\mathcal{K}_{a,m}}=&\,1+\mathcal{O}(e^{-2Hs})\,,\\ 
(g_{\mathcal{K}_{a,m}})_{11}=&\,H^2 e^{2Hs}+\mathcal{O}(1)\,,\qquad
(g_{\mathcal{K}_{a,m}})_{13}=\,-\frac{a\sin^2\theta}{\Delta_0} H^2 e^{2Hs}+\mathcal{O}(1)\,,\\
\notag(g_{\mathcal{K}_{a,m}})_{22}=&\,\frac{e^{2Hs}}{\Delta_\theta}+\mathcal{O}(1)\,,\qquad
(g_{\mathcal{K}_{a,m}})_{33}= \,\frac{\sin^2\theta\Delta_\theta+H^2 a^2\sin^4\theta}{\Delta_0^2} e^{2Hs}+\mathcal{O}(1)\,.
\end{align}

 Recall here Definition~\ref{def:Os} for our use of the notation $\Os{m}$, for $m\in\mathbb{Z}$.

This transformation is motivated by the standard form of FLRW spacetimes; see \cite{fournodavlos:FLRW}.

\subsection{Partition of Kerr de Sitter}

The Kerr de Sitter metrics $\g_{\K}$ can be viewed as a 2-parameter family of metrics on the fixed differentiable structure of the cosmological region of a Schwarzschild de Sitter spacetime $\g_{m,0}$. In other words, we can view $(\g_{\K})_{ss}$, and $(\g_{\K})_{ij}$ as a family of metric components in a \emph{fixed} chart 
\begin{equation}
    \label{eq:chart}
    (s,t,\phi,\theta)\in (s_\mathcal{C},\infty)\times \mathbb{R}\times (0,2\pi)\times(0,\pi)\,.
\end{equation}

Let $(a_1,m_1)$, $(a_2,m_2)$ be two (possibly different) pairs of Kerr de Sitter parameters, which are sufficiently close to each other
\begin{align}\label{ai.mi}
|a_1-a_2|+|m_1-m_2|<\widetilde{\varepsilon}\,,
\end{align}
and choose $m$ any value between $m_1$ and $m_2$.

We define a metric $\widetilde{\g}$ in the above chart as a smooth transition from $\g_{\mathcal{K}_{a_1,m_1}}$ to $\g_{\mathcal{K}_{a_2,m_2}}$:
\begin{equation}
\begin{split}
    \widetilde{\g}=&\,(1-\chi) \g_{\mathcal{K}_{a_1,m_1}}+\chi \g_{\mathcal{K}_{a_2,m_2}}\\
=&\,\g_{\mathcal{K}_{a_1,m_1}}+\chi(\g_{\mathcal{K}_{a_2,m_2}}-\g_{\mathcal{K}_{a_1,m_1}}) \label{metric.ref.def}
\end{split}
\end{equation}
where $\chi:\mathcal{R}\to[0,1]$ is a smooth function represented by 
\begin{align}\label{chi}
\chi(s,t,\theta,\phi)=\left\{\begin{array}{cc}
    0, & t\leq-1 \\
    1, & t\ge1
\end{array}\right.,\qquad |\partial_\alpha \chi|\leq C_\alpha,
\end{align}
for any coordinate derivative and multi-index $\alpha$. 
We call the reference metric \eqref{metric.ref.def} a \emph{partition of Kerr de Sitter}. 
In the fixed coordinate chart we have
\begin{subequations}
\label{eq:metric:ref:close}
\begin{equation}
\widetilde{\g} =-\widetilde{\Phi}^2\ud s^2+ \widetilde{g}_{ij}\ud x^i\ud x^j 
\end{equation}
where
\begin{align} 
    \tPhi^2=&\,(1-\chi)\Phi_{\mathcal{K}_{a_1,m_1}}^2+\chi \Phi_{\mathcal{K}_{a_2,m_2}}^2
    \\ \tg_{ij}=&\,(1-\chi)(g_{\mathcal{K}_{a_1,m_1}})_{ij}+\chi (g_{\mathcal{K}_{a_2,m_2}})_{ij}\,.
\end{align}
\end{subequations}

\begin{remark}\label{rem:coord}
In view of the use of polar coordinates on the cylinder $(x^1,x^2,x^3)=(t,\theta,\phi)$ the spacetime manifold is not covered by a single chart \eqref{eq:chart}.
However, an atlas can be constructed from two of these charts. 
For the pointwise bounds below it is understood that the chart is restricted away from the points where the metric degenerates in these coordinates. 
\end{remark}

The main  properties are recorded in the following Lemma. 
\begin{lemma}
\label{lem:ref.properties}
    The components of the reference metric \eqref{metric.ref.def} satisfy
        \begin{gather}\label{ref.metric.prop}
         \widetilde{\Phi}-1=\mathcal{O}(e^{-2Hs}),\qquad 
         \widetilde{g}_{ij}=\mathcal{O}(e^{2Hs}),
         \qquad         \widetilde{g}^{ij}=\mathcal{O}(e^{-2Hs}), \\
         \widetilde{\Gamma}_{ic}^a=\mathcal{O}(1),\qquad\widetilde{k}_i{}^j-H\delta_i{}^j=\mathcal{O}(e^{-2Hs}),
        \end{gather}
    where the $\mathcal{O}(e^{mHs})$ terms satisfy \eqref{big.O.est}.
\end{lemma}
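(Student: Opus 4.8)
The plan is to establish Lemma~\ref{lem:ref.properties} by a direct inspection of the explicit Kerr de Sitter expansions \eqref{Kerr.metric.lead} together with the partition structure \eqref{metric.ref.def}, \eqref{eq:metric:ref:close}. First I would observe that the bounds are really statements about each of the two Kerr de Sitter building blocks $\g_{\mathcal{K}_{a_i,m_i}}$ individually: since $\chi$ and all its coordinate derivatives are uniformly bounded by \eqref{chi}, and since $\mathcal{O}(\eta e^{mHs})$ in the sense of Definition~\ref{def:Os} is closed under multiplication by such functions and under taking coordinate derivatives (using the Leibniz rule), it suffices to prove \eqref{ref.metric.prop} with $\widetilde{\g}$ replaced by a single $\g_{\mathcal{K}_{a,m}}$; the convex combination and the cutoff then only produce sums and products of admissible $\mathcal{O}$-terms. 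The estimates $\widetilde{\Phi}-1 = \mathcal{O}(e^{-2Hs})$ and $\widetilde{g}_{ij} = \mathcal{O}(e^{2Hs})$ are then immediate by reading off \eqref{Kerr.metric.lead}, once one checks that the $\mathcal{O}(\cdot)$ remainders there are themselves of the required form, i.e.\ that differentiating the subleading terms in $r = e^{Hs}$ does not worsen the exponential weight — this follows because the Kerr de Sitter metric components are rational (hence analytic for large $r$) in $r$, so each $\partial_s$ converts a term of size $r^{-p}$ into one of size $r^{-p}$ again, and similarly each $\partial_\theta$, $\partial_\phi$ acts only on the bounded angular functions.

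Next I would address the Christoffel bound $\widetilde{\Gamma}^a_{ic} = \mathcal{O}(1)$. The point is that $g_{ij} \sim e^{2Hs}$ with $\partial_s g_{ij} \sim e^{2Hs}$ and $\partial_{x^A} g_{ij} \sim e^{2Hs}$ (angular derivatives of the leading coefficient), while $g^{ij} \sim e^{-2Hs}$; plugging into \eqref{def:Gamma}, the factor $g^{ab}$ contributes $e^{-2Hs}$ and the derivative of $g$ contributes $e^{2Hs}$, so the product is $\mathcal{O}(1)$. One has to be a little careful that the spatial Christoffel symbols here are those of the $3$-metric $g_s = g_{ij}\,\ud x^i \ud x^j$ on $\Sigma_s$ with $x^1 = t$ included as a coordinate, and that derivatives $\partial_t$ of the metric coefficients are likewise $\mathcal{O}(e^{2Hs})$ (indeed the $t$-dependence of $\g_{\mathcal{K}_{a,m}}$ enters only through bounded trigonometric functions, as is visible in \eqref{Kerr.metric}); after that it is just the scaling count above. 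For the partition, $\widetilde{\Gamma}$ is not the convex combination of the two $\Gamma$'s, but $\widetilde{\Gamma} = \Gamma(\widetilde{g})$ computed from $\widetilde{g}_{ij}$ via \eqref{def:Gamma}; since $\widetilde{g}_{ij}$ and $\widetilde{g}^{ij}$ still obey the same $\mathcal{O}(e^{\pm 2Hs})$ bounds (the convex combination of two positive-definite metrics comparable to $e^{2Hs}\mathring{g}$ is again comparable to $e^{2Hs}\mathring{g}$, uniformly for $\widetilde{\varepsilon}$ small), and the $\chi$-derivatives only add $\mathcal{O}(e^{2Hs})$ contributions to $\partial \widetilde{g}$, the same count gives $\widetilde{\Gamma} = \mathcal{O}(1)$.

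Finally, for $\widetilde{k}_i{}^j - H\delta_i{}^j = \mathcal{O}(e^{-2Hs})$ I would use the definition $\widetilde{k}_i{}^j = \tfrac{1}{2}\widetilde{\Phi}^{-1} \widetilde{g}^{ja}\partial_s \widetilde{g}_{ia}$. For a single Kerr de Sitter block, $\partial_s (g_{\mathcal{K}})_{ia} = 2H (g_{\mathcal{K}})^{\infty}_{ia} e^{2Hs} + \mathcal{O}(1)$ where $(g_{\mathcal{K}})^{\infty}_{ia}$ is the leading coefficient in \eqref{Kerr.metric.lead}, while $(g_{\mathcal{K}})^{ja} = e^{-2Hs}\big((g_{\mathcal{K}})_{\infty}^{ja} + \mathcal{O}(e^{-2Hs})\big)$ with $(g_{\mathcal{K}})_{\infty}^{ja}$ the inverse of the leading coefficient; multiplying, the leading terms combine to $2H \delta_i{}^j$, and together with $\widetilde{\Phi}^{-1} = 1 + \mathcal{O}(e^{-2Hs})$ and the factor $\tfrac12$ this yields $H\delta_i{}^j + \mathcal{O}(e^{-2Hs})$. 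For the partition one writes $\partial_s \widetilde{g}_{ij} = (1-\chi)\partial_s (g_{\mathcal{K}_{a_1,m_1}})_{ij} + \chi \partial_s (g_{\mathcal{K}_{a_2,m_2}})_{ij}$ (the $\partial_s \chi$ term vanishes since $\chi$ does not depend on $s$), so the leading $e^{2Hs}$ part of $\partial_s \widetilde g_{ij}$ is exactly $2H\widetilde g_{ij}^{(\infty)}$ with $\widetilde g^{(\infty)}$ the same convex combination of leading coefficients; contracting with $\widetilde g^{ja}$, whose leading part inverts $\widetilde g^{(\infty)}$, again gives $H\delta_i{}^j$ up to $\mathcal{O}(e^{-2Hs})$. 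The main obstacle — really the only place demanding care rather than bookkeeping — is verifying that all the $\mathcal{O}(\cdot)$ remainders are stable under arbitrarily many coordinate derivatives with no loss in the exponential weight, i.e.\ that Definition~\ref{def:Os} is genuinely satisfied; this rests on the analyticity (rationality in $r$) of the Kerr de Sitter metric components in the cosmological region and the smoothness and uniform bounds on $\chi$, and I would spell this closure property out once and then apply it throughout.
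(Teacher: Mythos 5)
Your proposal is correct and follows essentially the same route as the paper: read the asymptotics of $\widetilde{\Phi}$ and $\widetilde{g}_{ij}$ off the explicit Kerr de Sitter expansion \eqref{Kerr.metric.lead} combined with the cutoff construction \eqref{eq:metric:ref:close}, obtain $\widetilde{\Gamma}^a_{ic}=\mathcal{O}(1)$ by the scaling count $\widetilde{g}^{-1}\sim e^{-2Hs}$, $\partial\widetilde{g}\sim e^{2Hs}$ in \eqref{def:Gamma}, and derive $\widetilde{k}_i{}^j=H\delta_i{}^j+\mathcal{O}(e^{-2Hs})$ from $\partial_s\widetilde{g}_{ij}=2H\widetilde{g}_{ij}+\mathcal{O}(1)$ contracted with $\widetilde{g}^{-1}$ and $\widetilde{\Phi}^{-1}=1+\mathcal{O}(e^{-2Hs})$. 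Your added discussion of the closure of the $\mathcal{O}$-classes of Definition~\ref{def:Os} under derivatives and multiplication by $\chi$ is a fair elaboration of what the paper leaves implicit.
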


\begin{proof}
The statement for $\widetilde{\Phi}$ follows from \eqref{eq:metric:ref:close}, since $\Phi_{\mathcal{K}_{a_1,m_1}},\Phi_{\mathcal{K}_{a_2,m_2}}$ have the same property, see \eqref{Kerr.metric.lead}. In fact, from \eqref{Kerr.metric.lead} and \eqref{eq:metric:ref:close}, in the region where each coordinate chart is regular (cf. Remark \ref{rem:coord}), we also have 
\begin{align*}
C^{-1} e^{2Hs}\leq \widetilde{g}_{ij}\leq C e^{2Hs},\qquad |\partial^\alpha\widetilde{g}_{ij}|\leq C_\alpha e^{2Hs}\,, 
\end{align*}
for all $(s,x)\in\mathcal{M}$. This implies both the statements for $\widetilde{g}_{ij}$ and $\widetilde{g}^{ij}$. 
Then the statement for $\widetilde{\Gamma}_{ic}^a=\frac{1}{2}\widetilde{g}^{al}(\partial_i\widetilde{g}_{cl}+\partial_c\widetilde{g}_{il}-\partial_l\widetilde{g}_{ic})$ becomes obvious.

For $\widetilde{k}_i{}^j$ we compute
\begin{equation*}
    2\Phi_{\K} (k_{\K})_{ij}=\partial_s(g_{\K})_{ij}= 2H (g_{\K})_{ij}+\mathcal{O}(1)\,,
\end{equation*}
and hence
\begin{equation}
\begin{split}
       (k_{\K})_i{}^j=&(g_{\K}^{-1})^{jm}(k_{\K})_{im}=\Phi_{\K}^{-1} H \delta_i{}^j+\mathcal{O}(e^{-2Hs})    \\
       =&H \delta_i{}^j+\mathcal{O}(e^{-2Hs}) \,.\label{eq:k:kerr}
\end{split}
\end{equation}
Therefore again
\begin{equation*}
    2\tPhi\tk_{ij}=\partial_s\tg_{ij}= 2H \tg_{ij}+\mathcal{O}(1)\,,
\end{equation*}
and 
\begin{equation*}
       \tk_i{}^j=\tg^{jm}\tk_{im}=\tPhi^{-1} H \delta_i{}^j+\mathcal{O}(e^{-2Hs})   
       =H \delta_i{}^j+\mathcal{O}(e^{-2Hs})\,.
\end{equation*}
This completes the proof of the lemma.   
\end{proof}

\subsection{Approximate solution}

A partition of Kerr de Sitter is not a solution to the Einstein equations (in the transition region $|t|<1$). However, all that is needed for our purposes is that it is an \emph{approximate solution}. It turns out that the following is sufficient.
\begin{proposition}\label{prop:approx.sol} 
The partition metric \eqref{eq:metric:ref:close} satisfies, on one hand:
\begin{align}\label{approx.sol}
\left\{\begin{array}{ccc}\big|\partial_x^\alpha\big(\widetilde\RIC_{00}+\Lambda\big)\big|&\leq\quad C_\alpha e^{-2Hs}\\
\big|\partial_x^\alpha \widetilde\RIC_0{}^j\big|&\leq\quad C_\alpha e^{-4Hs}\\
\big|\partial_x^\alpha\big(\widetilde\RIC_i{}^j+\Lambda\delta_i{}^j\big)\big|&\leq\quad C_\alpha e^{-2Hs}
\end{array}\right.,\qquad\text{for $|t|<1$},
\end{align}
and on the other:
\begin{equation}
    \big|\partial_x^\alpha\big(\widetilde\RIC_{00}+\Lambda\big)\big| + \big|\partial_x^\alpha \widetilde\RIC_0{}^j\big| + \big|\partial_x^\alpha\big(\widetilde\RIC_i{}^j+\Lambda\delta_i{}^j\big)\big| \leq C_\alpha \widetilde{\varepsilon}\,, \qquad\text{for $|t|<1$}. 
\end{equation}
Moreover, 
\begin{align}\label{approx.sol.supp}
\widetilde\RIC_{00}+\Lambda=\widetilde\RIC_0{}^j=\widetilde\RIC_i{}^j+\Lambda\delta_i{}^j=0,\qquad\text{for $|t|\ge1$}.
\end{align}
\end{proposition}
\begin{proof}
\eqref{approx.sol.supp} is immediate from the definition of the partition metric \eqref{metric.ref.def}, since for $|t|\ge1$, $\widetilde{\g}$ coincides with one of the two Kerr de Sitter metrics $\g_{\mathcal{K}_{a_1,m_1}},\g_{\mathcal{K}_{a_2,m_2}}$. 

The bounds \eqref{approx.sol} are proven in two steps. For the decay statement, the specific expression of the Kerr de Sitter metric is actually not used. Instead, we show that any metric which satisfies Lemma \ref{lem:ref.properties} has this property. On the other hand, the $\widetilde{\varepsilon}$ smallness of the relevant terms follows from the assumption \eqref{ai.mi} and the precise formula \eqref{Kerr.metric}.

{\it Step 1. Decay.} The normal vectorfield $N=\tPhi^{-1}\partial_s$ satisfies
$\Nabla_N N=\tPhi^{-1}\nabla\tPhi$ hence
\begin{gather*}
    \Nabla_{\partial_s}\partial_s = \tPhi^{-1}\partial_s\tPhi \partial_s+\tPhi{\nabla}\tPhi\,,\\
\bm{\tGamma}_{ss}^s=\tPhi^{-1}\partial_s\tPhi=\Os{-2}\qquad \bm{\tGamma}_{ss}^i=\tPhi \tg^{ij}\partial_j\tPhi=\Os{-4}\,.
\end{gather*}
Moreover, we know $\Nabla_{\partial_i}N=k_i{}^j\partial_j$,
\begin{gather*}   \Nabla_{\partial_i}\partial_s=\tPhi^{-1}\partial_i\tPhi\partial_s+\tPhi \tk_i{}^j\partial_j\,,\\
\bm{\tGamma}_{is}^s=\tPhi^{-1}\partial_i\tPhi=\Os{-2}\qquad\bm{\tGamma}_{is}^j=\tPhi \tk_i{}^j=H\delta_i{}^j+\Os{-2}\,.
\end{gather*}
Since $\Nabla_{\partial_i}\partial_j=\bm{\tGamma}_{ij}^s \partial_s+\tGamma_{ij}^k \partial_k$ where $\tGamma_{ij}^k$ are the connection coefficients of $\tg$, we compute
\begin{equation*}
    \bm{\tGamma}_{ij}^s=-\tPhi^{-1}\widetilde{\g}(\Nabla_{\partial_i}\partial_j,\tPhi^{-1}\partial_s)=\tPhi^{-1}\tk_{ij}=H G_{ij}(x) e^{2Hs}+\mathcal{O}(1)\,,
\end{equation*}
    and
    \begin{align*}
        \tGamma_{ij}^{k}=\frac{1}{2}G^{kl}\bigl(\partial_i G_{jl}+\partial_j G_{il}-\partial_lG_{ij}\bigr)+\Os{-2}\,.
    \end{align*}
    
Let us now compute the components of the Ricci curvature. We start with $\widetilde{\RIC}_{00}$.
In view of the expression \eqref{eq:Ricci:local}, we compute
\begin{align*}
    \partial_\alpha \bm{\tGamma}_{ss}^\alpha=&\Os{-2}\,,\qquad
    \partial_s\bm{\tGamma}^\alpha_{\alpha s}=\Os{-2}\,,\\
    \bm{\tGamma}_{\alpha\gamma}^\alpha  \bm{\tGamma}_{ss}^\gamma=&\bm{\tGamma}_{\alpha s}^\alpha  \bm{\tGamma}_{ss}^s+\bm{\tGamma}_{\alpha i}^\alpha  \bm{\tGamma}_{ss}^i=\Os{-2}\,,\\
    \bm{\tGamma}_{s\gamma}^\alpha \bm{\tGamma}_{\alpha s}^\gamma=&\bm{\tGamma}_{s j}^i \bm{\tGamma}_{i s}^j+\Os{-2}=3H^2+\Os{-2}\,,
\end{align*}
and therefore
\begin{equation}
    \widetilde{\R}_{00}=\widetilde{\Phi}^{-2}\widetilde{\R}_{ss}=-\Lambda+\Os{-2},
\end{equation}
since $\widetilde{\Phi}^{-2}=1+\Os{-2}$.

Now compute $\widetilde{\RIC}_{0j}=\widetilde{\Phi}^{-1}\widetilde{\RIC}_{sj}$:
\begin{align*}
    \partial_\alpha\bm{\tGamma}^\alpha_{j s}&=\Os{-2}\,,\qquad
    \partial_j\bm{\tGamma}^\alpha_{\alpha s}=\Os{-2}\,,\\
    \bm{\tGamma}_{\alpha\gamma}^\alpha  \bm{\tGamma}_{j s}^\gamma&=\bm{\tGamma}_{i k}^i  \bm{\tGamma}_{j s}^k+\Os{-2}=H\tGamma_{ij}^i+\Os{-2}\,,\\
    \bm{\tGamma}_{j\gamma}^\alpha \bm{\tGamma}_{\alpha s}^\gamma &= \bm{\tGamma}_{ji}^s\bm{\tGamma}^i_{ss}+\bm{\tGamma}_{j k}^i \bm{\tGamma}_{i s}^k+\Os{-2}=H\tGamma^i_{ji}+\Os{-2}\,,
\end{align*}
and so by symmetry we have a cancellation
\begin{equation}
    \widetilde{\RIC}_{0j}=\Os{-2}\quad\Rightarrow\quad \widetilde{\RIC}_0{}^j=\widetilde{g}^{aj}\widetilde{\RIC}_{0a}=\mathcal{O}(e^{-4Hs})\,.
\end{equation}

So it remains to compute $\widetilde{\RIC}_{ij}$:
\begin{align*}
    \partial_\alpha\bm{\tGamma}^\alpha_{ji}&=2H^2 G_{ij} e^{2Hs}+\mathcal{O}(1)\,,\qquad
    \partial_j{\bm\tGamma}^\alpha_{\alpha i}=\mathcal{O}(1)\,,\\
    \bm{\tGamma}_{\alpha\gamma}^\alpha  {\bm\tGamma}_{ji}^\gamma&=\bm{\tGamma}_{k s}^k  {\bm\tGamma}_{ji}^s+\mathcal{O}(1)=3H^2 G_{ij} e^{2Hs}+\mathcal{O}(1)\,,\\
    \bm{\tGamma}_{j \gamma}^\alpha {\bm\tGamma}_{\alpha i}^\gamma&=\bm{\tGamma}_{j k}^s {\bm\tGamma}_{s i}^k+\bm{\tGamma}_{j s}^k {\bm\tGamma}_{k i}^s+\mathcal{O}(1)=2 H^2G_{ij} e^{2Hs}+\mathcal{O}(1)\,.
\end{align*}
Therefore,
\begin{equation}
    \widetilde{\RIC}_{ij}=3 H^2 G_{ij}e^{2Hs}+\mathcal{O}(1)=\Lambda\tg_{ij}+\mathcal{O}(1)\,,
\end{equation}
and thus in view of \eqref{ref.metric.prop},
\begin{equation}
\widetilde{\RIC}_{i}{}^j-\Lambda \delta_{i}{}^j=\tg^{jk}\Bigl(\widetilde{\RIC}_{ik}-\Lambda\tg_{ik}\Bigr)=\Os{-2}\,.
\end{equation}

{\it Step 2. Smallness.} Denote by $\kGamma_{\alpha\beta}^\gamma,{}^\mathcal{K}\RIC_{\mu\nu}$ the Christoffel symbol and Ricci curvature of the Kerr de Sitter metric with either $(a_1,m_1)$ or $(a_2,m_2)$ parameters. It follows directly from \eqref{Kerr.metric} and \eqref{ai.mi}
that
\begin{align*}
\bm{\tGamma}_{ss}^s=&\,\tPhi^{-1}\partial_s\tPhi={\kGamma}_{ss}^s+\Oe\\ \bm{\tGamma}_{ss}^i=&\,\tPhi \tg^{ij}\partial_j\tPhi=\kGamma_{ss}^i+e^{-2Hs}\Oe\\
\bm{\tGamma}_{is}^s=&\,\tPhi^{-1}\partial_i\tPhi=\kGamma_{is}^s+\Oe \\\bm{\tGamma}_{is}^j=&\,\tPhi \tk_i{}^j=\Phi_{\K}(k_{\K})_i{}^j+\Oe=\kGamma_{is}^j+\Oe\\
\bm{\tGamma}_{ij}^s=&\,\tPhi^{-1}\tk_{ij}=\kGamma_{ij}^s+e^{2Hs}\Oe\\
        \bm{\tGamma}_{ij}^{k}=&\,\kGamma_{ij}^k+\Oe\,.
    \end{align*}
Therefore,
\begin{multline*}
    \widetilde{\RIC}_{ss}=\partial_\alpha \bm{\tGamma}_{ss}^\alpha-\partial_s\bm{\tGamma}^\alpha_{\alpha s}
    +\bm{\tGamma}_{\alpha\gamma}^\alpha  \bm{\tGamma}_{ss}^\gamma
    -\bm{\tGamma}_{s\gamma}^\alpha \bm{\tGamma}_{\alpha s}^\gamma\\
    ={}^\mathcal{K}\RIC_{ss}+\Oe=\Lambda (\g_{\K})_{ss}+\Oe=\Lambda \widetilde{\g}_{ss}+\Oe,
\end{multline*}
because $\widetilde{\g}_{ss}=-\tPhi^2=- \Phi_{\K}^2+\Oe$. Hence, $\widetilde{\RIC}_{00}+\Lambda=\mathcal{O}(\widetilde{\varepsilon})$.

Similarly for $\widetilde{\RIC}_{0j}=\Phi^{-1}\widetilde{\RIC}_{sj}$:
\begin{equation*}
    \widetilde{\RIC}_{sj}= \partial_\alpha\bm{\tGamma}^\alpha_{j s}-\partial_j\bm{\tGamma}^\alpha_{\alpha s}+\bm{\tGamma}_{\alpha\gamma}^\alpha  \bm{\tGamma}_{j s}^\gamma-\bm{\tGamma}_{j\gamma}^\alpha \bm{\tGamma}_{\alpha s}^\gamma
    = {}^\mathcal{K}\RIC_{sj}+\Oe=\Oe\,,
\end{equation*}
where we have used that $\bm{\tGamma}_{ji}^s\bm{\tGamma}_{ss}^i=\Oe$.

It remains to compute $\widetilde{\RIC}_{ij}$:
\begin{align*}    \partial_\alpha\bm{\tGamma}^\alpha_{ji}&=\partial_\alpha\kGamma^\alpha_{ji}+e^{2Hs}\Oe\,,\qquad
    \partial_j{\bm\tGamma}^\alpha_{\alpha i}=\partial_j\kGamma^\alpha_{\alpha i}+\Oe\,,\\
    \bm{\tGamma}_{\alpha\gamma}^\alpha  {\bm\tGamma}_{ji}^\gamma&=\kGamma_{\alpha\gamma}^\alpha  \kGamma_{ji}^\gamma+e^{2Hs}\Oe\,,\qquad
    \bm{\tGamma}_{j \gamma}^\alpha {\bm\tGamma}_{\alpha i}^\gamma=\kGamma_{j \gamma}^\alpha \kGamma_{\alpha i}^\gamma+e^{2Hs}\Oe\,.
\end{align*}
Therefore
\begin{equation*}
    \widetilde{\RIC}_{ij}=\RIC[\g_{\K}]+e^{2Hs}\Oe=\Lambda \widetilde{\g}_{ij}+e^{2Hs}\Oe\,,
\end{equation*}
and thus in view of \eqref{approx.sol},
\begin{equation*}
\widetilde{\RIC}_{i}{}^j-\Lambda \delta_{i}{}^j=\Oe\,.
\end{equation*}
This completes the proof of the proposition.
\end{proof}

\section{The bootstrap argument}
\label{sec:bootstrap}

In this section we prove the global existence statement of Theorem~\ref{thm:fs} (I).

\subsection{Weighted norms and energy}
\label{sec:norms}

For the spacetimes we consider, $\Sigma_s$ is diffeomorphic to $\mathbb{R}\times\mathbb{S}^2$.
While $g=g_s$ is a metric on $\Sigma_s$,
we endow $(\mathbb{R}\times\mathbb{S}^2,\mathring{g})$ with the standard metric $\mathring{g}$ on the cylinder:
\begin{align}\label{g.0}
\mathring{g}=\ud t^2+\mathring{\gamma},\qquad \mathring{\gamma}=\ud\theta^2+\sin^2\theta \ud\phi^2.
\end{align}
The coordinates on $\Sigma_s$ are denoted by $(t,\theta,\phi)$; recall that by construction these are coordinates on each $\Sigma_s$, $s\geq s_0$, and can also be identified with coordinates on a chart for $\mathbb{R}\times\mathbb{S}^2$.
Given a $\Sigma_s$-tangent $(n,m)$ tensor $\mathcal{T}$, we define 
\begin{align*}
(\nabla^{(\ell)}\mathcal{T})^{j_1\ldots j_n}_{a_1\ldots a_\ell i_1\ldots i_m}=\nabla_{a_1}\ldots\nabla_{a_\ell}\mathcal{T}^{j_1\ldots j_n}_{i_1\ldots i_m}
\end{align*}
and 
\begin{align}\label{T.norm}
|\mathcal{T}|_g^2=g^{i_1i_1'}\ldots g^{i_mi_m'}g_{j_1j_1'}\ldots g_{j_nj_n'}\mathcal{T}^{j'_1\ldots j_n'}_{i_1'\ldots i_m'}\mathcal{T}^{j_1\ldots j_n}_{i_1\ldots i_m}.
\end{align}
We define $\mathring{\nabla}^{(\ell)}\mathcal{T}$ and $|\mathcal{T}|_{\mathring{g}}^2$ similarly, using the covariant derivative $\mathring{\nabla}$ of $\mathring{g}$, instead of $\nabla$, and contracting indices with $\mathring{g}$ instead of $g$.

\begin{definition}
Let $W^{M,\infty}(\Sigma_s,g)$ be the space of $\Sigma_s$-tangent tensors with $M$ bounded spatial derivatives  with respect to 
\begin{align}\label{WM.infty}
\|\mathcal{T}\|_{W^{M,\infty}(\Sigma_s,g)}=\sum_{\ell\leq M}\text{ess\,sup}_{p\in\Sigma_s}e^{\ell Hs}|\nabla^{(\ell)}\mathcal{T}|_g(p)\,.
\end{align}
In particular, $L^\infty(\Sigma_s,g)=W^{0,\infty}(\Sigma_s,g)$ with norm
\begin{equation}\label{Linfty}
\|\mathcal{T}\|_{L^\infty(\Sigma_s,g)}=\text{ess\,sup}_{p\in\Sigma_s}|\mathcal{T}|_g(p)\,.
\end{equation}
\end{definition}

\begin{definition} \label{def:norms}
Let $H^M(\Sigma_s,g)$ be the Sobolev space of $\Sigma_s$-tangent tensors with $M$ square integrable spatial derivatives with respect to the \emph{weighted} norm 
\begin{align}\label{Hw}
\|\mathcal{T}\|_{H^M(\Sigma_s,g)}^2=\sum_{\ell\leq M}\int_{\Sigma_s} f^2(t) e^{2\ell Hs}|\nabla^{(\ell)} \mathcal{T}|^2_g\,e^{-3Hs}\mathrm{vol}_g,
\end{align}
where $\mathrm{vol}_g=\sqrt{|g|}\:\ud t\wedge\ud \theta\wedge \ud \phi$ is the volume form of $(\Sigma_s,g)$ in local coordinates and 
\begin{equation} \label{weight}
    f(t)=e^{\alpha_1 t}+e^{-\alpha_2 t}
\end{equation}
is  a weight function, for some
$\alpha_1,\alpha_2\ge0$. 
 In particular,
 \begin{equation}
\label{L2}\|\mathcal{T}\|^2_{L^2(\Sigma_s,g)}=\int_{\Sigma_s}f^2(t)\:|\mathcal{T}|_g^2\: e^{-3Hs}\vol{g}\,.
\end{equation}
\end{definition}

We will also sometimes use the $W^{M,\infty}(\mathbb{R}\times\mathbb{S}^2,\mathring{g}),H^M(\mathbb{R}\times\mathbb{S}^2,\mathring{g})$ norms, for $\Sigma_s$-tangent tensors, defined as follows:
\begin{align}\label{WM.infty.mathring.g}
\|\mathcal{T}\|_{W^{M,\infty}(\mathbb{R}\times\mathbb{S}^2,\mathring{g})}=\sum_{\ell\leq M}\text{ess\,sup}_{p\in \mathbb{R}\times\mathbb{S}^2}|\mathring{\nabla}^{(\ell)}\mathcal{T}|_{\mathring{g}}(p)\,,\\
\label{HM.mathring.g}
\|\mathcal{T}\|_{H^M(\mathbb{R}\times\mathbb{S}^2,\mathring{g})}^2=\sum_{\ell\leq M}\int_{\mathbb{R}\times\mathbb{S}^2} f^2(t) |\mathring{\nabla}^{(\ell)} \mathcal{T}|^2_{\mathring{g}}\mathrm{vol}_{\mathring{g}}
\end{align}

\begin{remark}
Note that in \eqref{WM.infty} and \eqref{Hw}, each extra spatial derivative $\nabla$ comes at a cost of a weight $e^{Hs}$.
Moreover in \eqref{Hw} the volume form is \emph{renormalised},
to the effect that in this setting,
\begin{align}\label{vol.g}
C^{-1} \leq e^{-3Hs}\sqrt{|g|}\leq C\,,
\end{align}
by virtue of the bootstrap assumptions \eqref{Boots} on $g$ below.
\end{remark}

\begin{remark}
The exponential rates $\alpha_1,\alpha_2$
are related to the exponential decay of the perturbed solution towards the endpoints $t=\pm\infty$.
They are nonnegative, and may not be equal; they can also be set to $0$, when no exponential decay along the cosmological horizons is imposed initially. While the norm $\|\cdot\|_{H^M}$ does depend on $\alpha_1,\alpha_2$, we typically supress this in the notation. To simplify notation, we often drop $(\Sigma_s,g)$ from subscript to the norms \eqref{WM.infty}, \eqref{Linfty}, \eqref{Hw}, and \eqref{L2}.
\end{remark}

Next, we define the overall energy for the variables $\widehat g_{ij},\widehat{g}^{ij},\widehat{k}_i{}^j,\widehat\Gamma_{ic}^a,\widehat\Phi$.
\begin{definition}
    If $N\in\mathbb{N}$ denotes the total number of derivatives we are commuting the main equations with, then let
\begin{align}\label{en.def}
\begin{split} 
\mathcal{E}_N(s)=&\,\|\widehat g\|^2_{H^N(\Sigma_s,g)}+\|\widehat g^{-1}\|^2_{H^N(\Sigma_s,g)}
+e^{3Hs}\|\widehat\Phi\|^2_{H^N(\Sigma_s,g)}\\
&+e^{2Hs}\big(\|\widehat\Gamma \|^2_{H^N(\Sigma_s,g)}
+\|\widehat k\|^2_{H^N(\Sigma_s,g)}\big)\,.
\end{split}
\end{align}
\end{definition}
\begin{remark}
In terms of the $e^{Hs}$ weights,
boundedness of the energy \eqref{en.def} is optimal for $\widehat{g},\widehat{g}^{-1},\widehat{\Gamma}$, since the corresponding Kerr de Sitter variables themselves do not behave better. However, the $e^{Hs}$ weights in the norms of $\widehat{\Phi},\widehat{k}$ in \eqref{en.def} are sub-optimal relative to the expected behavior of these variables ($e^{4Hs}$ would be optimal for both instead of $e^{3Hs},e^{2Hs}$). For  technical reasons (hyperbolicity, boundedness of error terms etc.), we cannot propagate optimal estimates for all variables at the same time. Nevertheless, once we have completed our bootstrap argument (see Sections \ref{subsec:Boots}, \ref{subsec:glob.stab}), the precise asymptotic behavior of all components of the perturbed solution can be derived (see Section \ref{sec:prec.asym}).
\end{remark}

\subsection{Bootstrap assumptions and basic consequences}\label{subsec:Boots}

In view of the local well-posedness statement in Appendix~\ref{sec:app},
there exists $\varepsilon > 0$ and a maximal time of existence 
$s_b\in(s_0,+\infty)$ such that the following inequalities hold:
\begin{align}\label{Boots}
\|\widehat{g}\|_{W^{2,\infty}(\mathbb{R}\times\mathbb{S}^2,\mathring{g})}\leq\varepsilon e^{2Hs} ,\qquad \|\widehat{g}^{-1}\|_{W^{2,\infty}(\mathbb{R}\times\mathbb{S}^2,\mathring{g})}\leq \varepsilon e^{-2Hs},\qquad
\mathcal{E}_N(s)\leq \varepsilon^2,
\end{align}
for all $s\in[s_0,s_b)$, and some $N\ge 4$.

These are our bootstrap assumptions, and the statement that such a bootstrap time $s_b\in (s_0,+\infty)$  exists follows from classical Cauchy stability for the locally well-posed system of Appendix~\ref{sec:app}~\&~\ref{app.eqs}. 
In Section~\ref{subsec:glob.stab} we will show that for data sufficiently close to Kerr de Sitter,
these bounds are not saturated at $s=s_b$, and the solution can hence be continued, using again the local existence statement of Appendix~\ref{sec:app}, to yield the global existence result of Theorem~\ref{thm:fs}~(I).

\subsection{Preliminary estimates}
\label{sec:Boots.est}

The bootstrap assumptions \eqref{Boots} have certain basic implications, which will be useful in deriving energy estimates below. First, we compare the norm \eqref{T.norm} to the components.
\begin{lemma}\label{lem:T.norm}
Let $\mathcal{T}$ be a $\Sigma_s$-tangent $(n,m)$ tensor. Then the following inequalities hold:
\begin{align}\label{T.norm.ineq}
C^{-1}\min_{\substack{i_1,\ldots,i_m,\\j_1,\ldots,j_n}}|\mathcal{T}^{j_1\ldots j_n}_{i_1\ldots i_m}|\leq e^{(m-n)Hs}|\mathcal{T}|_g\leq C \max_{\substack{i_1,\ldots,i_m,\\j_1,\ldots,j_n}}|\mathcal{T}^{j_1\ldots j_n}_{i_1\ldots i_m}|
\end{align}
for all $(s,x)\in(s_0,s_b)\times\Sigma_s$.
\end{lemma}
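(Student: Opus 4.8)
The plan is to prove Lemma~\ref{lem:T.norm} by a direct componentwise comparison, exploiting the fact that the bootstrap assumptions force every component of $g$ and $g^{-1}$ to be comparable to the corresponding component of the fixed reference metric of Kerr de Sitter, which in turn is comparable to a power of $e^{Hs}$ times a fixed cylinder metric. Concretely, first I would combine the bounds \eqref{Boots} on $\|\widehat g\|_{W^{2,\infty}}$ and $\|\widehat g^{-1}\|_{W^{2,\infty}}$ with the reference-metric estimates \eqref{ref.metric.prop} (i.e. $\tg_{ij}=\mathcal{O}(e^{2Hs})$ and, after inverting, $\tg^{ij}=\mathcal{O}(e^{-2Hs})$, together with the uniform two-sided bound $C^{-1}e^{2Hs}\leq\tg_{ij}\leq Ce^{2Hs}$ in each regular chart established in the proof of Lemma~\ref{lem:ref.properties}) to deduce, for $\varepsilon$ small,
\begin{equation*}
    C^{-1}e^{2Hs}\leq g_{ij}\leq Ce^{2Hs}\,,\qquad C^{-1}e^{-2Hs}\leq g^{ij}\leq Ce^{-2Hs}
\end{equation*}
in every coordinate chart, uniformly on $(s_0,s_b)\times\Sigma_s$. (Here one should be slightly careful that these are matrix inequalities/eigenvalue bounds, not literal entrywise bounds, but that is exactly what is needed.)

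The second step is to feed these bounds into the definition \eqref{T.norm} of $|\mathcal{T}|_g^2$: since that expression contracts $\mathcal{T}$ with $m$ copies of $g^{\bullet\bullet}$ (lowered-index slots) and $n$ copies of $g_{\bullet\bullet}$ (raised-index slots), each upper index of $\mathcal{T}$ contributes a factor comparable to $e^{-2Hs}$... wait, let me re-read: the norm contracts upper indices $j$ with $g_{jj'}$ and lower indices $i$ with $g^{ii'}$, so an $(n,m)$ tensor picks up $n$ factors of $g_{\bullet\bullet}\sim e^{2Hs}$ and $m$ factors of $g^{\bullet\bullet}\sim e^{-2Hs}$, giving $|\mathcal{T}|_g^2\sim e^{2(n-m)Hs}\sum|\mathcal{T}^{\ldots}_{\ldots}|^2$, i.e. $|\mathcal{T}|_g\sim e^{(n-m)Hs}\,|\mathcal{T}^{j_1\ldots}_{i_1\ldots}|$. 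This is precisely \eqref{T.norm.ineq} after multiplying through by $e^{(m-n)Hs}$. The comparison between $|\mathcal{T}|_g^2$ and $\sum_{\text{indices}}|\mathcal{T}^{j_1\ldots j_n}_{i_1\ldots i_m}|^2$ is then just linear algebra: a positive-definite quadratic form with eigenvalues pinched between $c_1 e^{2(n-m)Hs}$ and $c_2 e^{2(n-m)Hs}$ is two-sided comparable to the Euclidean sum of squares with those same constants, and one extracts the bound for a single component by choosing $\mathcal{T}$ supported in that slot or by noting that each $|\mathcal{T}^{j_1\ldots j_n}_{i_1\ldots i_m}|^2$ is dominated by the full sum.

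The main (and really only) obstacle is bookkeeping: making sure the powers of $e^{Hs}$ are tracked correctly through the up/down index pattern of \eqref{T.norm}, and making sure the chart-dependence is harmless --- this last point is handled by Remark~\ref{rem:coord}, since the metric components take an identical form in each chart of the atlas, so the constants $C$ can be chosen uniformly. One should also note that no derivatives of $\mathcal{T}$ enter here; the lemma is purely pointwise and algebraic, so the $W^{2,\infty}$ (rather than merely $L^\infty$) strength of the bootstrap assumption is used only to control the undifferentiated components of $g,g^{-1}$ (indeed $L^\infty$ control of $\widehat g,\widehat g^{-1}$ suffices for this particular lemma). I would write the proof as: (i) derive the two-sided bounds on $g_{ij}$, $g^{ij}$ from \eqref{Boots} and Lemma~\ref{lem:ref.properties}; (ii) substitute into \eqref{T.norm} and count powers; (iii) conclude \eqref{T.norm.ineq} with a constant $C$ depending only on $N$, $\alpha_1$, $\alpha_2$, and the Kerr de Sitter parameters.
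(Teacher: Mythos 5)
Your proposal is correct and follows essentially the same route as the paper: the paper's proof likewise deduces the two-sided bounds $C^{-1}e^{2Hs}\leq g_{ij}\leq Ce^{2Hs}$ and $C^{-1}e^{-2Hs}\leq g^{ij}\leq Ce^{-2Hs}$ from the first two bootstrap bounds \eqref{Boots} (with the reference-metric comparability from Lemma~\ref{lem:ref.properties} implicit) and then concludes by the same power counting in \eqref{T.norm}. Your extra remarks (quadratic-form rather than entrywise bounds, chart uniformity via Remark~\ref{rem:coord}, and that only $L^\infty$ control of $\widehat g,\widehat g^{-1}$ is actually needed) are accurate elaborations of what the paper leaves as ``readily follow.''
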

\begin{proof}
The first two bounds in \eqref{Boots} imply that 
\begin{align*}
C^{-1}e^{2Hs}\leq g_{ij}\leq Ce^{2Hs},\qquad    
C^{-1}e^{-2Hs}\leq g^{ij}\leq Ce^{-2Hs},
\end{align*}
in a given regular coordinate patch, from which the desired inequalities readily follow. 
\end{proof}
Next, we derive the Sobolev embedding for the weighted norm \eqref{Hw}.
\begin{lemma}\label{lem:Hw}
Let $\mathcal{T}$ be a $\Sigma_s$-tangent $(n,m)$ tensor. Then the following inequality holds:
\begin{align}\label{Sob}
\|f(t)\mathcal{T}\|_{W^{M,\infty}(\Sigma_s,g)}\leq C\|\mathcal{T}\|_{H^{M+2}(\Sigma_s,g)}
\end{align}
for all $s\in[s_0,s_b)$.
\end{lemma}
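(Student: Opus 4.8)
The plan is to remove all the $e^{Hs}$-weights by passing to the conformally rescaled metric $\bar g:=e^{-2Hs}g_s$ on $\Sigma_s$, thereby reducing \eqref{Sob} to the classical weighted Sobolev inequality on a background cylinder of bounded geometry.

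Since the conformal factor $e^{-2Hs}$ is constant on each $\Sigma_s$, the Levi--Civita connection of $\bar g$ equals $\nabla$. For a $\Sigma_s$-tangent $(n,m)$-tensor $\mathcal T$ one has the exact identity $|\mathcal T|_g=e^{(n-m)Hs}|\mathcal T|_{\bar g}$ (since $g=e^{2Hs}\bar g$; compare Lemma~\ref{lem:T.norm}), and as $\nabla^{(\ell)}\mathcal T$ has type $(n,m+\ell)$ the factor $e^{\ell Hs}$ in \eqref{WM.infty} exactly compensates the loss from lowering $\ell$ additional indices; together with $\vol_g=e^{3Hs}\,\vol{\bar g}$ and the renormalising $e^{-3Hs}$ in \eqref{Hw}, a short computation gives
\begin{align*}
\|\mathcal T\|_{W^{M,\infty}(\Sigma_s,g)}&=e^{(n-m)Hs}\sum_{\ell\le M}\sup_{\Sigma_s}|\nabla^{(\ell)}\mathcal T|_{\bar g}\,,\\
\|\mathcal T\|^2_{H^{M}(\Sigma_s,g)}&=e^{2(n-m)Hs}\sum_{\ell\le M}\int_{\Sigma_s}f^2\,|\nabla^{(\ell)}\mathcal T|^2_{\bar g}\,\vol{\bar g}\,.
\end{align*}
The common factor $e^{(n-m)Hs}$ cancels in \eqref{Sob}, so it suffices to prove the $s$-uniform estimate
\[
\sum_{\ell\le M}\sup_{\Sigma_s}|\nabla^{(\ell)}(f\mathcal T)|_{\bar g}\;\le\;C\Bigl(\sum_{\ell\le M+2}\int_{\Sigma_s}f^2\,|\nabla^{(\ell)}\mathcal T|^2_{\bar g}\,\vol{\bar g}\Bigr)^{1/2}.
\]

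For this, the bootstrap assumptions \eqref{Boots}, combined with the properties of the reference metric in Lemma~\ref{lem:ref.properties} (applied to $e^{-2Hs}\tg$), show that in each of the finitely many standard coordinate charts of Remark~\ref{rem:coord} the components of $\bar g=e^{-2Hs}(\tg+\hg)$ are, uniformly in $s\in[s_0,s_b)$ and in $x$, comparable to those of $\mathring g$ with uniformly bounded coordinate derivatives; equivalently $(\Sigma_s,\bar g)$ has bounded geometry with $s$-independent constants. On such a manifold the classical Sobolev embedding $H^{M+2}\hookrightarrow W^{M,\infty}$ holds in dimension three (as $M+2>M+\tfrac32$), with uniform constant; applying it to each $\nabla^{(\ell)}(f\mathcal T)$, $\ell\le M$, bounds the left-hand side above by $C\|f\mathcal T\|_{H^{M+2}(\Sigma_s,\bar g)}$, the \emph{unweighted} $\bar g$-Sobolev norm. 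Finally $f(t)=e^{\alpha_1t}+e^{-\alpha_2t}$ has bounded logarithmic derivatives, $|\nabla^{(j)}f|_{\bar g}\le C_jf$ (a sum of two exponentials in the coordinate $t$, with $f\ge1$), so by the Leibniz rule $|\nabla^{(\ell)}(f\mathcal T)|_{\bar g}\le Cf\sum_{q\le\ell}|\nabla^{(q)}\mathcal T|_{\bar g}$; squaring, integrating and summing over $\ell\le M+2$ turns $\|f\mathcal T\|^2_{H^{M+2}(\Sigma_s,\bar g)}$ into $C\sum_{\ell\le M+2}\int f^2|\nabla^{(\ell)}\mathcal T|^2_{\bar g}\,\vol{\bar g}$, which is the desired right-hand side. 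Re-inserting the cancelled factor completes the proof of \eqref{Sob}.

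The one genuinely non-routine step is the uniform bounded-geometry claim for $\bar g$: one has to verify that the $e^{\pm2Hs}$-normalisation and the $\varepsilon$-smallness in \eqref{Boots}, together with the $s$-uniform control of the reference metric from Lemma~\ref{lem:ref.properties}, really do furnish charts of uniform size on which $\bar g$ is uniformly elliptic with uniformly bounded derivatives up to the order needed for $M$, with harmless overlaps and no loss along the non-compact $t$-direction. Everything else is the textbook proof of the Sobolev inequality on a fixed compact piece of the cylinder, summed in $t$ with the weight $f$ controlling the sum.
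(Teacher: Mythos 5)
Your reduction via the constant conformal rescaling $\bar g=e^{-2Hs}g_s$ is correct and is really just a cleaner bookkeeping of what the paper does with Lemma~\ref{lem:T.norm} and \eqref{vol.g}: the connection is unchanged, the $e^{\ell Hs}$ weights cancel against the index count, and $e^{-3Hs}\vol{g}=\vol{\bar g}$. Up to that point you and the paper are on the same road. The divergence, and the genuine gap, is in how the Sobolev embedding itself is justified. You invoke a uniform bounded-geometry embedding for the \emph{variable} metrics $\bar g$, and you yourself flag that this needs $\bar g$ to have ``uniformly bounded derivatives up to the order needed for $M$''. That hypothesis is not available: the bootstrap assumptions \eqref{Boots} control only two pointwise ($W^{2,\infty}(\mathring g)$) derivatives of $\hg,\hg^{-1}$; all higher derivatives are controlled only in the weighted $L^2$ sense through $\mathcal{E}_N$. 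Converting those to $L^\infty$ bounds is exactly the content of Lemma~\ref{lem:WN-2.infty}, which is itself a consequence of the lemma you are proving, so using it here would be circular --- and even granting it, it yields $N-2$ pointwise derivatives, while your Leibniz step needs more: the bound $|\nabla^{(j)}f|_{\bar g}\le C_jf$ for $j\le M+2$ involves up to $M$ pointwise derivatives of the Christoffel symbols of $\bar g$ (i.e.\ $M+1$ derivatives of the metric), which for the relevant application $M=N-2$ exceeds anything obtainable. So as written, both the unqualified ``uniformly bounded coordinate derivatives'' claim and the high-order Leibniz estimate for $\nabla^{(\ell)}(f\mathcal{T})$, $\ell\le M+2$, rest on unavailable pointwise control.

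The paper's proof is arranged precisely so that only \emph{two} derivatives of $g$ are ever needed pointwise: the classical embedding is applied with respect to the \emph{fixed} cylinder metric $\mathring g$ (so $|\mathring\nabla^{(\ell)}f|\le Cf$ is trivial for the only two orders that occur), and it is applied separately to $f\,\nabla^{(\ell)}\mathcal{T}$ for each $\ell\le M$, treating $\nabla^{(\ell)}\mathcal{T}$ as a fresh tensor; the only place where $\mathring\nabla$ and $\nabla$ derivatives must be compared is in the two extra derivatives coming from the embedding, and those correction terms ($\Gamma\star\mathcal{T}$, $\nabla\Gamma\star\mathcal{T}$, $\Gamma\star\Gamma\star\mathcal{T}$) are exactly what the $W^{2,\infty}$ bootstrap bounds plus Lemma~\ref{lem:ref.properties} control. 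To repair your argument you should do the same in the rescaled picture: apply only the two-derivative embedding (chartwise, for $\mathring g$ or for $\bar g$ using just the two available derivatives) to each $\nabla^{(\ell)}\mathcal{T}$ with the weight $f$ kept outside the covariant derivatives, rather than the full $H^{M+2}\hookrightarrow W^{M,\infty}$ embedding for $\bar g$ followed by a Leibniz expansion of $\nabla^{(M+2)}(f\mathcal{T})$.
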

\begin{proof}
The classical Sobolev embedding in $\mathbb{R}\times\mathbb{S}^2$ implies that 
\begin{align*}
f^2(t)|\mathcal{T}^{j_1\ldots j_n}_{i_1\ldots i_m}|^2\leq C\sum_{\ell\leq2}\int_{\Sigma_s}|\mathring{\nabla}^{(\ell)}[f(t) \mathcal{T}]|^2_{\mathring{g}}\,\mathrm{vol}_{\mathring{g}}\leq C\|\mathcal{T}\|_{H^2(\mathbb{R}\times\mathbb{S}^2,\mathring{g})}^2,
\end{align*}
since $|\mathring{\nabla}^{(\ell)} f(t)|\leq Cf(t)$.
Recall \eqref{vol.g}
and use Lemma \ref{lem:T.norm} to deduce that
\begin{align}\label{pre.Sob}
\|f(t)\mathcal{T}\|_{L^\infty(\Sigma_s,g)}^2\leq C\int_{\Sigma_s}f^2(t)\big[|\mathcal{T}|^2_g +e^{2Hs}|\nabla(\mathcal{T})|^2_g+e^{4Hs}|\nabla( \nabla(\mathcal{T}))|^2_g\big]e^{-3Hs}\mathrm{vol}_g,
\end{align}
where schematically
\begin{align*}
\nabla(\mathcal{T})=&\,\nabla\mathcal{T}+\Gamma\star\mathcal{T}\\
\nabla(\nabla(\mathcal{T}))=&\,\nabla\nabla\mathcal{T}+\Gamma\star\nabla\mathcal{T}+\Gamma\star\Gamma\star\mathcal{T}+\nabla\Gamma\star\mathcal{T} \,.
\end{align*}
By the first two bounds in the bootstrap assumptions \eqref{Boots}, the correction terms satisfy
\begin{align*}
    |\Gamma\star\mathcal{T}|_g^2\leq &\,Ce^{-2Hs}|\mathcal{T}|^2_g,\\ |\Gamma\star\nabla\mathcal{T}
    +\Gamma\star\Gamma\star\mathcal{T}
    +\nabla\Gamma\star\mathcal{T}|^2_g
    \leq &\,Ce^{-2Hs}|\nabla \mathcal{T}|^2_g+Ce^{-4Hs}|\mathcal{T}|^2_g.
\end{align*}
Inserting the former identities into \eqref{pre.Sob} and using the latter bounds gives \eqref{Sob} for $M=0$. The proof for $M>0$ is the same, replacing $\mathcal{T}$ by $\nabla^{(\ell)}\mathcal{T}$, for $\ell\leq M$.
\end{proof}
An immediate consequence of the bootstrap assumptions \eqref{Boots} and the previous lemma is the following.
\begin{lemma}\label{lem:WN-2.infty}
The variables $\widehat{g},\widehat{g}^{-1},\widehat{\Phi},\widehat{\Gamma},\widehat{k}$ satisfy the $W^{N-2,\infty}(\Sigma_s,g)$ bound:
\begin{equation}\label{WN-2.infty.est}
\begin{split}
&\|f(t)\widehat g\|_{W^{N-2,\infty}(\Sigma_s,g)}+\|f(t)\widehat g^{-1}\|_{W^{N-2,\infty}(\Sigma_s,g)}
+e^{\frac{3}{2}Hs}\|f(t)\widehat\Phi\|_{W^{N-2,\infty}(\Sigma_s,g)}\\
\notag&+e^{Hs}\big(\|f(t)\widehat\Gamma \|_{W^{N-2,\infty}(\Sigma_s,g)}
+\|f(t)\widehat k\|_{W^{N-2,\infty}(\Sigma_s,g)}\big)\leq C\varepsilon,
\end{split}
\end{equation}
for all $s\in[s_0,s_b)$. 
\end{lemma}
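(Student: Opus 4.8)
The statement to prove (Lemma~\ref{lem:WN-2.infty}) asserts a weighted $W^{N-2,\infty}$ bound for the renormalised quantities $\widehat{g},\widehat{g}^{-1},\widehat{\Phi},\widehat{\Gamma},\widehat{k}$, with the same $e^{Hs}$-weight hierarchy as in the energy $\mathcal{E}_N$, deduced from the bootstrap assumptions \eqref{Boots} and the Sobolev embedding of Lemma~\ref{lem:Hw}.

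\medskip
\noindent\textbf{Plan.} The proof is a direct chaining of the Sobolev embedding \eqref{Sob} with $M=N-2$ against the definition of the energy \eqref{en.def}. First I would apply Lemma~\ref{lem:Hw} to each of the five tensors in turn: since $N\geq 4$, we have $N-2\geq 2$, so $(N-2)+2 = N$ and the embedding \eqref{Sob} gives
\[
\|f(t)\,\widehat{g}\|_{W^{N-2,\infty}(\Sigma_s,g)}\leq C\|\widehat{g}\|_{H^{N}(\Sigma_s,g)},
\]
and identically $\|f(t)\widehat{g}^{-1}\|_{W^{N-2,\infty}}\leq C\|\widehat{g}^{-1}\|_{H^{N}}$, $\|f(t)\widehat{\Phi}\|_{W^{N-2,\infty}}\leq C\|\widehat{\Phi}\|_{H^{N}}$, $\|f(t)\widehat{\Gamma}\|_{W^{N-2,\infty}}\leq C\|\widehat{\Gamma}\|_{H^{N}}$, $\|f(t)\widehat{k}\|_{W^{N-2,\infty}}\leq C\|\widehat{k}\|_{H^{N}}$. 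These are applications of \eqref{Sob} to $\Sigma_s$-tangent tensors of type $(0,2)$, $(2,0)$, $(0,0)$, $(1,2)$, $(1,1)$ respectively; the remark after \eqref{diff} ensures we consistently treat $\widehat{g}$, $\widehat{g}^{-1}$, $\widehat{\Gamma}$, $\widehat{k}$ as tensors of fixed type so the embedding applies verbatim.

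\medskip
\noindent\textbf{Combining with the energy.} Next I would multiply each of the five estimates by the \emph{square root} of the corresponding weight appearing in \eqref{en.def}: weight $1$ for $\widehat{g}$ and $\widehat{g}^{-1}$, weight $e^{\frac{3}{2}Hs}$ for $\widehat{\Phi}$ (the square root of $e^{3Hs}$), and weight $e^{Hs}$ for $\widehat{\Gamma}$ and $\widehat{k}$ (the square root of $e^{2Hs}$). Summing the five resulting inequalities yields
\begin{multline*}
\|f(t)\widehat{g}\|_{W^{N-2,\infty}}+\|f(t)\widehat{g}^{-1}\|_{W^{N-2,\infty}}+e^{\frac{3}{2}Hs}\|f(t)\widehat{\Phi}\|_{W^{N-2,\infty}}\\
+e^{Hs}\bigl(\|f(t)\widehat{\Gamma}\|_{W^{N-2,\infty}}+\|f(t)\widehat{k}\|_{W^{N-2,\infty}}\bigr)\\
\leq C\Bigl(\|\widehat{g}\|_{H^{N}}^2+\|\widehat{g}^{-1}\|_{H^{N}}^2+e^{3Hs}\|\widehat{\Phi}\|_{H^{N}}^2+e^{2Hs}\|\widehat{\Gamma}\|_{H^{N}}^2+e^{2Hs}\|\widehat{k}\|_{H^{N}}^2\Bigr)^{1/2}=C\,\mathcal{E}_N(s)^{1/2},
\end{multline*}
using $\sum_i\sqrt{A_i}\leq C(\sum_i A_i)^{1/2}$ for finitely many nonnegative $A_i$. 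The third bootstrap assumption in \eqref{Boots} gives $\mathcal{E}_N(s)\leq\varepsilon^2$ for $s\in[s_0,s_b)$, hence $\mathcal{E}_N(s)^{1/2}\leq\varepsilon$, and the claimed bound \eqref{WN-2.infty.est} follows.

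\medskip
\noindent\textbf{Main obstacle.} There is essentially no obstacle: the content of the lemma is entirely absorbed into Lemma~\ref{lem:Hw}, whose proof in turn already exploited the first two bootstrap bounds in \eqref{Boots} (through \eqref{vol.g} and Lemma~\ref{lem:T.norm}) to convert between $\mathring{g}$- and $g$-norms and to control the connection-coefficient correction terms. The only point requiring minimal care is bookkeeping of the weight powers: one must verify that the half-powers of the energy weights in \eqref{en.def} match exactly the weights asserted in \eqref{WN-2.infty.est} — which they do, $\sqrt{e^{3Hs}}=e^{\frac{3}{2}Hs}$ and $\sqrt{e^{2Hs}}=e^{Hs}$ — and that $N\geq 4$ is precisely what is needed for $N-2\geq 2$ so that Lemma~\ref{lem:Hw} can be invoked at regularity $N-2$.
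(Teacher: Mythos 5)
Your proposal is correct and is exactly the argument the paper intends: the paper states the lemma as "an immediate consequence of the bootstrap assumptions \eqref{Boots} and the previous lemma," i.e.\ the Sobolev embedding \eqref{Sob} with $M=N-2$ applied to each variable, combined with the energy bound $\mathcal{E}_N(s)\leq\varepsilon^2$ and the matching of the half-power weights $e^{\frac{3}{2}Hs}$, $e^{Hs}$. Your bookkeeping of tensor types and of the requirement $N\geq 4$ is consistent with the paper's conventions, so there is nothing to add.
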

Next, we compare the norms defined relative to $g$ and $\mathring{g}$.
\begin{lemma}\label{lem:norm.equiv}
Let $\mathcal{T}$ be a $\Sigma_s$-tangent $(n,m)$ tensor. Then for $M\leq N-1$:
\begin{align}\label{norm.equiv}
C^{-1}\|\mathcal{T}\|_{W^{M,\infty}(\mathbb{R}\times\mathbb{S}^2,\mathring{g})}\leq e^{(m-n)Hs}\|\mathcal{T}\|_{W^{M,\infty}(\Sigma_s,g)}\leq C\|\mathcal{T}\|_{W^{M,\infty}(\mathbb{R}\times\mathbb{S}^2,\mathring{g})}
\end{align}
and for $M\leq N$:
\begin{align}\label{HM.equiv}
C^{-1}\|\mathcal{T}\|_{H^M(\mathbb{R}\times\mathbb{S}^2,\mathring{g})}\leq e^{(m-n)Hs}\|\mathcal{T}\|_{H^M(\Sigma_s,g)}\leq C\|\mathcal{T}\|_{H^M(\mathbb{R}\times\mathbb{S}^2,\mathring{g})}.
\end{align}
\end{lemma}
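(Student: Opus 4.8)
The plan is to prove Lemma \ref{lem:norm.equiv} by reducing everything to pointwise comparisons of the two metrics $g$ and $\mathring{g}$, together with a comparison of the two Levi-Civita connections $\nabla$ and $\mathring{\nabla}$, and then summing over derivatives. First I would record the basic pointwise facts coming from the bootstrap assumptions \eqref{Boots}: in each regular coordinate patch one has $C^{-1}e^{2Hs}\leq g_{ij}\leq Ce^{2Hs}$ and $C^{-1}e^{-2Hs}\leq g^{ij}\leq Ce^{-2Hs}$ (as already used in Lemma \ref{lem:T.norm}), while $\mathring{g}_{ij}$ and $\mathring{g}^{ij}$ are bounded above and below by absolute constants. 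Combined with the tensor-norm identity \eqref{T.norm} and its $\mathring{g}$-analogue, this immediately gives, for \emph{any} fixed $\Sigma_s$-tangent $(n,m)$ tensor $\mathcal{S}$, the pointwise equivalence
\begin{equation*}
C^{-1}e^{(m-n)Hs}|\mathcal{S}|_{g}\leq |\mathcal{S}|_{\mathring{g}}\leq C e^{(m-n)Hs}|\mathcal{S}|_{g}\,,
\end{equation*}
since raising/lowering each of the $m$ lower indices costs a factor comparable to $e^{-2Hs}$ relative to $\mathring{g}$ and each of the $n$ upper indices a factor comparable to $e^{2Hs}$. This is the $\ell=0$ case of both \eqref{norm.equiv} and \eqref{HM.equiv} (for \eqref{HM.equiv} one also invokes the volume-form renormalisation \eqref{vol.g}, i.e.\ $C^{-1}\leq e^{-3Hs}\sqrt{|g|}\leq C$, so that $e^{-3Hs}\mathrm{vol}_g$ and $\mathrm{vol}_{\mathring{g}}$ are comparable).

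Next I would handle the derivatives. The key algebraic input is that the difference of the two connections is a tensor: writing $\Gamma^a_{ic}$, $\mathring{\Gamma}^a_{ic}$ for the Christoffel symbols of $g$, $\mathring{g}$, the tensor $D^a_{ic}:=\Gamma^a_{ic}-\mathring{\Gamma}^a_{ic}$ satisfies $D^a_{ic}=\frac12 g^{ab}(\mathring{\nabla}_i g_{cb}+\mathring{\nabla}_c g_{ib}-\mathring{\nabla}_b g_{ic})$, and one has $\nabla\mathcal{T}=\mathring{\nabla}\mathcal{T}+D\star\mathcal{T}$ schematically. From the bootstrap bounds $\|\widehat g\|_{W^{2,\infty}(\mathbb{R}\times\mathbb{S}^2,\mathring{g})}\leq\varepsilon e^{2Hs}$, $\|\widehat g^{-1}\|_{W^{2,\infty}(\mathbb{R}\times\mathbb{S}^2,\mathring{g})}\leq\varepsilon e^{-2Hs}$ and the fact that the reference Kerr--de Sitter metric satisfies $\tg_{ij}=\mathcal{O}(e^{2Hs})$, $\tg^{ij}=\mathcal{O}(e^{-2Hs})$ with all $\mathring{g}$-derivatives of comparable size (Lemma \ref{lem:ref.properties}), one gets $|\mathring{\nabla}^{(p)} D|_{\mathring{g}}\leq C$ for $p\leq N-1$ (each $\mathring{\nabla}$ on $g$ keeps the size $e^{2Hs}$, each $g^{-1}$ contributes $e^{-2Hs}$, and $D$ has one upper and two lower indices so the net $\mathring{g}$-size is $O(1)$; the loss of two derivatives compared with $N$ is why $M\leq N-1$ appears in \eqref{norm.equiv} and $M\leq N$ in \eqref{HM.equiv}, since \eqref{HM.equiv} needs one fewer pointwise bound on $D$ than \eqref{norm.equiv} does after a Sobolev step). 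Proceeding inductively on $\ell$, one expands $\nabla^{(\ell)}\mathcal{T}$ in terms of $\mathring{\nabla}^{(p)}\mathcal{T}$, $p\leq\ell$, with coefficients built from $\mathring{\nabla}^{(q)} D$, $q\leq\ell-1$; applying the $\ell=0$ metric equivalence to each resulting $\Sigma_s$-tangent tensor and absorbing the (bounded, in $\mathring{g}$-norm) connection-difference factors yields one direction of \eqref{norm.equiv}, and the symmetric expansion $\mathring{\nabla}^{(\ell)}\mathcal{T}$ in terms of $\nabla^{(p)}\mathcal{T}$ (now using that $|\nabla^{(q)} D|_g$ is controlled, i.e.\ the bounds of Lemma \ref{lem:WN-2.infty}, together with the $e^{Hs}$-per-derivative weighting built into \eqref{WM.infty}) yields the other direction. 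Summing over $\ell\leq M$ and, for \eqref{HM.equiv}, integrating against the comparable volume forms, completes the proof.

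The main obstacle — really the only non-bookkeeping point — is getting the \emph{weights} to line up correctly: the $g$-norms \eqref{WM.infty}, \eqref{Hw} carry an explicit $e^{\ell Hs}$ with each derivative, whereas the $\mathring{g}$-norms \eqref{WM.infty.mathring.g}, \eqref{HM.mathring.g} do not, and one must check that this is exactly compensated by the fact that each $\mathring{\nabla}$ on a quantity of $\mathring{g}$-size $e^{cHs}$ produces something of $\mathring{g}$-size $e^{cHs}$ (no gain), while converting a $\mathring{g}$-contraction to a $g$-contraction on an $\ell$-fold covariant derivative tensor — which has $m+\ell$ lower indices and $n$ upper — changes the net exponent by $e^{(m+\ell-n)Hs}$ versus the $e^{(m-n)Hs}$ one wants, the discrepancy $e^{\ell Hs}$ being precisely the weight inserted by hand in \eqref{WM.infty}. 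So one should verify once, carefully, for the $\ell=1$ case that $e^{Hs}|\nabla\mathcal{T}|_g$ is comparable to $e^{(m-n)Hs}|\mathring{\nabla}\mathcal{T}|_{\mathring{g}}$ up to lower-order $D\star\mathcal{T}$ terms which, by Lemma \ref{lem:WN-2.infty}, are a factor $e^{-Hs}$ smaller and hence harmless; the general $\ell$ case then follows by the same induction on $\ell$ sketched above, and the $H^M$ statement follows by additionally using \eqref{vol.g}. Everything else is routine multilinear expansion and does not merit being written out in detail.
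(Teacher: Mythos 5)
Your overall strategy — pointwise equivalence of $|\cdot|_g$ and $|\cdot|_{\mathring g}$ from the bootstrap bounds (i.e.\ Lemma~\ref{lem:T.norm}), comparability of $e^{-3Hs}\mathrm{vol}_g$ and $\mathrm{vol}_{\mathring g}$ via \eqref{vol.g}, and an induction over derivatives using the tensorial connection difference — is exactly the route the paper intends (its proof is the one-line citation of \eqref{Boots}, Lemma~\ref{lem:T.norm} and Lemma~\ref{lem:WN-2.infty}), and your weight bookkeeping ($e^{\ell Hs}$ per derivative versus the extra $\ell$ lower indices) is correct. However, there is a genuine quantitative gap in the step where you control the connection difference $D=\Gamma-\mathring\Gamma$. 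You claim $|\mathring\nabla^{(p)}D|_{\mathring g}\leq C$ for $p\leq N-1$ from the bootstrap bounds on $\widehat g,\widehat g^{-1}$ and Lemma~\ref{lem:ref.properties}. This is not available: the $W^{2,\infty}$ bootstrap bounds give only two derivatives of $\widehat g$, and the energy $\mathcal{E}_N\leq\varepsilon^2$ combined with the Sobolev embedding (Lemma~\ref{lem:WN-2.infty}) gives pointwise control of at most $N-2$ derivatives of $\widehat\Gamma$ (equivalently $N-2$ derivatives of $\widehat g$); there is no pointwise bound on $N-1$ derivatives of $D$ in the bootstrap framework. Moreover, your chosen formula $D=\tfrac12 g^{-1}(\mathring\nabla g+\dots)$ costs one more derivative of $\widehat g$ than necessary, so as written it only yields pointwise bounds for $p\leq N-3$, which would prove \eqref{norm.equiv} only for $M\leq N-2$. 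The fix is to write $D=\widehat\Gamma+(\widetilde\Gamma-\mathring\Gamma)$ and use Lemma~\ref{lem:WN-2.infty} for $\widehat\Gamma$ (as you do only in the reverse direction), giving pointwise control for $p\leq N-2$, which is exactly what the expansion of $\nabla^{(M)}\mathcal{T}$ requires when $M\leq N-1$.

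This correction also exposes the point your parenthetical about the thresholds glosses over: in the top case $M=N$ of \eqref{HM.equiv}, the expansion of $\nabla^{(N)}\mathcal{T}$ contains the term $\mathring\nabla^{(N-1)}D\star\mathcal{T}$, whose $D$-factor cannot be estimated pointwise. It must be placed in the weighted $L^2$ norm (using $\widehat\Gamma\in H^N$ from the energy, together with the boundedness of all derivatives of the reference part), with $\mathcal{T}$ taken in $L^\infty$ via the Sobolev embedding (legitimate since $N\geq4$, and since $f\geq1$ the unweighted sup is controlled by the weighted one). Your proposal instead leans on the unavailable pointwise bound for $p=N-1$, so the case $M=N$ of \eqref{HM.equiv} is not actually proved as written. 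With these two adjustments — using $\widehat\Gamma$ rather than differentiating $\widehat g$ to control $D$, and an $L^2\times L^\infty$ splitting at top order — your argument closes and coincides with the paper's. A minor imprecision: the $D\star\mathcal{T}$ correction terms are not ``a factor $e^{-Hs}$ smaller'' (the reference part $\widetilde\Gamma-\mathring\Gamma$ is $O(1)$ in components); they are merely comparable to lower-order terms of the norm, which is all the two-sided equivalence needs.
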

\begin{proof}
It follows from the bootstrap assumptions \eqref{Boots} and Lemmas \ref{lem:T.norm}, \ref{lem:WN-2.infty}.
\end{proof}

\subsection{Global stability}\label{subsec:glob.stab}

The main energy estimates that we derive in Section \ref{subsec:en.est} prove the following.
\begin{theorem}\label{thm:Boots}
Assume that the bootstrap assumptions \eqref{Boots} are valid for some $N\ge 4$. Then the perturbed solution satisfies the energy estimate:
\begin{multline}\label{Boots:en.est}
\mathcal{E}_N(s)\leq C\mathcal{E}_N(s_0)+C \int_{s_0}^s\,e^{-\frac{1}{2}H\tau}\mathcal{E}_N(\tau)\ud \tau\\ 
+C\int_{s_0}^s e^{\frac{7}{2}H\tau}\big\{\|\widetilde{\mathfrak{J}}_\Phi\|_{W^{N,\infty}(\mathbb{R}\times\mathbb{S}^2,\mathring{g})}^2
+\|(\widetilde{\mathfrak{J}}_k)_i{}^j\|_{W^{N,\infty}(\mathbb{R}\times\mathbb{S}^2,\mathring{g})}^2
+\|\widetilde{\mathfrak{C}}_i\|_{W^{N,\infty}(\mathbb{R}\times\mathbb{S}^2,\mathring{g})}^2\big\} \ud \tau
\end{multline}
for all $s\in[s_0,s_b)$.
\end{theorem}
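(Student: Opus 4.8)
The plan is to establish \eqref{Boots:en.est} by running a standard energy identity argument for the renormalised system \eqref{g.hat.eq}--\eqref{Phi.hat.eq}, commuted with up to $N$ spatial derivatives, keeping careful track of the exponential weights in \eqref{en.def}. First I would commute each evolution equation with $\nabla^{(\ell)}$ for $\ell \le N$; since the commutator $[\partial_s,\nabla]$ produces terms schematically of the form $\Phi\,\nabla k \star (\cdot)$ (via the first variation formula and \eqref{Gamma.hat.eq}), and since $\nabla^{(\ell)}$ acting on a product distributes by Leibniz, the commuted quantities $\nabla^{(\ell)}\hg$, $\nabla^{(\ell)}\hg^{-1}$, $\nabla^{(\ell)}\hk$, $\nabla^{(\ell)}\hGamma$, $\nabla^{(\ell)}\hPhi$ satisfy equations of the same shape as the originals plus lower-order error terms that are controlled, via Lemmas~\ref{lem:WN-2.infty} and \ref{lem:norm.equiv} together with the bootstrap assumptions \eqref{Boots}, by $\varepsilon$ times the energy. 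The weighting $e^{2\ell Hs}$ attached to the $\ell$-th derivative in \eqref{Hw} is designed precisely so that each factor of $\nabla$ — which morally costs $e^{Hs}$ — is absorbed, so that the commuted equations close at the level of $\mathcal{E}_N$.

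Next I would differentiate $\mathcal{E}_N(s)$ in $s$. Three effects contribute: (i) the explicit $s$-derivative of the weights $e^{3Hs}, e^{2Hs}, e^{2\ell Hs}$ and of the renormalised volume form $e^{-3Hs}\mathrm{vol}_g$ (using \eqref{1stvar.2} for $\partial_s\sqrt{|g|}$ and \eqref{eq:firstvariation}); (ii) the $\partial_s$ hitting $|\nabla^{(\ell)}\mathcal{T}|_g^2$, which produces both the "good" terms coming from the $2H$, $3H$, $3H$ shifts already built into \eqref{g.hat.eq}, \eqref{k.hat.eq}, \eqref{Phi.hat.eq} and cross terms with the metric evolution; and (iii) the principal-part contributions. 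The crucial structural point, flagged in Remark~\ref{rem:symm}, is that the system is \emph{not} symmetric hyperbolic as written: the equation for $\hk_i{}^j$ contains $\frac13\Phi g^{cj}\nabla_c\hGamma_{ia}^a$ and $-\frac23\Phi g^{ab}\nabla_i\hGamma^j_{ab}$ whose transposes do not appear in the $\hGamma$ equation \eqref{Gamma.hat.eq}. I would handle these by integrating by parts in the energy integral and substituting the momentum constraint \eqref{mom.const.hat} (equivalently \eqref{mom.const.hat.ii}) to rewrite the divergence/trace of $\nabla\hGamma$ against $\nabla\hk$ in a way that either cancels against the symmetric contribution or produces lower-order terms bounded by $\varepsilon\,\mathcal{E}_N$ plus the reference-metric error terms $\widetilde{\mathfrak{C}}_i$. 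The damping terms $-\tPhi(\tk_l{}^l - 3H\tPhi^{-1})$ etc., which by Lemma~\ref{lem:ref.properties} equal $\mathcal{O}(e^{-2Hs})$, combine with the explicit weight derivatives to give, after all cancellations, a net coefficient that is either strictly negative (pure damping, discarded) or at worst $\mathcal{O}(e^{-\frac12 Hs})$, yielding the $\int e^{-\frac12 H\tau}\mathcal{E}_N$ term.

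The error terms $\mathfrak{G}, \mathfrak{K}, \mathfrak{F}$ in the equations contain only products of hatted quantities with reference-metric coefficients, so by the bootstrap bounds and the Sobolev/comparison lemmas they contribute either quadratically (hence $O(\varepsilon)\cdot\mathcal{E}_N$, absorbed into the $\varepsilon e^{-\frac12 H\tau}$ integral once $\varepsilon$ is small — here one also uses that any genuinely dangerous growing weight is beaten by a spare $e^{-2Hs}$ coming from $\hPhi = \mathcal{O}(e^{-2Hs})$-type factors in Lemma~\ref{lem:ref.properties}) or, in the inhomogeneous pieces $(\widetilde{\mathfrak{J}}_k)_i{}^j, \widetilde{\mathfrak{J}}_\Phi, \widetilde{\mathfrak{C}}_i$ built from $\widetilde\RIC$, linearly. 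For the linear inhomogeneous contributions one applies Cauchy--Schwarz in the energy integral: a term $\int e^{cHs}\langle \nabla^{(\ell)}\mathcal{T}, \nabla^{(\ell)}\widetilde{\mathfrak{J}}\rangle$ is bounded by $\int \mathcal{E}_N + \int e^{?Hs}\|\widetilde{\mathfrak{J}}\|_{W^{N,\infty}(\mathbb{R}\times\mathbb{S}^2,\mathring g)}^2$, and a bookkeeping of the weights — the worst case being the $e^{3Hs}$-weighted $\hPhi$-energy paired against $\widetilde{\mathfrak{J}}_\Phi$ through the parabolic equation, which after integrating the $\Delta_g$ term by parts costs an extra $e^{Hs}$ of phase space but is paid for by the parabolic smoothing — produces the stated exponent $e^{\frac72 H\tau}$. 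Integrating the resulting differential inequality $\frac{d}{ds}\mathcal{E}_N \le C e^{-\frac12 Hs}\mathcal{E}_N + (\text{source})$ from $s_0$ gives \eqref{Boots:en.est}. The main obstacle is exactly step (iii): organising the integration-by-parts-plus-constraint manipulation so that the non-symmetric principal terms in \eqref{k.hat.eq} do not generate a loss of derivatives and so that the leftover commutator and lower-order terms genuinely carry the extra $e^{-\frac12 Hs}$ (or a manifest $\varepsilon$) rather than an $O(1)$ coefficient — this is where the precise choice of weights in $\mathcal{E}_N$ and the near-symmetry of the Ricci expression \eqref{Ricci.3} are used in an essential way.
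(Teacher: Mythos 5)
Your plan follows the same route as the paper: commute the renormalised system with $\nabla^{(\ell)}$, $\ell\le N$, form the weighted energy identities, use the $\tfrac13$-weighting of the $\hGamma$-energy and an integration by parts plus the commuted constraint \eqref{mom.const.diff} to tame the non-symmetric principal terms of \eqref{k.hat.eq}, estimate all remaining errors with the bootstrap assumptions and the weighted Sobolev/comparison lemmas, treat the inhomogeneities $\widetilde{\mathfrak{J}}_\Phi,(\widetilde{\mathfrak{J}}_k)_i{}^j,\widetilde{\mathfrak{C}}_i$ by Cauchy--Schwarz/Young, and integrate in $s$. This is precisely the content of Proposition~\ref{prop:main.en.est}, which the paper then combines to deduce \eqref{Boots:en.est}.

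There is, however, one concrete imprecision at exactly the step this theorem's proof hinges on. You assert that after substituting the momentum constraint the leftover terms are ``lower-order terms bounded by $\varepsilon\,\mathcal{E}_N$ plus the reference-metric error terms $\widetilde{\mathfrak{C}}_i$''. That is not what comes out: the substitution (and likewise the $\nabla\hPhi$-terms in $\mathfrak{G}$, cf.\ \eqref{frak.est.G}, and the term $g^{cj}\nabla_i\nabla^{(\ell)}\nabla_c\hPhi$ in the $\hk$-equation) produces, at top order $\ell=N$, contributions of the form $e^{-\frac12 Hs}\,e^{3Hs}e^{2NHs}\|\nabla^{(N+1)}\hPhi\|^2_{L^2(\Sigma_s,g)}$ on the right of \eqref{main.en.est.Gamma.k}. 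These carry $N+1$ derivatives of $\hPhi$ and are \emph{not} controlled by $\mathcal{E}_N$; they must be absorbed by the dissipative term $e^{3Hs}\|\nabla\hPhi\|^2_{H^N(\Sigma_s,g)}$ on the left of the parabolic lapse estimate \eqref{main.en.est.Phi}, after multiplying that estimate by a suitably large constant before summing -- this absorption \emph{is} the paper's proof of the theorem given Proposition~\ref{prop:main.en.est}. In your write-up the ``parabolic smoothing'' is invoked only for the weight bookkeeping of the source $\widetilde{\mathfrak{J}}_\Phi$ (where, incidentally, the exponent $e^{\frac72 H\tau}$ arises simply from Young's inequality applied to $e^{\frac32 Hs}\|\widetilde{\mathfrak{J}}_\Phi\|\sqrt{\mathcal{E}_N}$), not for this derivative absorption; as literally stated, your constraint-substitution step would not close. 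Once you reroute the $\nabla^{(N+1)}\hPhi$ contributions into the parabolic dissipation in this way, the argument closes as in the paper.
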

\begin{proof}
In view of the definition of the overall energy \eqref{en.def}, this inequality follows directly from the main energy estimates in differential form in Proposition~\ref{prop:main.en.est}.
Note that by adding up the inequalities \eqref{main.en.est.g}-\eqref{main.en.est.Gamma.k}, after multiplying the equation \eqref{main.en.est.Phi} for $\hPhi$ by a suitably large constant, the term involving $\nabla^{(N+1)}\hPhi$ on the RHS of \eqref{main.en.est.Gamma.k} is absorbed by the positive term on the LHS of \eqref{main.en.est.Phi}.
\end{proof}
The previous theorem, combined with a standard continuation argument, implies that 
\begin{corollary}\label{cor:global}
The perturbed solution exists in all of $[s_0,+\infty)\times\Sigma_s$, satisfying the global estimate:
\begin{align}\label{glob:en.est}
\|\widehat{g}\|_{W^{2,\infty}(\mathbb{R}\times\mathbb{S}^2,\mathring{g})}\leq C\mathring{\varepsilon} e^{2Hs} ,\qquad \|\widehat{g}^{-1}\|_{W^{2,\infty}(\mathbb{R}\times\mathbb{S}^2,\mathring{g})}\leq C\mathring{\varepsilon} e^{-2Hs},\qquad
\mathcal{E}_N(s)\leq C\mathring{\varepsilon}^2, 
\end{align}
for all $s\in[s_0,+\infty)$, where $\mathring{\varepsilon}^2:=\mathcal{E}_N(s_0)$.
\end{corollary}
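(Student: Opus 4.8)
The plan is a continuity (bootstrap) argument in $s$. Let $\mathcal{B}\subset(s_0,\infty)$ be the set of $s_b$ for which the solution to the system of Section~\ref{sec:evoleqs} exists on $[s_0,s_b)\times(\mathbb{R}\times\mathbb{S}^2)$ and satisfies the bootstrap assumptions \eqref{Boots} there. First I would check $\mathcal{B}\neq\emptyset$: at $s=s_0$ one has $\mathcal{E}_N(s_0)=\mathring\varepsilon^2<\varepsilon^2$, and by Sobolev embedding (Lemma~\ref{lem:Hw}) together with the norm comparison of Lemma~\ref{lem:norm.equiv}, $\|\widehat g\|_{W^{2,\infty}(\mathbb{R}\times\mathbb{S}^2,\mathring g)}\le Ce^{2Hs_0}\|\widehat g\|_{H^{4}(\Sigma_{s_0},g)}\le Ce^{2Hs_0}\mathring\varepsilon$, and similarly for $\widehat g^{-1}$, so the two pointwise bounds in \eqref{Boots} also hold at $s_0$ with room to spare once $\mathring\varepsilon$ is small. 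Local well-posedness for the first-order system in $\widehat g,\widehat g^{-1},\widehat\Gamma,\widehat k$ coupled to the forward-parabolic equation \eqref{Phi.hat.eq} for $\widehat\Phi$ then produces some $s_b>s_0$ in $\mathcal{B}$.

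The heart of the matter is to improve \eqref{Boots} strictly on $[s_0,s_b)$. I would insert the bootstrap assumptions into Theorem~\ref{thm:Boots} to get \eqref{Boots:en.est}. The inhomogeneous reference-metric terms on its last line are handled by Proposition~\ref{prop:approx.sol}: $\widetilde{\mathfrak{J}}_\Phi$, $(\widetilde{\mathfrak{J}}_k)_i{}^j$ and $\widetilde{\mathfrak{C}}_i$ are supported in $\{|t|\le1\}$ and bounded there, in the unweighted $W^{N,\infty}(\mathbb{R}\times\mathbb{S}^2,\mathring g)$ norm, by $Ce^{-2Hs}$ (and by $C\widetilde\varepsilon$); hence $\int_{s_0}^{s}e^{\frac72H\tau}(\cdots)\,\ud\tau\le C\int_{s_0}^\infty e^{-\frac12H\tau}\,\ud\tau<\infty$, and since $\widetilde\varepsilon$ is controlled by $\mathring\varepsilon$, this contributes at most $C\mathring\varepsilon^2$. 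What is left in \eqref{Boots:en.est} is the term $C\int_{s_0}^s e^{-\frac12H\tau}\mathcal{E}_N(\tau)\,\ud\tau$, to which I would apply Grönwall's inequality: the integrating factor $\exp\bigl(C\int_{s_0}^\infty e^{-\frac12H\tau}\,\ud\tau\bigr)$ is a finite constant — and it is precisely here that the exponential expansion of the cosmological region is exploited — so $\mathcal{E}_N(s)\le C_\ast(\mathcal{E}_N(s_0)+\mathring\varepsilon^2)\le 2C_\ast\mathring\varepsilon^2$ on $[s_0,s_b)$, with $C_\ast$ independent of $s_b$.

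With this improved energy bound I would re-run the Sobolev step of Lemma~\ref{lem:WN-2.infty}: $\|\widehat g\|_{W^{2,\infty}(\mathbb{R}\times\mathbb{S}^2,\mathring g)}\le Ce^{2Hs}\mathcal{E}_N(s)^{1/2}\le C'\mathring\varepsilon\,e^{2Hs}$, and likewise for $\widehat g^{-1}$. Choosing $\mathring\varepsilon$ small enough (depending only on $C_\ast,C'$, which are uniform for $\varepsilon$ below a fixed threshold) that $2C_\ast\mathring\varepsilon^2<\varepsilon^2$ and $C'\mathring\varepsilon<\varepsilon$, all three inequalities of \eqref{Boots} are improved by a definite factor on $[s_0,s_b)$. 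Since $s\mapsto\mathcal{E}_N(s)$ and the pointwise norms are continuous, the improved bounds extend to the closed interval $[s_0,s_b]$, and local existence then continues the solution past $s_b$ while \eqref{Boots} remains valid; thus $\mathcal{B}$ is open and relatively closed in $(s_0,\infty)$, hence equal to it. The solution is therefore global, and the estimates \eqref{glob:en.est} are exactly the improved bootstrap bounds on $[s_0,\infty)$ with $\mathring\varepsilon^2=\mathcal{E}_N(s_0)$.

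The main analytic difficulty — deriving the energy identity and \eqref{Boots:en.est}, in particular absorbing the non-symmetric contributions $\tfrac13\Phi g^{cj}\nabla_c\widehat\Gamma_{ia}^a$ and $-\tfrac23\Phi g^{ab}\nabla_i\widehat\Gamma^j_{ab}$ through the constraints as in Remark~\ref{rem:symm}, and tracking the delicate $e^{Hs}$ weights in \eqref{en.def} — is already contained in Theorem~\ref{thm:Boots}. Within the corollary itself the two points requiring care are: (i) the improvement of \eqref{Boots} must be \emph{strict}, which forces one to keep all constants uniform for $\varepsilon$ below a fixed threshold and to take $\mathring\varepsilon$ small only at the very end; and (ii) the continuation argument must rest on a local well-posedness/breakdown criterion adapted to this mixed hyperbolic–parabolic system in the parabolic gauge \eqref{lapse}, showing that finiteness of $\mathcal{E}_N$ and of the pointwise norms on $[s_0,s_b)$ genuinely rules out blow-up at $s_b$. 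This last point is standard in spirit but is the one place a tailored statement, rather than an off-the-shelf theorem, is needed.
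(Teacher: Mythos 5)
Your overall architecture (continuity/bootstrap argument, feeding \eqref{Boots} into Theorem~\ref{thm:Boots}, Gr\"onwall with the integrable factor $e^{-\frac12 H\tau}$, then Sobolev via Lemmas~\ref{lem:Hw}, \ref{lem:norm.equiv} to improve the pointwise bounds) is the same as the paper's. The genuine gap is in your treatment of the inhomogeneous reference-metric terms in the last line of \eqref{Boots:en.est}. Proposition~\ref{prop:approx.sol} gives \emph{two alternative} bounds on $\widetilde{\mathfrak{J}}_\Phi,(\widetilde{\mathfrak{J}}_k)_i{}^j,\widetilde{\mathfrak{C}}_i$: a decay bound $C e^{-2Hs}$ whose constant is \emph{not} small (it depends only on the background), and a smallness bound $C\widetilde{\varepsilon}$ with \emph{no} decay. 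Using only the decay bound, as you do, yields $\int_{s_0}^{s} e^{\frac72 H\tau}(\cdots)\,\ud\tau \le C' e^{-\frac12 Hs_0}$, a finite but order-one quantity; it is not $O(\mathring{\varepsilon}^2)$ and not below the bootstrap threshold $\varepsilon^2$ unless $s_0$ is taken large, which is not among the hypotheses. Your attempted fix, ``since $\widetilde{\varepsilon}$ is controlled by $\mathring{\varepsilon}$, this contributes at most $C\mathring{\varepsilon}^2$,'' is unjustified: $\widetilde{\varepsilon}$ in \eqref{ai.mi} measures the distance $|a_1-a_2|+|m_1-m_2|$ between the two Kerr de Sitter parameter pairs and is a property of the reference metric alone, whereas $\mathring{\varepsilon}^2=\mathcal{E}_N(s_0)$ measures the distance of the data from that (interpolating) reference metric; neither controls the other, and the constant in the decay estimate is in any case independent of $\widetilde{\varepsilon}$.

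The paper closes this precisely where your argument is soft: it splits the source integral at an intermediate time $s_*$, bounds the tail $\int_{s_*}^{s_b}$ using the decay option (small once $s_*$ is chosen large, of size $e^{-\frac12 Hs_*}$), bounds the initial segment $\int_{s_0}^{s_*}$ using the $\widetilde{\varepsilon}$-smallness option, and then makes three \emph{separate} smallness choices -- $C\mathring{\varepsilon}^2<\varepsilon^2/3$, $s_*$ large so that $e^{-\frac12 Hs_*}<\varepsilon^2/3$, and $(a_1,m_1),(a_2,m_2)$ close enough (depending on $s_*$) so that the first segment is $<\varepsilon^2/3$ -- to obtain the strict improvement of \eqref{Boots}. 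Without this splitting, or an added hypothesis that $s_0$ is sufficiently large, your strict improvement of the energy bound on $[s_0,s_b)$ does not go through, and that is exactly the step on which the continuity argument hinges. Your remaining points (non-emptiness of the bootstrap set, openness/closedness, local well-posedness of the hyperbolic--parabolic system) are consistent with what the paper invokes as classical Cauchy stability and standard continuation.
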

\begin{proof}
Applying Gronwall's inequality to \eqref{Boots:en.est} gives
\begin{align*}
\mathcal{E}_N(s)\leq 
C\bigg[\mathcal{E}_N(s_0)+\int^{s_b}_{s_0}
e^{\frac{7}{2}Hs}\big\{\|\widetilde{\mathfrak{J}}_\Phi\|_{W^{N,\infty}(\mathbb{R}\times\mathbb{S}^2,\mathring{g})}^2
+\|(\widetilde{\mathfrak{J}}_k)_i{}^j\|_{W^{N,\infty}(\mathbb{R}\times\mathbb{S}^2,\mathring{g})}^2
+\|\widetilde{\mathfrak{C}}_i\|_{W^{N,\infty}(\mathbb{R}\times\mathbb{S}^2,\mathring{g})}^2\big\}\ud s\bigg]\,.
\end{align*}
In view identities \eqref{eq:J:k.Phi}, \eqref{frak.C} and Proposition \ref{prop:approx.sol}, we have 
\begin{align*}
\mathcal{E}_N(s)\leq &\,C\bigg[\mathring{\varepsilon}^2+\int^{s_*}_{s_0}
e^{\frac{7}{2}Hs}\big\{\|\widetilde{\mathfrak{J}}_\Phi\|_{W^{N,\infty}(\mathbb{R}\times\mathbb{S}^2,\mathring{g})}^2
+\|(\widetilde{\mathfrak{J}}_k)_i{}^j\|_{W^{N,\infty}(\mathbb{R}\times\mathbb{S}^2,\mathring{g})}^2
+\|\widetilde{\mathfrak{C}}_i\|_{W^{N,\infty}(\mathbb{R}\times\mathbb{S}^2,\mathring{g})}^2 \ud s\\
&+\int^{s_b}_{s_*}
e^{\frac{7}{2}Hs}\big\{\|\widetilde{\mathfrak{J}}_\Phi\|_{W^{N,\infty}(\mathbb{R}\times\mathbb{S}^2,\mathring{g})}^2
+\|(\widetilde{\mathfrak{J}}_k)_i{}^j\|_{W^{N,\infty}(\mathbb{R}\times\mathbb{S}^2,\mathring{g})}^2
+\|\widetilde{\mathfrak{C}}_i\|_{W^{N,\infty}(\mathbb{R}\times\mathbb{S}^2,\mathring{g})}^2\big\}\ud s\bigg]\\
\leq &\, C\big[\mathring{\varepsilon}^2+\widetilde{\varepsilon}^2(s_*-s_0)+e^{-\frac{1}{2}Hs_*}\big] \,.
\end{align*}
Moreover, by Lemmas \ref{lem:Hw}, \ref{lem:norm.equiv} we obtain
\begin{align*}
\|\widehat{g}\|_{W^{2,\infty}(\mathbb{R}\times\mathbb{S}^2,\mathring{g})}\leq&\, Ce^{2Hs}\|\widehat{g}\|_{W^{2,\infty}(\Sigma_s,g)}
    \leq Ce^{2Hs}\|\widehat{g}\|_{H^4(\Sigma_s)}\\
    \leq&\, Ce^{2Hs}\sqrt{\mathcal{E}_N(s)}
\leq Ce^{2Hs}\sqrt{\mathring{\varepsilon}^2+\widetilde{\varepsilon}^2(s_*-s_0)+e^{-\frac{1}{2}Hs_*}},
\end{align*}
and similarly
\begin{align*}
  \|\widehat{g}^{-1}\|_{W^{2,\infty}(\mathbb{R}\times\mathbb{S}^2,\mathring{g})}\leq
Ce^{-2Hs}\sqrt{\mathring{\varepsilon}^2+\widetilde{\varepsilon}^2(s_*-s_0)+e^{-\frac{1}{2}Hs_*}}.
\end{align*}

Now choose the initial data sufficiently close to Kerr de Sitter to begin with, such that $C\mathring{\varepsilon}^2<\varepsilon^2/3$. Also, choose $s_*$ such that $e^{-\frac{1}{2}Hs_*}<\varepsilon^2/3$. Lastly, assume that the two Kerr de Sitter pairs of parameters $(a_1,m_1),(a_2,m_2)$ are sufficiently close
such that $\widetilde{\varepsilon}^2(s_*-s_0)<\varepsilon^2/3$, to deduce that
\begin{align*}
C\big[\mathring{\varepsilon}^2+\widetilde{\varepsilon}^2(s_*-s_0)+e^{-\frac{1}{2}Hs_*}\big] <\varepsilon^2.
\end{align*}
Combining the above inequalities yields an improvement of our the bootstrap assumptions \eqref{Boots}. By standard continuation criteria, we infer that the bootstrap time $s_b=+\infty$. Thus, the energy estimate \eqref{Boots:en.est}, and therefore \eqref{glob:en.est} that we have just derived using the \eqref{Boots:en.est}, hold true for all $s\in[s_0,+\infty)$. In particular, the perturbed solution exists globally.
\end{proof}

\begin{remark}[Local well-posedness]
In the proof of the previous corollary, we tacitly used the local well-posedness of the Einstein vacuum equations in the parabolic gauge \eqref{lapse}, which is proven independently in  Appendix \ref{sec:app}. 
\end{remark}

\section{Future stability estimates}
\label{sec:future}

Our goal in this section is to derive the energy estimates that prove Theorem~\ref{thm:Boots}.
We derive the higher order equations in Section~\ref{sec:higherorder}, and treat the error estimates in Section~\ref{subsec:error.est}.
The overall energy estimate is proven in Proposition~\ref{prop:main.en.est} at the end of Section \ref{subsec:en.est}.
We start as an introduction with the basic energy identity in this gauge.

\subsection{Discussion of the energy identity}\label{subsec:en.id.discussion}

The purpose of this section is to explain that $\mathcal{E}(s):=\mathcal{E}_0(s)$ as defined in \eqref{en.def}, at zeroth order $N=0$ for simplicity, is a suitable energy for the system of equations presented in Section~\ref{sec:vareq}.

We begin with the first variation equations given in Lemma~\ref{lem:hat.eq:1st}. The terms on the RHS of these equations will be collected as an error:
\begin{subequations}
\begin{align}
    \partial_s\hg_{ij}-2H\hg_{ij}=&\,(\Er{\hg})_{ij}\\
    \partial_s\hg^{ij}+2H\hg^{ij}=&\,(\Er{\hg^{-1}})^{ij}
\end{align}
    \end{subequations}
In order to derive an estimate for $\hg$ in $\mathrm{L}^2(\Sigma_s)$
we first derive an equation for
\begin{equation}
    |\hg|_g^2=g^{ii'}g^{jj'}\hg_{ij}\hg_{i'j'}\,.
\end{equation}
To prevent confusion, we point out that this cannot be abbreviated to $\hg^{ij}\hg_{ij}$ because we have defined in \eqref{diff}:
\begin{equation}
    \hg^{ij}=g^{ij}-\tg^{ij}
\end{equation}
In the currrent setting, the first variation equation \eqref{1stvar} can also be written as:
\begin{equation}
    \partial_s g_{ij}-2H g_{ij}=(\Er{g})_{ij}
\end{equation}
Similarly for $g^{-1}$. Indeed, it follows from \eqref{1stvar.inv.2} that
\begin{align}
    \partial_s g^{ij}+2Hg^{ij}=&\, (\Er{g^{-1}})^{ij}\,,\\
    (\Er{g^{-1}})^{ij}=& -2\Phi g^{ia}\hk_{a}{}^j-2\Phi g^{ia}\bigl(\tk_a{}^j-H\delta_a{}^j \Phi^{-1}\bigr)\,.
\end{align}
Therefore,
\begin{equation}
    \label{eq:hg:g}\partial_s|\hg|_g^2=\Er{g^{-1}}\star\hg\star\hg+\Er{\hg}\star\hg
\end{equation}
where we have introduced the schematic notation $\star$ to denote all possible contractions of indices with $g$.
In Lemma~\ref{lem:en.id.g}, a higher order version of this identity will be derived.

Next we derive the energy identity for $\hPhi$, which indicates in particular at which rate $\hPhi$ decays.
Treating the terms on the RHS of Lemma~\ref{lem:hat.2nd} as error terms, we recall from \eqref{Phi.hat.eq} the equation for $\hPhi$:
\begin{equation}
    \partial_s\hPhi-\Delta_g\hPhi+2H\hPhi=\mathrm{Error}_{\widehat{\Phi}}
\end{equation}
After multiplying by $e^{4Hs}\hPhi$, differentiating by parts, and rearranging the terms we obtain 
\begin{equation}  \label{eq:hPhi:Er}
\frac{1}{2}\partial_s\bigl(e^{4Hs}\hPhi^2\bigr)+e^{4Hs}|\nabla \hPhi |^2_g=e^{4Hs}\nabla^i\bigl(\hPhi\partial_i\hPhi\bigr)+(\Er{\hPhi}) e^{4Hs} \hPhi \,.
\end{equation}
The higher order version of this equation is given in \eqref{Phi.hat.eq.diff}.

Moreover, from Lemma~\ref{lem:hat.2nd} we rewrite the equations for $\hGamma$ and $\hk$ in the form
\begin{subequations}
\label{eq:hGamma:hk:Er}
\begin{align}
    \partial_s\hGamma_{ic}^a=&\,\Phi \nabla_i \hk_c{}^a+\Phi \nabla_c\hk_i{}^a-g_{cj}\Phi \nabla^a\hk_i{}^j+(\Er{\hGamma})_{ic}^a   \label{eq:hGamma:Er}\,,\\
    \label{eq:hk:Er}\partial_s\hk_{i}{}^j+3H\hk_i{}^j=&\,g^{cj}\nabla_i\nabla_c\hPhi+\frac{1}{3}\Phi g^{cj}\bigl(\nabla_c\hGamma_{ia}^a-\nabla_a\hGamma_{ci}^a\bigr)+\frac{2}{3}\Phi g^{ab}\bigl(\nabla_a\hGamma_{bi}^j-\nabla_i\hGamma_{ab}^j\bigr)\\
    \notag&+(\Er{\hk})_i{}^j \,.
\end{align}
\end{subequations}
As opposed to the equations for $\hg$ or $\hPhi$, which can be discussed separately,
the equations \eqref{eq:hGamma:hk:Er} need to be considered jointly,  to uncover the hyperbolic structure in this formulation. Similarly to \eqref{eq:hg:g},
we begin by deriving the equation for
\begin{equation}
    |\hGamma|_g^2=g^{ii'}g^{cc'}g_{aa'}\hGamma_{ic}^a\hGamma_{i'c'}^{a'},
\end{equation}
which follows from \eqref{eq:hGamma:Er}:
\begin{equation}
    \frac{1}{2}\partial_s|\hGamma|_{g}^2+H|\hGamma|_g^2=\boxed{2\Phi g^{cc'} g_{aa'}\hGamma_{ic}^a \nabla^i \hk_{c'}{}^{a'}} 
    \boxed{-\Phi  g^{ii'}\hGamma_{ic}^a\nabla_{a}\hk_{i'}{}^{c}}+\Er{g,g^{-1}}\star\hGamma\star\hGamma+\Er{\hGamma}\star\hGamma
\end{equation}
 Moreover it follows from \eqref{eq:hk:Er} that:
\begin{multline}
    \frac{1}{2}\partial_s |\hk|^2_g +3H|\hk|_g^2 =\hk_i{}^j\nabla^i\nabla_j\hPhi +\frac{1}{3}\Phi g^{ii'}\hk_{i}{}^{j}\bigl(\nabla_j\hGamma_{i'a}^a\boxed{-\nabla_a\hGamma_{j i'}^a}\bigr)\\+\frac{2}{3}\Phi g^{ab} g^{ii'} g_{jj'}\hk_{i}{}^j\bigl(\boxed{\nabla_a\hGamma_{bi'}^{j'}}-\nabla_{i'}\hGamma_{ab}^{j'}\bigr)+\Er{g,g^{-1}}\star\hk\star\hk+\Er{\hk}\star\hk
\end{multline}
The crucial observation is the following:
After multiplying the first equation by a factor of $1/3$, the boxed terms add up to a divergence. In fact,
\begin{multline}
    \frac{1}{2}\partial_s\Bigl(e^{2Hs}|\hk|_g^2+\frac{1}{3}e^{2Hs}|\hGamma|_g^2\Bigr)+2H|\hk|_g^2=\frac{2}{3}\Phi g^{ii'}g_{jj'} \nabla^a\bigl(e^{2Hs}\hGamma_{ai'}^{j'} \hk_i{}^j\bigr)-\frac{1}{3}\Phi g^{ii'}\nabla_a\bigl(e^{2Hs}\hGamma^a_{ij}\hk_{i'}{}^j\bigr)\\
    +e^{2Hs}\hk_i{}^j\nabla^i\nabla_j\hPhi +\frac{1}{3}\Phi e^{2Hs} g^{ii'}\hk_{i}{}^{j}\nabla_j\hGamma_{i'a}^a-\frac{2}{3}\Phi e^{2Hs} g^{ab}  g_{jj'}\hk_{i}{}^j\nabla^{i}\hGamma_{ab}^{j'}\\
+e^{2Hs}\Er{g,g^{-1}}\star(\hGamma\star\hGamma+\hk\star\hk)+e^{2Hs}\Er{\hGamma}\star\hGamma\,.
\end{multline}
It remains to treat the terms in the second line of the above equation. After differentiation by parts, these terms produce divergences of $\hk$, and the constraint equations of Lemma~\ref{lemma:constraint:h} come into play:
\begin{align}
\label{mom.const.exp}\nabla_j\widehat k_i{}^j=&\,\partial_i\widehat{\Phi}+(\Er{\mathrm{div}\hk})_i\\
\label{mom.const.exp2}     \nabla^i\hk_i{}^j=&g^{jc}\partial_c \hPhi+(\Er{\mathrm{div}\hk})^j
\end{align}
The result is an equation of the form
\begin{multline}\label{eq:hk:hGamma:Er}
    \frac{1}{2}\partial_s\Bigl(e^{2Hs}|\hk|_g^2+\frac{1}{3}e^{2Hs}|\hGamma|_g^2\Bigr)+2H|\hk|_g^2=\Phi \:\mathrm{div}_g(e^{2Hs}\hGamma\star\hk)+\nabla^i\Bigl(e^{2Hs}\hk_i{}^j\partial_j\hPhi\Bigr)\\
    -e^{2Hs} |\nabla\hPhi|_g^2 +\frac{1}{3}\Phi e^{2Hs}\nabla^i\hPhi\:\hGamma_{ia}^a-\frac{2}{3}\Phi e^{2Hs} g^{ab}\nabla_j\hPhi\:\hGamma_{ab}^j\\
    +e^{2Hs}\Er{\mathrm{div}\hk}\star\nabla\hPhi+e^{2Hs}\Er{\mathrm{div}\hk,\hGamma}\star\hGamma+e^{2Hs}\Er{g,g^{-1}}\star(\hGamma\star\hGamma+\hk\star\hk)
\end{multline}
The higher order version of this equation is the content of Lemma~\ref{lem:en.id}.

It is now clear from \eqref{eq:hg:g}, \eqref{eq:hPhi:Er}, and \eqref{eq:hk:hGamma:Er} that a suitable energy for this system is indeed
\begin{equation} \label{eq:en:zero}
\mathcal{E}(s)=\,\|\hg\|^2_{L^2(\Sigma_s,g)}+\|\hg^{-1}\|^2_{L^2(\Sigma_s,g)}
+e^{3Hs}\|\hPhi\|^2_{L^2(\Sigma_s,g)}+e^{2Hs}\big(\|\hGamma \|^2_{L^2(\Sigma_s,g)}
+\|\hk\|^2_{L^2(\Sigma_s,g)}\big)\,.
\end{equation}
Note however that in comparison to \eqref{eq:hPhi:Er},
the rate for $\hPhi$ included in the energy is not sharp. This gap is needed to close the energy estimates; see for example the error estimates of Lemma~\ref{lem:frak.est} below.
The higher order version of the energy defined above is precisely \eqref{en.def}. Nevertheless, sharp asymptotics for $\widehat{\Phi}$ and the rest of the variables are derived in Section \ref{sec:prec.asym}, after we have completed the overall energy argument.

For the energy \emph{estimates}, 
note already that $\nabla\hPhi$ appears on the RHS of \eqref{eq:hk:hGamma:Er} at one order of differentiability higher than in the energy \eqref{eq:en:zero}. Here the positive term on the LHS of \eqref{eq:hPhi:Er} is used.
However, to proceed we need to estimate in the first place various errors. This will be done systematically in Section~\ref{subsec:error.est}.

\subsection{The differentiated equations and higher order energy identities}
\label{sec:higherorder}

\subsubsection{Higher order equations}

To derive higher order energy estimates,
we first commute the first and second variation equations of Section~\ref{sec:vareq} with tangential derivatives.

\begin{lemma}[Commuted first variation equations]
\begin{subequations}
    \begin{align}
\label{g.hat.eq.diff}
\partial_s\nabla^{(\ell)}\widehat{g}_{ij}-2H\nabla^{(\ell)}\widehat{g}_{ij}=&\,(\mathrm{Error}_{\widehat g,\ell})_{ij},\\
\label{g.inv.hat.eq.diff}\partial_s\nabla^{(\ell)}\widehat{g}^{ij}+2H\nabla^{(\ell)}\widehat{g}^{ij}=&\,(\mathrm{Error}_{\widehat g^{-1},\ell})^{ij}\,,
\end{align}
\end{subequations}
where
\begin{subequations}
\label{Error.gg}
\begin{align}
\label{Error.g}(\mathrm{Error}_{\widehat g,\ell})_{ij}=&\,\nabla^{(\ell)}\big\{2\Phi g_{ja}\widehat{k}_i{}^a+
2\widehat{\Phi}g_{ja}\widetilde{k}_i{}^a
+2\widetilde{\Phi}(\widetilde{k}_i{}^a-H\widetilde{\Phi}^{-1}\delta_i{}^a)\widehat{g}_{ja}\big\}
+[\partial_s,\nabla^{(\ell)}]\widehat{g}_{ij},\\
\label{Error.g.inv}(\mathrm{Error}_{\widehat g^{-1},\ell})^{ij}=&-\nabla^{(\ell)}\big\{2\Phi g^{ia}\widehat{k}_a{}^j
+2\widehat{\Phi}g^{ia}\widetilde{k}_a{}^j
+2\widetilde{\Phi}(\widetilde{k}_a{}^j-H\widetilde{\Phi}^{-1}\delta_a{}^j)\widehat{g}^{ia}\big\}+[\partial_s,\nabla^{(\ell)}]\widehat{g}^{ij}\,.
\end{align}
\end{subequations}

Moreover,
\begin{equation}
\label{Phi.hat.eq.diff}\partial_s\nabla^{(\ell)}\widehat{\Phi}-\Delta_g\nabla^{(\ell)}\widehat{\Phi}+2H\nabla^{(\ell)}\widehat{\Phi}=\,\mathrm{Error}_{\widehat \Phi,\ell},
\end{equation}
where
\begin{equation}
\label{Error.Phi}\mathrm{Error}_{\widehat \Phi,\ell}=\,\nabla^{(\ell)}(\mathfrak{F}+\widetilde{\mathfrak{I}}_\Phi)
+[\partial_s,\nabla^{(\ell)}]\hPhi-[\Delta_g,\nabla^{(\ell)}]\hPhi\,.    
\end{equation}

\end{lemma}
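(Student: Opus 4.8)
The plan is to obtain \eqref{g.hat.eq.diff}, \eqref{g.inv.hat.eq.diff} and \eqref{Phi.hat.eq.diff} by directly commuting the undifferentiated equations \eqref{g.hat.eq}, \eqref{g.inv.hat.eq} of Lemma~\ref{lem:hat.eq:1st} and the parabolic equation \eqref{Phi.hat.eq} of Lemma~\ref{lem:hat.2nd} with the $\ell$-fold covariant derivative $\nabla^{(\ell)}=\nabla_{a_1}\cdots\nabla_{a_\ell}$. For \eqref{g.hat.eq}: since $H$ is constant and $\nabla$ is metric-compatible, $\nabla^{(\ell)}$ passes freely through the term $-2H\,\widehat g_{ij}$; applied to the right-hand side of \eqref{g.hat.eq} it produces, by definition, the first group of terms in \eqref{Error.g}; and on the left-hand side we write
\begin{equation*}
\partial_s\nabla^{(\ell)}\widehat g_{ij}=\nabla^{(\ell)}\big(\partial_s\widehat g_{ij}\big)+[\partial_s,\nabla^{(\ell)}]\widehat g_{ij}\,,
\end{equation*}
substitute \eqref{g.hat.eq} into the first summand, and collect the commutator into $\mathrm{Error}_{\widehat g,\ell}$. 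This gives \eqref{g.hat.eq.diff}, \eqref{Error.g}. The evolution equation \eqref{g.inv.hat.eq} for $\widehat g^{ij}$ is handled in exactly the same way, yielding \eqref{g.inv.hat.eq.diff}, \eqref{Error.g.inv}. For the parabolic equation \eqref{Phi.hat.eq} the only new ingredient is that $\nabla^{(\ell)}$ must also be commuted past the spatial Laplacian,
\begin{equation*}
\Delta_g\nabla^{(\ell)}\widehat\Phi=\nabla^{(\ell)}\big(\Delta_g\widehat\Phi\big)+[\Delta_g,\nabla^{(\ell)}]\widehat\Phi\,;
\end{equation*}
together with the time commutator $[\partial_s,\nabla^{(\ell)}]\widehat\Phi$ and $\nabla^{(\ell)}(\mathfrak{F}+\widetilde{\mathfrak{I}}_\Phi)$, this reproduces \eqref{Phi.hat.eq.diff} with the error term \eqref{Error.Phi}.

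Strictly speaking this is all the lemma asserts, so the derivation is essentially a one-line substitution and there is no real obstacle at this stage; the substantive work — bounding $\mathrm{Error}_{\widehat g,\ell}$, $\mathrm{Error}_{\widehat g^{-1},\ell}$, $\mathrm{Error}_{\widehat\Phi,\ell}$ — is deferred to Section~\ref{subsec:error.est}. It is nonetheless worth recording here the structure of the two commutators, since it is what makes those later bounds possible. The operator $[\partial_s,\nabla_a]$ acts on a $\Sigma_s$-tensor by contraction with $\partial_s\Gamma^c_{ab}$, and by \eqref{Gamma.eq} one has $\partial_s\Gamma=\nabla(\Phi k)$ schematically; iterating, $[\partial_s,\nabla^{(\ell)}]\mathcal{T}$ is a sum of terms $\nabla^{(j)}\big(\Phi\,\nabla k\big)\star\nabla^{(\ell-1-j)}\mathcal{T}$ with $j\le\ell-1$, hence involves at most $\ell$ derivatives of $k$ and $\ell-1$ of $\mathcal{T}$. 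Similarly $[\Delta_g,\nabla^{(\ell)}]\widehat\Phi$ is a contraction of $\nabla^{(\le\ell-1)}\Riem[g]$ with $\nabla^{(\le\ell)}\widehat\Phi$, and via \eqref{eq:Ricci:local}, \eqref{def:Gamma} the curvature is itself a first derivative of $\Gamma$.

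Thus all commutator contributions stay within the range of derivatives controlled by the energy \eqref{en.def}, which is precisely why casting the equations in this differentiated form is useful. The only bookkeeping subtlety worth flagging for later is that in $\mathrm{Error}_{\widehat\Phi,\ell}$ the commutator $[\Delta_g,\nabla^{(\ell)}]\widehat\Phi$ places $\ell{+}1$ derivatives on $\widehat\Phi$ through the curvature term and $\ell{+}1$ on $\Gamma$ through $\nabla^{(\ell-1)}\Riem[g]$; both are admissible, the first because the parabolic equation supplies the missing regularity via the positive gradient term on the left of \eqref{eq:hPhi:Er}, the second because $\widehat\Gamma$ sits at order $N$ in \eqref{en.def}. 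Beyond this remark the proof is the substitution above.
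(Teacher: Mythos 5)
Your proposal is correct and is exactly the paper's argument: the lemma is obtained by applying $\nabla^{(\ell)}$ to \eqref{g.hat.eq}, \eqref{g.inv.hat.eq}, \eqref{Phi.hat.eq} and collecting the commutators $[\partial_s,\nabla^{(\ell)}]$ and $[\Delta_g,\nabla^{(\ell)}]$ into the error terms, with the signs as you state. Your extra remarks on the schematic structure of the commutators simply anticipate Lemma~\ref{lem:comm.est} and are consistent with it.
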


\begin{proof}
    These equations result by  commuting the equations \eqref{g.hat.eq}, \eqref{g.inv.hat.eq}, \eqref{Phi.hat.eq}  with $\nabla^{(\ell)}$.
\end{proof}

\begin{lemma}[Commuted second variation equations] \label{lem:sec.var.commuted}
%
\begin{equation}
\begin{split}
\label{k.hat.eq.diff}\partial_s\nabla^{(\ell)}\widehat{k}_i{}^j+3H\nabla^{(\ell)}\widehat{k}_i{}^j=&
\,\frac{1}{3}\Phi g^{cj}(\nabla_c\nabla^{(\ell)}\widehat\Gamma_{ia}^a-\nabla_a\nabla^{(\ell)}\widehat\Gamma_{ci}^a)+(\mathrm{Error}_{\widehat k,\ell})_i{}^j\\
&+\frac{2}{3}\Phi g^{ab}(\nabla_a\nabla^{(\ell)}\widehat\Gamma^j_{bi}-\nabla_i\nabla^{(\ell)}\widehat\Gamma^j_{ab})
+g^{cj}\nabla_i\nabla^{(\ell)}\nabla_c\widehat\Phi\,,
\end{split}
\end{equation}
where
\begin{align}
\label{Error.k}(\mathrm{Error}_{\widehat k,\ell})_i{}^j
=&\,\nabla^{(\ell)}\big\{\mathfrak{K}_i{}^j-\widetilde{\Phi}(\widetilde{k}_l{}^l-3H\widetilde{\Phi}^{-1})\widehat{k}_i{}^j+(\widetilde{\mathfrak{I}}_k)_i{}^j\big\}\\
\notag&+\frac{1}{3}\Phi g^{cj}\big([\nabla^{(\ell)},\nabla_c]\widehat\Gamma_{ia}^a-[\nabla^{(\ell)},\nabla_a]\widehat\Gamma_{ci}^a\big)
+g^{cj}[\nabla^{(\ell)},\nabla_i]\nabla_c\widehat\Phi\\
\notag&+\frac{2}{3}\Phi g^{ab}\big([\nabla^{(\ell)},\nabla_a]\widehat\Gamma^j_{bi}-[\nabla^{(\ell)},\nabla_i]\widehat\Gamma^j_{ab}\big)+[\partial_s,\nabla^{(\ell)}]\widehat{k}_i{}^j\\
\notag&+\sum_{\ell_1+\ell_2=\ell,\,\ell_2<\ell}\big\{\frac{1}{3}g^{cj}\nabla^{(\ell_1)}\Phi(\nabla^{(\ell_2)}\nabla_c\widehat\Gamma_{ia}^a-\nabla^{(\ell_2)}\nabla_a\widehat\Gamma_{ci}^a)\\ \notag&+\frac{2}{3}g^{ab}\nabla^{(\ell_1)}\Phi (\nabla^{(\ell_2)}\nabla_a\widehat\Gamma^j_{bi}-\nabla^{(\ell_2)}\nabla_i\widehat\Gamma^j_{ab})\big\}\,.
\end{align}
Moreover
\begin{equation}
\label{Gamma.hat.eq.diff}\partial_s\nabla^{(\ell)}\widehat{\Gamma}_{ic}^a=\Phi\nabla_i\nabla^{(\ell)}\widehat{k}_c{}^a
+\Phi\nabla_c\nabla^{(\ell)}\widehat{k}_i{}^a-g^{ab}g_{cj}\Phi\nabla_b\nabla^{(\ell)}\widehat{k}_i{}^j+(\mathrm{Error}_{\widehat \Gamma,\ell})_{ic}^a\,,
\end{equation}
where
\begin{align}
\notag
 (\mathrm{Error}_{\widehat \Gamma,\ell})_{ic}^a=&\,\nabla^{(\ell)}\mathfrak{G}_{ic}^a
+[\partial_s,\nabla^{(\ell)}]\widehat{\Gamma}^a_{ic}
+\Phi[\nabla^{(\ell)},\nabla_i]\widehat{k}_c{}^a
+\Phi[\nabla^{(\ell)},\nabla_c]\widehat{k}_i{}^a\\
   \label{Error.Gamma}&-g^{ab}g_{cj}\Phi[\nabla^{(\ell)},\nabla_b]\widehat{k}_i{}^j
+\sum_{\ell_1+\ell_2=\ell,\,\ell_2<\ell}\big\{\nabla^{(\ell_1)}\Phi\nabla^{(\ell_2)}\nabla_i\widehat{k}_c{}^a
+\nabla^{(\ell_1)}\Phi\nabla^{(\ell_2)}\nabla_c\widehat{k}_i{}^a\\
\notag&-g^{ab}g_{cj}\nabla^{(\ell_1)}\Phi\nabla^{(\ell_2)}\nabla_b\widehat{k}_i{}^j\big\}\,.
\end{align}

\end{lemma}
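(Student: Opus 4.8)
The plan is to obtain \eqref{k.hat.eq.diff} and \eqref{Gamma.hat.eq.diff} by applying the $\ell$-fold covariant derivative $\nabla^{(\ell)}$ directly to the equations \eqref{k.hat.eq} and \eqref{Gamma.hat.eq} of Lemma~\ref{lem:hat.2nd}, and then sorting the resulting terms into a \emph{principal} part, which is kept explicit, and an \emph{error} part. This split is dictated by the downstream energy argument: the terms displayed on the right-hand sides of \eqref{k.hat.eq.diff}, \eqref{Gamma.hat.eq.diff} are precisely those needed to exhibit the boxed divergence cancellation of Section~\ref{subsec:en.id.discussion}, while everything else --- commutators, curvature contributions, and Leibniz remainders in which a derivative falls on a coefficient --- is absorbed into $(\mathrm{Error}_{\widehat k,\ell})$, $(\mathrm{Error}_{\widehat\Gamma,\ell})$.

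First I would treat \eqref{Gamma.hat.eq.diff}. Applying $\nabla^{(\ell)}$ to \eqref{Gamma.hat.eq}, the left-hand side produces the commutator $[\partial_s,\nabla^{(\ell)}]\widehat\Gamma_{ic}^a$; on the right-hand side, for each of the three terms $\Phi\nabla_i\widehat k_c{}^a$, $\Phi\nabla_c\widehat k_i{}^a$, $-g^{ab}g_{cj}\Phi\nabla_b\widehat k_i{}^j$ one uses the Leibniz rule to move $\nabla^{(\ell)}$ past $\Phi$ and the metric, and then the spatial commutator $[\nabla^{(\ell)},\nabla_\bullet]$ to bring it next to $\widehat k$. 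The term carrying all $\ell$ derivatives on $\widehat k$ and none on $\Phi$ is the principal one; the remainder is exactly the sum over $\ell_1+\ell_2=\ell$, $\ell_2<\ell$ in \eqref{Error.Gamma}, and the commutators together with $\nabla^{(\ell)}\mathfrak{G}_{ic}^a$ make up the rest. Here one uses that $[\partial_s,\nabla_b]$ acting on a tensor contributes $(\partial_s\Gamma)\star(\cdot)$ with $\partial_s\Gamma$ given by \eqref{Gamma.eq}, hence schematically $\nabla(\Phi k)\star(\cdot)$, which is of genuinely lower order in the energy; and that $[\nabla^{(\ell)},\nabla_\bullet]$ produces Riemann curvature times lower-order derivatives by the standard commutator identity. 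The argument for \eqref{k.hat.eq.diff} is identical in spirit, applied to \eqref{k.hat.eq}: the four terms $\tfrac13\Phi g^{cj}(\nabla_c\widehat\Gamma_{ia}^a-\nabla_a\widehat\Gamma_{ci}^a)$, $\tfrac23\Phi g^{ab}(\nabla_a\widehat\Gamma_{bi}^j-\nabla_i\widehat\Gamma_{ab}^j)$, $g^{cj}\nabla_i\nabla_c\widehat\Phi$ are differentiated $\ell$ times, the derivatives landing on $\widehat\Gamma$ or $\widehat\Phi$ giving the displayed principal terms and those landing on $\Phi$, $g$ giving the $\ell_2<\ell$ sum in \eqref{Error.k}; the remaining terms $\mathfrak{K}_i{}^j$, $-\widetilde\Phi(\widetilde k_l{}^l-3H\widetilde\Phi^{-1})\widehat k_i{}^j$, $(\widetilde{\mathfrak{I}}_k)_i{}^j$ simply receive $\nabla^{(\ell)}$, and the commutators $[\nabla^{(\ell)},\nabla_c]$, $[\nabla^{(\ell)},\nabla_a]$, $[\nabla^{(\ell)},\nabla_i]$ and $[\partial_s,\nabla^{(\ell)}]$ complete the error.

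There is no analytic subtlety at this stage; the only real work is bookkeeping. The two points to be careful about are that the commutator $[\partial_s,\nabla^{(\ell)}]$ must be expanded against \eqref{Gamma.eq} so that it is manifestly built out of $\nabla k$, with no undifferentiated $\partial_s$ of an unknown surviving, and that no term having the same structure as the displayed principal terms is inadvertently left inside the error --- both guaranteed by using \eqref{Gamma.eq} and the schematic curvature commutator identity. The estimates that control the collected error terms $(\mathrm{Error}_{\widehat k,\ell})$, $(\mathrm{Error}_{\widehat\Gamma,\ell})$ are postponed to Section~\ref{subsec:error.est}.
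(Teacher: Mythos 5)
Your proposal is correct and is essentially the paper's own proof, which simply records that \eqref{k.hat.eq.diff} and \eqref{Gamma.hat.eq.diff} follow by commuting \eqref{k.hat.eq} and \eqref{Gamma.hat.eq} with $\nabla^{(\ell)}$ and collecting the Leibniz and commutator remainders into the stated error terms. One trivial remark: since $\nabla$ is the Levi-Civita connection of $g$, no derivatives ever ``land on $g$'' (they vanish by $\nabla g=0$), which is why the $\ell_2<\ell$ sums in \eqref{Error.k}, \eqref{Error.Gamma} involve only $\nabla^{(\ell_1)}\Phi$.
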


\begin{proof}
These result by commuting the equations \eqref{k.hat.eq} and \eqref{Gamma.hat.eq} with $\nabla^{(\ell)}$.
\end{proof}

Finally, we also commute  the constraint equations \eqref{mom.const.hat}, \eqref{mom.const.hat.ii} with $\nabla^{(\ell)}$.

\begin{lemma}[Commuted constraint equations]
%
\begin{subequations}
\label{mom.const.diff}
\begin{align}
\nabla_j\nabla^{(\ell)}\widehat k_i{}^j=&\,\nabla_i\nabla^{(\ell)}\widehat{\Phi}+(\mathrm{Error}_{\mathrm{div}\widehat k,\ell})_i,\\
\label{mom.const.diff2}g^{im}\nabla_m\nabla^{(\ell)}\hk_i{}^j=&\,g^{jc}\nabla_c\nabla^{(\ell)} \hPhi+(\mathrm{Error}_{\mathrm{div}\widehat k,\ell})^j,
\end{align}
\end{subequations}
where
\begin{subequations} \label{Error.div.kk}
\begin{align}
\label{Error.divk}(\mathrm{Error}_{\mathrm{div}\widehat k,\ell})_i=&\,\nabla^{(\ell)}\big\{\widetilde{\mathfrak{C}}_i-\widehat{\Gamma}_{jc}^j(\widetilde{k}_i{}^c-H\delta_i{}^c)+\widehat{\Gamma}_{ji}^c(\widetilde{k}_c{}^j-H\delta_c{}^j)\big\}\\
\notag&+[\nabla^{(\ell)},\nabla_i]\widehat{\Phi}+[\nabla_j,\nabla^{(\ell)}]\widehat{k}_i{}^j,\\
\label{Error.divk.2}(\mathrm{Error}_{\mathrm{div}\widehat k,\ell})^j=&\,\nabla^{(\ell)}\big\{\hg^{jc}\tnabla_c(\tk_l{}^l-3H)-\hg^{im}\tnabla_m(\tk_i{}^j-H\delta_i{}^j)
-g^{im}\widehat{\Gamma}_{mc}^j(\widetilde{k}_i{}^c-H\delta_i{}^c)\\
\notag&+g^{im}\widehat{\Gamma}_{mi}^c(\widetilde{k}_c{}^j-H\delta_c{}^j)
+\widetilde{\mathfrak{C}}^j\big\}
+g^{jc}[\nabla^{(\ell)},\nabla_c]\widehat{\Phi}+g^{im}[\nabla_m,\nabla^{(\ell)}]\widehat{k}_i{}^j.
\end{align}
\end{subequations}
\end{lemma}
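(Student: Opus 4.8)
The plan is to apply the tangential derivative string $\nabla^{(\ell)}=\nabla_{a_1}\cdots\nabla_{a_\ell}$ to each of the two forms of the momentum constraint in Lemma~\ref{lemma:constraint:h}, and then to commute $\nabla^{(\ell)}$ past the single divergence (on the left) and gradient (on the right) appearing in those identities, collecting every commutator and every $\nabla^{(\ell)}$ of a lower-order term into the stated error expressions.

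Concretely, for \eqref{mom.const.diff} I would start from \eqref{mom.const.hat} and hit both sides with $\nabla^{(\ell)}$. On the right-hand side, since $\widehat{\Phi}$ is a scalar we have $\partial_i\widehat\Phi=\nabla_i\widehat\Phi$, so $\nabla^{(\ell)}\partial_i\widehat\Phi=\nabla_i\nabla^{(\ell)}\widehat\Phi+[\nabla^{(\ell)},\nabla_i]\widehat\Phi$; the first term is the claimed principal term and the commutator is one of the contributions to \eqref{Error.divk}. The remaining right-hand side terms of \eqref{mom.const.hat}, namely $-\widehat{\Gamma}_{jc}^j(\widetilde k_i{}^c-H\delta_i{}^c)+\widehat{\Gamma}_{ji}^c(\widetilde k_c{}^j-H\delta_c{}^j)+\widetilde{\mathfrak{C}}_i$, simply receive $\nabla^{(\ell)}$ and are recorded verbatim. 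On the left-hand side, write $\nabla^{(\ell)}\nabla_j\widehat k_i{}^j=\nabla_j\nabla^{(\ell)}\widehat k_i{}^j+[\nabla^{(\ell)},\nabla_j]\widehat k_i{}^j$, and transpose the commutator to the right with the opposite sign, producing the term $[\nabla_j,\nabla^{(\ell)}]\widehat k_i{}^j$ of \eqref{Error.divk}. This is exactly \eqref{mom.const.diff}. For \eqref{mom.const.diff2} I would repeat the argument starting from \eqref{mom.const.hat.ii}, using in addition metric compatibility $\nabla g=0$ to pull the factors $g^{im}$ and $g^{jc}$ freely through $\nabla^{(\ell)}$, so that only the commutators $g^{im}[\nabla_m,\nabla^{(\ell)}]\widehat k_i{}^j$ and $g^{jc}[\nabla^{(\ell)},\nabla_c]\widehat\Phi$ are generated; together with $\nabla^{(\ell)}$ applied to the lower-order right-hand side of \eqref{mom.const.hat.ii} (now also containing the $\widehat g^{-1}\star\widetilde\nabla(\widetilde k-H\delta)$ terms), these assemble into \eqref{Error.divk.2}.

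There is no genuine obstacle here; the computation is mechanical. The only points requiring care are the sign bookkeeping when moving commutators across the equality into the error, and the observation that $\nabla^{(\ell)}$ commutes with raising and lowering of indices because $\nabla$ is the Levi-Civita connection of $g$. The curvature content concealed in the commutators $[\nabla^{(\ell)},\nabla_\cdot]$ is deliberately left unexpanded at this stage, to be estimated in Section~\ref{subsec:error.est}.
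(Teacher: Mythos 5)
Your proposal is correct and is exactly the argument the paper intends: apply $\nabla^{(\ell)}$ to \eqref{mom.const.hat} and \eqref{mom.const.hat.ii}, use metric compatibility to pull $g^{im}$, $g^{jc}$ through, and transfer the commutators $[\nabla^{(\ell)},\nabla_\cdot]$ together with the differentiated lower-order terms into \eqref{Error.divk}, \eqref{Error.divk.2}. The sign bookkeeping and the placement of each term match the stated error expressions, so there is nothing to add.
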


\subsubsection{Higher order energy identities}
\label{sec:higherorder.en.id}

Next, we write the energy identities for the above system of equations, at the level of pointwise magnitudes $|\nabla^\ell\hg|_g^2$, $|\nabla^\ell \hPhi|_g^2$, and so forth.

\begin{lemma}[Energy identity for $\hg$]
\label{lem:en.id.g}
    \begin{multline} \label{en.id.g}
\sum_{\ell\leq N}\frac{1}{2}\partial_s\Bigl\{e^{2\ell Hs}|\nabla^{(\ell)}\widehat{g}|_g^2
+e^{2\ell Hs}|\nabla^{(\ell)}\widehat{g}^{-1}|_g^2\Bigr\}
\\
=\sum_{\ell\leq N}e^{2\ell Hs}\Bigl\{(\Phi k-H\delta)\star\nabla^{(\ell)}\widehat{g}\star\nabla^{(\ell)}\widehat{g}
+(\Phi k-H\delta)\star\nabla^{(\ell)}\widehat{g}^{-1}\star\nabla^{(\ell)}\widehat{g}^{-1}\\
+\mathrm{Error}_{\widehat{g},\ell}\star\nabla^{(\ell)}\widehat{g}
+\mathrm{Error}_{\widehat{g}^{-1},\ell}\star\nabla^{(\ell)}\widehat{g}^{-1}\Bigr\}\,. 
\end{multline}
\end{lemma}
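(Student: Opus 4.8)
The plan is to differentiate each summand $\frac12 e^{2\ell Hs}|\nabla^{(\ell)}\hg|_g^2$ and $\frac12 e^{2\ell Hs}|\nabla^{(\ell)}\hg^{-1}|_g^2$ in $s$ by the product rule and then sum over $0\le\ell\le N$. The only ingredients needed are the commuted first variation equations \eqref{g.hat.eq.diff}, \eqref{g.inv.hat.eq.diff} — which by construction already collect every commutator $[\partial_s,\nabla^{(\ell)}]$ and every genuine inhomogeneity into the error terms $\mathrm{Error}_{\widehat g,\ell}$, $\mathrm{Error}_{\widehat g^{-1},\ell}$ — together with the first variation formulas \eqref{1stvar.2}, \eqref{1stvar.inv.2} governing $\partial_s$ of the metric factors implicit in the contractions $|\cdot|_g$.

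First I would treat $\hg$. Since $\hg_{ij}$ is a $(0,2)$-tensor, $\nabla^{(\ell)}\hg$ is a $(0,\ell+2)$-tensor, so $|\nabla^{(\ell)}\hg|_g^2$ involves $\ell+2$ contractions with the inverse metric. Differentiating, the term in which $\partial_s$ falls on the two copies of $\nabla^{(\ell)}\hg$ gives, via \eqref{g.hat.eq.diff}, the contribution $2\cdot2H\,|\nabla^{(\ell)}\hg|_g^2+\mathrm{Error}_{\widehat g,\ell}\star\nabla^{(\ell)}\hg$; for each of the $\ell+2$ inverse-metric factors I would write $\partial_s g^{ij}=-2\Phi g^{ia}k_a{}^j=-2g^{ia}(\Phi k_a{}^j-H\delta_a{}^j)-2Hg^{ij}$, so that the $-2Hg^{ij}$ piece simply reproduces $-2H$ times the original magnitude while the remainder is schematically of the form $(\Phi k-H\delta)\star\nabla^{(\ell)}\hg\star\nabla^{(\ell)}\hg$. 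Collecting, $\partial_s|\nabla^{(\ell)}\hg|_g^2=\bigl(4H-2(\ell+2)H\bigr)|\nabla^{(\ell)}\hg|_g^2+(\Phi k-H\delta)\star\nabla^{(\ell)}\hg\star\nabla^{(\ell)}\hg+\mathrm{Error}_{\widehat g,\ell}\star\nabla^{(\ell)}\hg=-2\ell H\,|\nabla^{(\ell)}\hg|_g^2+\cdots$. Multiplying through by $e^{2\ell Hs}$ and using $\partial_s e^{2\ell Hs}=2\ell H e^{2\ell Hs}$, the coefficient of $|\nabla^{(\ell)}\hg|_g^2$ cancels and exactly the $\hg$-part of the right-hand side of \eqref{en.id.g} remains.

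The estimate for $\hg^{-1}$ is the mirror image: $\nabla^{(\ell)}\hg^{-1}$ is a $(2,\ell)$-tensor, so $|\nabla^{(\ell)}\hg^{-1}|_g^2$ carries $\ell$ inverse metrics and two metrics. Differentiating, \eqref{g.inv.hat.eq.diff} supplies $-4H\,|\nabla^{(\ell)}\hg^{-1}|_g^2$ (plus the error), the two factors of $g_{ij}$ contribute $+4H\,|\nabla^{(\ell)}\hg^{-1}|_g^2$ plus a $(\Phi k-H\delta)\star\cdots$ term via $\partial_s g_{ij}=2g_{ja}(\Phi k_i{}^a-H\delta_i{}^a)+2Hg_{ij}$, and the $\ell$ inverse metrics contribute $-2\ell H\,|\nabla^{(\ell)}\hg^{-1}|_g^2$ plus $(\Phi k-H\delta)\star\cdots$. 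The net coefficient of $|\nabla^{(\ell)}\hg^{-1}|_g^2$ is again $-2\ell H$, which is annihilated by $\partial_s e^{2\ell Hs}$. Summing the two identities over $\ell\le N$ yields \eqref{en.id.g}.

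The computation involves no real difficulty; the one place to be careful is the bookkeeping of how $\partial_s$ acts on the inverse-metric and metric factors hidden inside $|\cdot|_g^2$, and the (expected) observation that the weight $e^{2\ell Hs}$ is calibrated precisely so that its $s$-derivative cancels the leading-order $H$-terms produced both by the evolution equation for $\nabla^{(\ell)}\hg$ (resp.\ $\nabla^{(\ell)}\hg^{-1}$) and by $\partial_s$ of those contracting metrics. Everything else — the commutators and the genuine inhomogeneities $\mathfrak{G}$, $\mathfrak{F}$, $\widetilde{\mathfrak{I}}$, etc.\ — is, by construction, already packaged inside $\mathrm{Error}_{\widehat g,\ell}$ and $\mathrm{Error}_{\widehat g^{-1},\ell}$ and requires no separate treatment here.
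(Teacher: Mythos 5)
Your proposal is correct and follows essentially the same route as the paper: apply $\tfrac12\partial_s$ by the product rule, use the commuted equations \eqref{g.hat.eq.diff}, \eqref{g.inv.hat.eq.diff} (whose errors already absorb the commutators), use \eqref{1stvar.2}, \eqref{1stvar.inv.2} for the contracting metric factors written as $H\delta$ plus $\Phi k-H\delta$, and observe that the resulting $H$-coefficients ($4H-2(\ell+2)H$ for $\hg$, $-4H+4H-2\ell H$ for $\hg^{-1}$) cancel against $\partial_s e^{2\ell Hs}$. This is exactly the cancellation exploited in the paper's proof of Lemma~\ref{lem:en.id.g}.
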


\begin{proof}
To differentiate 
\begin{equation}
    e^{2\ell Hs}|\nabla^{(\ell)}\widehat{g}|_g^2=e^{2\ell Hs}g^{ii'}g^{jj'}g^{b_1b_1'}\ldots g^{b_\ell b_\ell'}\nabla_{b_1}\ldots\nabla_{b_\ell}\widehat{g}_{ij}\nabla_{b_1'}\ldots\nabla_{b_\ell'}\widehat{g}_{i'j'}
\end{equation}
we use \eqref{1stvar.inv.2} and \eqref{g.hat.eq.diff} and obtain
\begin{equation}
    \begin{split}        
\frac{1}{2}\partial_s\big\{e^{2\ell Hs}|\nabla^{(\ell)}\widehat{g}|_g^2\big\}
=&\,\ell H e^{2\ell Hs}|\nabla^{(\ell)}\widehat{g}|_g^2\\
&-\Phi k_a{}^{b_1'}e^{2\ell Hs}g^{ii'}g^{jj'}g^{b_1a}\ldots g^{b_\ell b_\ell'}\nabla_{b_1}\ldots\nabla_{b_\ell}\widehat{g}_{ij}\nabla_{b_1'}\ldots\nabla_{b_\ell'}\widehat{g}_{i'j'}\\
&-\ldots-\Phi k_a{}^{b_\ell'}e^{2\ell Hs}g^{ii'}g^{jj'}g^{b_1b_1'}\ldots g^{b_\ell a}\nabla_{b_1}\ldots\nabla_{b_\ell}\widehat{g}_{ij}\nabla_{b_1'}\ldots\nabla_{b_\ell'}\widehat{g}_{i'j'}\\
-&\Phi k_a{}^{i'}e^{2\ell Hs}g^{ia}g^{jj'}g^{b_1b_1'}\ldots g^{b_\ell b_\ell'}\nabla_{b_1}\ldots\nabla_{b_\ell}\widehat{g}_{ij}\nabla_{b_1'}\ldots\nabla_{b_\ell'}\widehat{g}_{i'j'}\\
-&\Phi k_a{}^{j'}e^{2\ell Hs}g^{ii'}g^{ja}g^{b_1b_1'}\ldots g^{b_\ell b_\ell'}\nabla_{b_1}\ldots\nabla_{b_\ell}\widehat{g}_{ij}\nabla_{b_1'}\ldots\nabla_{b_\ell'}\widehat{g}_{i'j'}\\
+e^{2\ell Hs}g^{ii'}g^{jj'}&g^{b_1b_1'}\ldots g^{b_\ell b_\ell'}\big\{2H\nabla_{b_1}\ldots\nabla_{b_\ell}\widehat{g}_{ij}+(\mathrm{Error})_{\widehat{g},\ell})_{ij}\big\}\nabla_{b_1'}\ldots\nabla_{b_\ell'}\widehat{g}_{i'j'}\,.
    \end{split}
\end{equation}
Since 
\begin{equation}
    \Phi k_a^b=\Phi\hk_a{}^b+\Phi\tk_a{}^b-H\delta_a{}^b  +H\delta_a{}^b  \,,
\end{equation}
the diagonal terms cancel and we are left with
\begin{equation}
    \frac{1}{2}\partial_s\big\{e^{2\ell Hs}|\nabla^{(\ell)}\widehat{g}|_g^2\big\}=\,e^{2\ell Hs}\Bigl\{(\Phi \hk+\Phi\tk-H\delta)\star\nabla^{(\ell)}\widehat{g}\star\nabla^{(\ell)}\widehat{g}+\mathrm{Error}_{\widehat{g},\ell}\star\nabla^{(\ell)}\widehat{g}\Bigr\}\,.
\end{equation}
Similarly for $\hg^{-1}$.
\end{proof}

%
\begin{lemma}[Energy identity for $\hPhi$]
    \label{lem:en.id.hPhi}
%
\begin{multline}\label{en.id.Phi}
\sum_{\ell\leq N}\bigg[\frac{1}{2}\partial_s\big\{e^{3Hs}e^{2\ell Hs}|\nabla^{(\ell)}\widehat{\Phi}|^2_g\big\}
+e^{3Hs}e^{2\ell Hs}|\nabla^{(\ell+1)}\widehat{\Phi}|^2_g
+2He^{3Hs}e^{2\ell Hs}|\nabla^{(\ell)}\widehat{\Phi}|^2_g\bigg]\\
=\sum_{\ell\leq N}e^{3Hs}e^{2\ell Hs}\big\{\nabla^i\big[(\nabla_i\nabla^{(\ell)}\widehat{\Phi})\nabla^{(\ell)}\widehat{\Phi}\big]
+(\Phi k-H\delta)\star\nabla^{(\ell)}\widehat{\Phi}\star \nabla^{(\ell)}\widehat{\Phi}+\mathrm{Error}_{\widehat{\Phi},\ell}\star\nabla^{(\ell)}\widehat{\Phi}\big\}
\end{multline}
\end{lemma}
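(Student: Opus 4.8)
The identity \eqref{en.id.Phi} is the $\hPhi$-counterpart of the energy identity for $\hg$ in Lemma~\ref{lem:en.id.g}, and like that one it holds for any smooth solution of the system, independently of the bootstrap assumptions; the only new feature is the Laplacian. The plan is to fix $\ell\leq N$, abbreviate $T:=\nabla^{(\ell)}\hPhi$ and $W:=e^{3Hs}e^{2\ell Hs}$, and expand $\tfrac12\partial_s\bigl\{W|T|_g^2\bigr\}$ by the Leibniz rule. The $s$-derivative falling on $W$ contributes $\tfrac12(3+2\ell)H\,W|T|_g^2$. The $s$-derivative falling on the $\ell$ inverse metrics that contract the free indices of $T=\nabla^{(\ell)}\hPhi$ is computed from the first variation equation \eqref{1stvar.inv.2}; splitting $\Phi k=H\delta+(\Phi k-H\delta)$, the $H\delta$ parts recombine into $-\ell H\,W|T|_g^2$ and the remainder is exactly the schematic term $W\,(\Phi k-H\delta)\star\nabla^{(\ell)}\hPhi\star\nabla^{(\ell)}\hPhi$ appearing on the right of \eqref{en.id.Phi}.

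The substantive contribution is the $s$-derivative falling on one copy of $T$, which by \eqref{Phi.hat.eq.diff} equals $\Delta_g T-2HT+\Er{\hPhi,\ell}$; here $\Delta_g$ is the rough Laplacian $g^{ab}\nabla_a\nabla_b$ acting on the $(0,\ell)$-tensor $T$, the commutator $[\Delta_g,\nabla^{(\ell)}]\hPhi$ having already been moved into $\Er{\hPhi,\ell}$, cf. \eqref{Error.Phi}. Pairing with $T$ through $g$, the $-2HT$ term gives $-2H\,W|T|_g^2$ and $\Er{\hPhi,\ell}$ gives $W\,\Er{\hPhi,\ell}\star\nabla^{(\ell)}\hPhi$, while for the Laplacian one uses the pointwise Leibniz identity for the divergence,
\begin{equation*}
g^{ab}\langle T,\nabla_a\nabla_b T\rangle_g=\nabla^i\bigl[(\nabla_i\nabla^{(\ell)}\hPhi)\,\nabla^{(\ell)}\hPhi\bigr]-|\nabla^{(\ell+1)}\hPhi|_g^2\,,
\end{equation*}
(contraction over the $\ell$ free indices understood). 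This is the step that exhibits the parabolic gain: the non-negative density $W|\nabla^{(\ell+1)}\hPhi|_g^2$ is transposed to the left-hand side, and the perfect divergence $W\,\nabla^i[(\nabla_i\nabla^{(\ell)}\hPhi)\nabla^{(\ell)}\hPhi]$ stays on the right.

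It then remains to add up the $|T|_g^2$-proportional pieces — the weight term $\tfrac12(3+2\ell)H$, the metric term $-\ell H$, and the $-2H$ from the zeroth order term of \eqref{Phi.hat.eq.diff}: the $\ell$-dependence cancels, and transposing the remaining multiple of $H\,W|T|_g^2$ to the left-hand side produces the term displayed there. Summing the resulting identity over $\ell\leq N$ gives the claim. I do not expect a genuine obstacle: the computation is essentially forced once one recognizes the divergence identity above, and the only care needed is organizational — keeping track of which terms are principal and which are absorbed into $\star$-notation or into $\Er{\hPhi,\ell}$ (the $[\partial_s,\nabla^{(\ell)}]$, $[\Delta_g,\nabla^{(\ell)}]$, and lower-order-metric contributions, all catalogued in \eqref{Error.Phi}) — exactly as already done for $\hg$ in Lemma~\ref{lem:en.id.g}.
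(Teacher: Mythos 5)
Your proposal is correct and follows essentially the paper's own route: the paper likewise multiplies the commuted equation \eqref{Phi.hat.eq.diff} by $e^{3Hs}e^{2\ell Hs}\nabla^{(\ell)}\widehat{\Phi}$, integrates by parts in $\partial_s$ (the $(\Phi k-H\delta)$ correction arising from \eqref{1stvar.inv.2} exactly as in Lemma~\ref{lem:en.id.g}) and in $\nabla_i$ for the $\ell+2$ derivative term, and sums over $\ell\leq N$, so your reorganization via $\tfrac12\partial_s\{W|T|_g^2\}$ is the same computation. One bookkeeping caveat: your own count $\tfrac12(3+2\ell)H-\ell H-2H$ yields a zeroth-order coefficient $\tfrac{H}{2}$ after transposition, not the $2H$ displayed in \eqref{en.id.Phi} (at $\ell=0$ the scalar case confirms $2H-\tfrac{3H}{2}=\tfrac{H}{2}$), so you should not assert that it reproduces the displayed constant exactly --- the discrepancy sits in the paper's display and is harmless, since this positive term is simply dropped when passing to \eqref{main.en.est.Phi} in Proposition~\ref{prop:main.en.est}.
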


\begin{proof}
    We multiply equation \eqref{Phi.hat.eq.diff} with $e^{3Hs}e^{2\ell Hs}\nabla^{(\ell)}\widehat{\Phi}$, differentiate by parts in $\partial_s$, and differentiate by parts in $\nabla_i$ the term with $\ell+2$ spatial derivatives;
 contract all corresponding pairs of indices using the metric, and sum in $\ell\leq N$.
The correction term with factor $\Phi k- H\delta$ arises in the integration by parts in $\partial_s$ as in the proof of Lemma~\ref{lem:en.id.g}.
\end{proof}

\begin{lemma}[Energy identity for $\widehat{\Gamma}$  and $\widehat{k}$]
\label{lem:en.id}
%
%
\begin{align}\label{en.id.Gamma.k}
\notag\sum_{\ell\leq N}\bigg[\frac{1}{2}\partial_s\big\{\frac{1}{3}e^{2Hs}e^{2\ell Hs}|\nabla^{(\ell)}\widehat{\Gamma}|^2_g
+e^{2Hs}e^{2\ell Hs}|\nabla^{(\ell)}\widehat{k}|^2_g\big\}
+2He^{2Hs}e^{2\ell Hs}|\nabla^{(\ell)}\widehat{k}|^2_g\bigg]\\
\notag=\sum_{\ell\leq N}\mathrm{Div}_{\hGamma,\hk,\ell}
+\sum_{\ell\leq N}e^{2Hs}e^{2\ell Hs}\bigg[\frac{2}{3}\Phi g^{ab}g_{jj'}\big\{g^{j'c}\nabla_c\nabla^{(\ell)} \hPhi+(\mathrm{Error}_{\mathrm{div}\widehat k,\ell})^{j'}\big\}\star\nabla^{(\ell)}\widehat{\Gamma}_{ab}^j\\
-\frac{1}{3}\Phi g^{ii'}\big\{\nabla_{i'}\nabla^{(\ell)}\widehat{\Phi}+(\mathrm{Error}_{\mathrm{div}\widehat k,\ell})_{i'}\big\}\star\nabla^{(\ell)}\widehat{\Gamma}_{ia}^a
-\big\{\nabla^c\nabla^{(\ell)} \hPhi+(\mathrm{Error}_{\mathrm{div}\widehat k,\ell})^c\big\}\star\nabla^{(\ell)}\nabla_c\widehat{\Phi}\bigg]\\
\notag+\sum_{\ell\leq N}e^{2Hs}e^{2\ell Hs}\big\{(\Phi k-H\delta)\star\nabla^{(\ell)}\widehat{k}\star\nabla^{(\ell)}\widehat{k}+(\Phi k-H\delta)\star\nabla^{(\ell)}\widehat{\Gamma}\star\nabla^{(\ell)}\widehat{\Gamma}\\
\notag+\mathrm{Error}_{\widehat{\Gamma},\ell}\star\nabla^{(\ell)}\widehat{\Gamma}+\mathrm{Error}_{\widehat{k},\ell}\star\nabla^{(\ell)}\widehat{k}\big\},
\end{align}
where
\begin{multline}\label{eq:Div.l}
\mathrm{Div}_{\hGamma,\hk,\ell}=e^{2Hs}e^{2\ell Hs}\bigg[
\frac{2}{3}\Phi g^{ii'}g^{cc'}g_{aa'}\nabla_i\big[\nabla^{(\ell)}\widehat{\Gamma}_{i'c'}^{a'}\star\nabla^{(\ell)}\widehat{k}_c{}^a\big]\\
-\frac{1}{3}\Phi g^{ii'}\nabla_b\big[\nabla^{(\ell)}\widehat{\Gamma}_{i'j}^{b}\star\nabla^{(\ell)}\widehat{k}_i{}^j\big]
+\frac{1}{3}\Phi g^{ii'}\nabla_c\big[\nabla^{(\ell)}\widehat{k}_{i'}{}^{c}\star\nabla^{(\ell)}\widehat{\Gamma}_{ia}^a\big]\\-\frac{2}{3}\Phi g^{ab}g^{ii'}g_{jj'}\nabla_i\big[\nabla^{(\ell)}\widehat{k}_{i'}{}^{j'}\star\nabla^{(\ell)}\widehat{\Gamma}_{ab}^j\big]
+g^{ii'}g_{jj'}g^{cj}\nabla_i\big[\nabla^{(\ell)}\widehat{k}_{i'}{}^{j'}\star\nabla^{(\ell)}\nabla_c\widehat{\Phi}\big]\,.
\bigg]
\end{multline}
Here the $\star$ symbol in all schematic expressions above signifies that all relevant indices in these terms are contracted.
\end{lemma}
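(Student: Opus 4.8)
The plan is to run, at order $\ell$, the same manipulation that produced \eqref{eq:hk:hGamma:Er} in the zeroth order discussion of Section~\ref{subsec:en.id.discussion}, now starting from the commuted equations \eqref{Gamma.hat.eq.diff} and \eqref{k.hat.eq.diff} and carrying the commutators along. First I would contract \eqref{Gamma.hat.eq.diff} with $\tfrac13 e^{2Hs}e^{2\ell Hs}\nabla^{(\ell)}\widehat\Gamma$ and \eqref{k.hat.eq.diff} with $e^{2Hs}e^{2\ell Hs}\nabla^{(\ell)}\widehat k$, all free indices contracted with $g$ as in \eqref{T.norm}, and add the two resulting identities. On the left-hand side, write $\langle\partial_s\nabla^{(\ell)}\mathcal{T},\nabla^{(\ell)}\mathcal{T}\rangle_g=\tfrac12\partial_s|\nabla^{(\ell)}\mathcal{T}|_g^2$ up to contractions of $\partial_s g=2\Phi k$ against $\nabla^{(\ell)}\mathcal{T}\star\nabla^{(\ell)}\mathcal{T}$, and use $\partial_s(e^{2Hs}e^{2\ell Hs})=2(1+\ell)H\,e^{2Hs}e^{2\ell Hs}$ together with the splitting $\Phi k=H\delta+(\Phi k-H\delta)$. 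For $\widehat\Gamma$ the $H\delta$ contribution exactly cancels the contribution of the weight, while for $\widehat k$ the friction coefficient $3H$ in \eqref{k.hat.eq.diff} leaves the stated term $+2He^{2Hs}e^{2\ell Hs}|\nabla^{(\ell)}\widehat k|_g^2$; the $(\Phi k-H\delta)$ pieces become the $(\Phi k-H\delta)\star\nabla^{(\ell)}\widehat\Gamma\star\nabla^{(\ell)}\widehat\Gamma$ and $(\Phi k-H\delta)\star\nabla^{(\ell)}\widehat k\star\nabla^{(\ell)}\widehat k$ terms on the right of \eqref{en.id.Gamma.k}.

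The heart of the argument is the top-order coupling: the terms $\Phi\nabla\nabla^{(\ell)}\widehat k$ against $\nabla^{(\ell)}\widehat\Gamma$ (from \eqref{Gamma.hat.eq.diff}), $\Phi\nabla\nabla^{(\ell)}\widehat\Gamma$ against $\nabla^{(\ell)}\widehat k$ (from \eqref{k.hat.eq.diff}), and $g^{cj}\nabla_i\nabla^{(\ell)}\nabla_c\widehat\Phi$ against $\nabla^{(\ell)}\widehat k_i{}^j$. Exactly as in the derivation of \eqref{eq:hk:hGamma:Er}, once the factor $\tfrac13$ sits in front of the $\widehat\Gamma$-identity, and using the symmetry $\widehat\Gamma_{ic}^a=\widehat\Gamma_{ci}^a$, relabelling of dummy indices, and the fact that the scalar $\Phi$ and the covariantly constant metric factors pass through $\nabla$, a subcollection of these terms organizes into the exact divergence $\mathrm{Div}_{\widehat\Gamma,\widehat k,\ell}$ of \eqref{eq:Div.l}. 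The non-divergence remainder consists of the two trace type terms $\tfrac13\Phi g^{ii'}\nabla^{(\ell)}\widehat k_{i'}{}^j\,\nabla_j\nabla^{(\ell)}\widehat\Gamma_{ia}^a$ and $\tfrac23\Phi g^{ab}g^{ii'}g_{jj'}\nabla^{(\ell)}\widehat k_i{}^j\,\nabla_{i'}\nabla^{(\ell)}\widehat\Gamma_{ab}^{j'}$ together with the $\widehat\Phi$ term. For these I would integrate by parts once more, moving the outer derivative off $\nabla^{(\ell)}\widehat\Gamma$, respectively off $\nabla^{(\ell)}\nabla_c\widehat\Phi$, onto $\nabla^{(\ell)}\widehat k$ (this produces the last two divergences in \eqref{eq:Div.l}), which exposes the divergences $\nabla_j\nabla^{(\ell)}\widehat k_i{}^j$ and $g^{im}\nabla_m\nabla^{(\ell)}\widehat k_i{}^j$; substituting the commuted momentum constraints \eqref{mom.const.diff} turns them into $\nabla_i\nabla^{(\ell)}\widehat\Phi+(\mathrm{Error}_{\mathrm{div}\widehat k,\ell})_i$ and $g^{jc}\nabla_c\nabla^{(\ell)}\widehat\Phi+(\mathrm{Error}_{\mathrm{div}\widehat k,\ell})^j$, which is precisely what appears inside the three braces on the right of \eqref{en.id.Gamma.k}. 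In particular the $\widehat\Phi$ term becomes $-\{\nabla^c\nabla^{(\ell)}\widehat\Phi+(\mathrm{Error}_{\mathrm{div}\widehat k,\ell})^c\}\star\nabla^{(\ell)}\nabla_c\widehat\Phi$, a negative contribution at one order above the energy, to be absorbed later against the positive term in \eqref{en.id.Phi}. Finally the contributions of $(\mathrm{Error}_{\widehat\Gamma,\ell})$ and $(\mathrm{Error}_{\widehat k,\ell})$ from \eqref{Gamma.hat.eq.diff}, \eqref{k.hat.eq.diff} give the remaining error terms; all commutators $[\partial_s,\nabla^{(\ell)}]$, $[\nabla^{(\ell)},\nabla]$ and the lower order $\nabla\Phi$-terms generated by the integrations by parts are of the structure already recorded in \eqref{Error.Gamma}, \eqref{Error.k}, \eqref{Error.div.kk} and are absorbed there. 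Summing over $\ell\le N$ yields \eqref{en.id.Gamma.k}.

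The step I expect to be the main obstacle is the algebraic bookkeeping of this cancellation: isolating exactly which top-order terms coupling $\widehat\Gamma$ and $\widehat k$ combine pairwise into $\mathrm{Div}_{\widehat\Gamma,\widehat k,\ell}$, and checking that the leftover is precisely the two trace type expressions which the momentum constraint can absorb --- this is the higher-order incarnation of the observation in Section~\ref{subsec:en.id.discussion} that the boxed terms add up to a divergence, and the index relabellings must be carried out carefully. A secondary nuisance is confirming that every commutator and every derivative-of-$\Phi$ term produced along the way is genuinely of the form already appearing in the defined error quantities, so that no new terms need to be introduced on the right-hand side of \eqref{en.id.Gamma.k}.
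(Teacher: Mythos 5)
Your proposal is correct and follows essentially the same route as the paper: contract the commuted equations \eqref{Gamma.hat.eq.diff}, \eqref{k.hat.eq.diff} against $\tfrac13 e^{2Hs}e^{2\ell Hs}\nabla^{(\ell)}\widehat\Gamma$ and $e^{2Hs}e^{2\ell Hs}\nabla^{(\ell)}\widehat k$, integrate by parts in $\partial_s$ to generate the $(\Phi k-H\delta)$ terms and the net $+2H|\nabla^{(\ell)}\widehat k|^2_g$ coefficient, then integrate by parts the top-order couplings into $\mathrm{Div}_{\hGamma,\hk,\ell}$ and replace the exposed divergences of $\nabla^{(\ell)}\widehat k$ via the commuted constraints \eqref{mom.const.diff}, exactly as in the paper's proof (the higher-order version of the Section~\ref{subsec:en.id.discussion} computation). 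The only cosmetic difference is bookkeeping: in \eqref{eq:Div.l} the factors $\Phi$ and the metric contractions stay outside the divergences, so no new $\nabla\Phi$ terms arise at this stage beyond those already built into $\mathrm{Error}_{\widehat k,\ell}$, $\mathrm{Error}_{\widehat\Gamma,\ell}$.
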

\begin{proof}
This energy identity follows from the higher order equations of Lemma~\ref{lem:sec.var.commuted}:
First multiply
\[\text{\eqref{Gamma.hat.eq.diff}}\times\frac{1}{3}e^{2Hs}e^{2\ell Hs}\nabla^{(\ell)}\widehat{\Gamma}_{i'c'}^{a'}\]
and contract all corresponding pairs of indices $(i;i')$, $(j;j')$, $(c;c')$  using the metric.
Similarly multiply and contract
\[\text{\eqref{k.hat.eq.diff}}\times e^{2Hs}e^{2\ell Hs}\nabla^{(\ell)}\widehat{k}_{i'}{}^{j'}\,,\] 
and sum up the resulting equations.
After differentiating by parts in $\partial_s$,
we obtain the principal terms on the LHS of \eqref{en.id.Gamma.k},
and the error terms involving $\Phi k-H\delta$ on the RHS.

It remains to differentiate by parts the terms in the RHS which contain $\ell+1$ spatial derivatives of $\widehat{\Gamma}$ and $\widehat{k}$.  This produces on one hand the divergence terms $(\mathrm{Div}_{\hGamma,\hk,\ell})$ in \eqref{eq:Div.l},
and on the other hand, divergences of $\nabla^{(\ell)}k$ which we can replace using the higher order  constraint equations \eqref{mom.const.diff}.  For instance,
\begin{align*}
&g^{ii'}g_{jj'}e^{2Hs}e^{2\ell Hs}\nabla^{(\ell)}\widehat{k}_{i'}{}^{j'}\frac{1}{3}\Phi g^{cj}\nabla_c\nabla^{(\ell)}\widehat{\Gamma}_{ia}^a=\\
=&\,e^{2Hs}e^{2\ell Hs}\big\{\frac{1}{3}\Phi g^{ii'}\nabla_c\big[\nabla^{(\ell)}\widehat{k}_{i'}{}^{c}\star\nabla^{(\ell)}\widehat{\Gamma}_{ia}^a\big]-\frac{1}{3}\Phi g^{ii'}(\nabla_c\nabla^{(\ell)}\widehat{k}_{i'}{}^{c})\star\nabla^{(\ell)}\widehat{\Gamma}_{ia}^a\big\}\\
=&\,e^{2Hs}e^{2\ell Hs}\bigg[\frac{1}{3}\Phi g^{ii'}\nabla_c\big[\nabla^{(\ell)}\widehat{k}_{i'}{}^{c}\star\nabla^{(\ell)}\widehat{\Gamma}_{ia}^a\big]-\frac{1}{3}\Phi g^{ii'}\big\{\nabla_{i'}\nabla^{(\ell)}\widehat{\Phi}+(\mathrm{Error}_{\mathrm{div}\widehat k,\ell})_{i'}\big\}\star\nabla^{(\ell)}\widehat{\Gamma}_{ia}^a\bigg]
\end{align*}
The rest of the computations are similar and straightforward.
\end{proof}

\subsection{Error estimates}
\label{subsec:error.est}

In this section we estimate the error terms in \eqref{Error.gg},
\eqref{Error.Phi}, \eqref{Error.k} and in \eqref{Error.div.kk}, assuming the bootstrap assumptions \eqref{Boots} are valid. For this purpose, we first derive commutator estimates.
\begin{lemma}\label{lem:comm.est}
Let $\mathcal{T}$ be $\Sigma_s$-tangent $(n,m)$ tensor and let $\ell\leq N$. Then it satisfies:
\begin{subequations}
    \label{comm.est}
\begin{align}
e^{\ell Hs}\|[\partial_s,\nabla^{(\ell)}]\mathcal{T}\|_{L^2(\Sigma_s,g)}\leq&\,Ce^{-Hs}\|\mathcal{T}\|_{H^{\ell-1}(\Sigma_s,g)},\label{comm.est.s}\\
e^{\ell Hs}\|[\nabla_i,\nabla^{(\ell)}]\mathcal{T}\|_{L^2(\Sigma_s,g)}\leq&\,Ce^{-Hs} \|\mathcal{T}\|_{H^{\ell-1}(\Sigma_s,g)},\label{comm.est.nabla}\\
 e^{(\ell+1)Hs}\|[\Delta_g,\nabla^{(\ell)}]\mathcal{T}\|_{L^2(\Sigma_s,g)}\leq&\,Ce^{-Hs}\|\mathcal{T}\|_{H^\ell(\Sigma_s,g)}, \label{comm.est.lap}
\end{align}
\end{subequations}
for all $s\in[s_0,s_b)$.
\end{lemma}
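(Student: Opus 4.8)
The plan is to reduce all three estimates in \eqref{comm.est} to (i) schematic iterated‑commutator identities and (ii) two uniform pointwise bounds on $\partial_s\Gamma$ and the spatial Riemann tensor $R=R[g]$; once these are in hand only weighted H\"older bookkeeping remains. First I would record the one‑step commutation rules: for any $\Sigma_s$‑tangent tensor $\mathcal{T}$ one has $[\partial_s,\nabla_a]\mathcal{T}=(\partial_s\Gamma)\star\mathcal{T}$ with $\partial_s\Gamma^a_{ic}$ given by the tensorial formula \eqref{Gamma.eq}, $[\nabla_a,\nabla_b]\mathcal{T}=R\star\mathcal{T}$ by the Ricci identity, and hence, using $\nabla g^{-1}=0$, $[\Delta_g,\nabla_a]\mathcal{T}=(\nabla R)\star\mathcal{T}+R\star\nabla\mathcal{T}$. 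Iterating over $\nabla^{(\ell)}=\nabla\circ\cdots\circ\nabla$ and expanding by Leibniz gives
\[ [\partial_s,\nabla^{(\ell)}]\mathcal{T}=\sum_{j+q=\ell-1}\nabla^{(j)}(\partial_s\Gamma)\star\nabla^{(q)}\mathcal{T}\,,\qquad [\nabla_a,\nabla^{(\ell)}]\mathcal{T}=\sum_{p+q=\ell-1}\nabla^{(p)}R\star\nabla^{(q)}\mathcal{T}\,, \]
\[ [\Delta_g,\nabla^{(\ell)}]\mathcal{T}=\sum_{p+q=\ell}\nabla^{(p)}R\star\nabla^{(q)}\mathcal{T}\,, \]
all contractions taken with $g$; note that in each sum at most $\ell-1$ (respectively $\ell$) derivatives fall on $\mathcal{T}$, matching the Sobolev orders on the right of \eqref{comm.est}.

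The heart of the argument is the claim that, under the bootstrap assumptions \eqref{Boots},
\[ \|\partial_s\Gamma\|_{W^{N-2,\infty}(\Sigma_s,g)}+\|R\|_{W^{N-2,\infty}(\Sigma_s,g)}\le Ce^{-2Hs}\,, \]
together with the top‑order bounds $\|\nabla^{(N-1)}(\partial_s\Gamma)\|_{L^2(\Sigma_s,g)}+\|\nabla^{(N-1)}R\|_{L^2(\Sigma_s,g)}\le Ce^{-(N+1)Hs}\varepsilon$. For $\partial_s\Gamma$ I would use \eqref{Gamma.eq}, which writes it as $\nabla$ of $\Phi k$ (as a $(1,1)$ tensor); decomposing $\Phi k_c{}^a=H\delta_c{}^a+(\Phi k_c{}^a-H\delta_c{}^a)$ and using the crucial identity $\nabla(H\delta)=0$ discards the only $\mathcal{O}(1)$‑sized term, leaving a sum of terms each carrying a factor of $\tPhi-1$, $\tk-H\delta$, $\hPhi$ or $\hk$; the reference factors and all their derivatives are $\mathcal{O}(e^{-2Hs})$ by Lemma~\ref{lem:ref.properties} and Definition~\ref{def:Os}, the perturbative factors are controlled by Lemma~\ref{lem:WN-2.infty}, and since $W^{N-2,\infty}(\Sigma_s,g)$ weights the $\ell$‑th derivative with $e^{\ell Hs}$, each $\nabla$ applied to $\hk$ or $\hPhi$ produces an additional $e^{-Hs}$, whence the stated decay. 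For the spatial curvature I would write $R[g]=\tilde R[\tg]+\tnabla\hGamma+\hGamma\star\hGamma$ (difference‑of‑curvatures identity): the perturbative piece decays by Lemma~\ref{lem:WN-2.infty}, while $\tilde R[\tg]$ has $\mathcal{O}(1)$ components (since $\tg_{ij}=\mathcal{O}(e^{2Hs})$, $\tg^{ij}=\mathcal{O}(e^{-2Hs})$, $\tGamma=\mathcal{O}(1)$ with $\mathcal{O}(1)$ derivatives) and is therefore $\mathcal{O}(e^{-2Hs})$ in the $g$‑norm because it carries a net of two lower indices (Lemma~\ref{lem:T.norm}) — equivalently, the expanding leaves are asymptotically flat, as one also sees from the contracted Gauss equation \eqref{eq:gauss:contracted} and Lemma~\ref{lem:ref.properties}. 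The top‑order derivatives of $\partial_s\Gamma$ and $R$ are placed in $L^2$ using the energy control on $\nabla^{(N)}\hk$, $\nabla^{(N)}\hPhi$, $\nabla^{(N)}\hGamma$ coming from \eqref{Boots}.

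Finally I would assemble the estimate by applying H\"older to each schematic term $\nabla^{(p)}(\mathrm{coeff})\star\nabla^{(q)}\mathcal{T}$: when $p\le N-2$, put the coefficient in $W^{N-2,\infty}(\Sigma_s,g)$ and $\nabla^{(q)}\mathcal{T}$ in $L^2(\Sigma_s,g)$, so that by the bounds above, Lemma~\ref{lem:T.norm} and \eqref{vol.g} the prefactor $e^{(\ell\text{ or }\ell+1)Hs}$ times $e^{-2Hs}$ times the remaining derivative weight collapses to $Ce^{-Hs}\|\mathcal{T}\|_{H^{\ell-1}}$ (respectively $Ce^{-Hs}\|\mathcal{T}\|_{H^\ell}$); the single borderline term in which the coefficient sits at the top order — which forces $q$ to be at most $1$ since $\ell\le N$ and $N\ge 4$ — is handled by swapping roles: the coefficient in $L^2$ via the energy‑type bound and $\nabla^{(q)}\mathcal{T}$ in $L^\infty$ via the Sobolev embedding Lemma~\ref{lem:Hw}. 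Summing the finitely many terms yields \eqref{comm.est}. The step I expect to be the main obstacle is precisely the pair of pointwise bounds of the second paragraph: a priori $\partial_s\Gamma$ and $R$ are of size $\mathcal{O}(1)$, so one must exploit the structural cancellations — $\nabla\delta=0$ in \eqref{Gamma.eq}, and the asymptotic flatness of the leaves — together with the built‑in $e^{Hs}$ gain per spatial derivative in the weighted norms to see that no genuinely $\mathcal{O}(1)$ contribution survives; a secondary delicate point is the Laplacian commutator, where $\nabla^{(\ell)}R$ involves $\ell+1$ derivatives of $\hGamma$, i.e.\ sits at the very edge of the regularity controlled by $\mathcal{E}_N$, which is why the top‑order curvature derivative must be kept in $L^2$ and paired (via Lemma~\ref{lem:Hw}) against $\mathcal{T}$ rather than the other way around. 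Everything else is routine weight bookkeeping and repeated use of Lemmas~\ref{lem:T.norm},~\ref{lem:Hw},~\ref{lem:WN-2.infty}.
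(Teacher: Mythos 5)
Your strategy is essentially the paper's: the same schematic commutator expansions (with $\partial_s\Gamma\sim\nabla(\Phi k)$ from \eqref{Gamma.eq} and curvature/Christoffel coefficients for the spatial commutators, $[\Delta_g,\nabla^{(\ell)}]=\nabla^i[\nabla_i,\nabla^{(\ell)}]+[\nabla^i,\nabla^{(\ell)}]\nabla_i$), the same key decay input obtained from $\Phi k-H\delta$ via $\nabla\delta=0$ and Lemmas~\ref{lem:ref.properties}, \ref{lem:T.norm}, \ref{lem:WN-2.infty}, and the same low/high splitting in which over-differentiated coefficients are placed in weighted $L^2$ through the energy while $\nabla^{(q)}\mathcal{T}$ is placed in $L^\infty$ through Lemma~\ref{lem:Hw}; phrasing the coefficients through $R$ and the curvature-difference identity instead of through $\Gamma$ is only a cosmetic change.

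Two points in your bookkeeping do not close as written. Minor: the claimed bound $\|\partial_s\Gamma\|_{W^{N-2,\infty}(\Sigma_s,g)}+\|R\|_{W^{N-2,\infty}(\Sigma_s,g)}\le Ce^{-2Hs}$ needs $N-1$ pointwise derivatives of $\hk,\hPhi,\hGamma$, while Lemma~\ref{lem:WN-2.infty} supplies only $N-2$; the honest pointwise claim is in $W^{N-3,\infty}$, and the extra borderline cases are absorbed by your own swap mechanism (harmless since $N\ge4$). More substantively: in your formula $[\Delta_g,\nabla^{(\ell)}]\mathcal{T}=\sum_{p+q=\ell}\nabla^{(p)}R\star\nabla^{(q)}\mathcal{T}$, the term with $p=\ell=N$, $q=0$ contains $\nabla^{(N)}R$, i.e.\ $N+1$ derivatives of $\hGamma$ (equivalently $N+2$ of $\hg$). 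This is covered neither by your pointwise bound nor by your top-order $L^2$ bound, which stops at $\nabla^{(N-1)}R$, and it is genuinely beyond what $\mathcal{E}_N$ and the bootstrap control, so the step fails at $\ell=N$ for a general tensor. The estimate survives in the only way it is used: \eqref{comm.est.lap} is applied solely to the scalar $\hPhi$ in \eqref{Error.Phi}, and for a scalar the innermost commutator $[\nabla_i,\nabla_j]\hPhi=0$, so at most $\ell-1\le N-1$ derivatives of $R$ ever appear and your $L^2$ bound suffices. You should therefore either restrict the Laplacian estimate to scalars (or assume one more controlled derivative of the connection) or record this scalar cancellation explicitly; the paper's own one-sentence justification of \eqref{comm.est.nabla}--\eqref{comm.est.lap} is no more detailed on this point, so with these repairs your argument matches it.
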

\begin{proof}
First, we derive formulas for the commutators.
Commuting $\partial_s$ with $\nabla$ applied to $\mathcal{T}$ gives:
\begin{align*}
\partial_s\nabla_b\mathcal{T}^{j_1\ldots j_n}_{i_1\ldots i_m}
=\,\partial_s\big\{\partial_b\mathcal{T}^{j_1\ldots j_n}_{i_1\ldots i_m}+&\Gamma_{bc}^{j_1}\mathcal{T}^{c\ldots j_n}_{i_1\ldots i_m}+\ldots+\Gamma_{bc}^{j_n}\mathcal{T}^{j_1\ldots c}_{i_1\ldots i_m}\\
&-\Gamma_{bi_1}^c\mathcal{T}^{j_1\ldots j_n}_{c\ldots i_m}-\ldots-\Gamma_{bi_m}^c\mathcal{T}^{j_1\ldots j_n}_{i_1\ldots c}\big\}\\
=\,\nabla_b\partial_s\mathcal{T}^{j_1\ldots j_n}_{i_1\ldots i_m}
+&\partial_s\Gamma_{bc}^{j_1}\mathcal{T}^{c\ldots j_n}_{i_1\ldots i_m}+\ldots+\partial_s\Gamma_{bc}^{j_n}\mathcal{T}^{j_1\ldots c}_{i_1\ldots i_m}\\
&-\partial_s\Gamma_{bi_1}^c\mathcal{T}^{j_1\ldots j_n}_{c\ldots i_m}-\ldots-\partial_s\Gamma_{bi_m}^c\mathcal{T}^{j_1\ldots j_n}_{i_1\ldots c}.
\end{align*}
Using \eqref{Gamma.eq} to replace $\partial_s\Gamma$, we have schematically
\begin{align*}
[\partial_s,\nabla]\mathcal{T}
=\nabla(\Phi k)\star\mathcal{T},
\end{align*}
and by induction on $\ell$:
\begin{align*}
[\partial_s,\nabla^{(\ell)}]\mathcal{T}
=\sum_{\ell_1+\ell_2=\ell,\,\ell_2<\ell}\nabla^{\ell_1}(\Phi k)\star\nabla^{\ell_2}\mathcal{T}
\end{align*}
For the estimate \eqref{comm.est.s},
let us make a case distinction for the terms in this sum, depending on $\ell_1$.
For $0<\ell_1\leq N-2$ we have
\begin{equation}
   e^{\ell Hs} \|\nabla^{\ell_1}(\Phi k)\star\nabla^{\ell_2}\mathcal{T}\|_{L^2}\leq e^{\ell_1 Hs}\|\nabla^{(\ell_1)}(\Phi k)\|_{L^\infty} \|\mathcal{T}\|_{H^{\ell_2}}\leq C e^{-Hs} \|\mathcal{T}\|_{H^{\ell-1}}\,,
\end{equation}
where we have used Lemma~\ref{lem:WN-2.infty}, Lemma~\ref{lem:ref.properties}, and Lemma \ref{lem:norm.equiv}.
For $\ell_1>N-2$ the estimate still holds for those terms in 
\begin{equation}
    \nabla^{(\ell_1)}\bigl(\Phi k\bigr)= \nabla^{(\ell_1)}\bigl(\hPhi \hk+\tPhi \hk+\hPhi \tk+\tPhi \tk\bigr)\,,
\end{equation}
involving at most $N-2$ derivatives of $\hPhi$,
and $\hk$. For the remaining terms, with more that $N-2$ derivatives of $\hPhi$ or $\hk$,
we use the bootstrap assumption \eqref{Boots} on the energy,  and Lemma~\ref{Sob} for $\mathcal{T}$.
For example:
\begin{equation}
    e^{\ell Hs} \|(\nabla^{(\ell_1)}\hPhi) \tk \star\nabla^{\ell_2}\mathcal{T}\|_{L^2}\leq \|\hPhi\|_{H^{\ell_1}}\|\mathcal{T}\|_{W^{\ell_2,\infty}}\leq C e^{-\frac{3}{2}Hs}\sqrt{\mathcal{E}_{N}(s)}\|\mathcal{T}\|_{H^{\ell_2+2}}\,,
\end{equation}
where $\ell_1\ge N-1\ge3$ and $\ell_2+2=\ell-\ell_1+2\leq\ell-1$.

For the commutation of spatial derivatives we write schematically: 
\begin{align}
[\nabla_i,\nabla^{(\ell)}]\mathcal{T}=&\,\sum_{\ell_1+\ell_2=\ell,\,\ell_2<\ell}\nabla^{(\ell_1)}\Gamma\star\nabla^{(\ell_2)}\mathcal{T}
+\!\!\!\!\sum_{\substack{\ell_1+\ell_2+\ell_3=\ell-1\\\ell_3<\ell}}\!\!\!\!\nabla^{(\ell_1)}\Gamma\star\nabla^{(\ell_2)}\Gamma\star\nabla^{(\ell_3)}\mathcal{T}\\
[\Delta_g,\nabla^{(\ell)}]\mathcal{T}=&\,\nabla^i[\nabla_i,\nabla^{(\ell)}]\mathcal{T}+[\nabla^i,\nabla^{(\ell)}]\nabla_i\mathcal{T}
\end{align}
The estimates \eqref{comm.est.nabla} and \eqref{comm.est.lap} then follow by estimating the above expressions using the bootstrap assumptions \eqref{Boots} and the Lemma \ref{lem:Hw}. 
\end{proof}
Next, we estimate the error terms coming from \eqref{frak.G}, \eqref{frak.K}, \eqref{frak.F}.
\begin{lemma}\label{lem:frak.est}
The expressions $\mathfrak{G}_{ic}^a,\mathfrak{K}_i{}^j,\mathfrak{F}$ satisfy:
\begin{subequations} \label{frak.est}
\begin{align}
 e^{Hs}e^{\ell Hs}\|\nabla^{(\ell)}\mathfrak{G}\|_{L^2(\Sigma_s,g)}\leq&\,Ce^{-\frac{1}{2}Hs}\sqrt{\mathcal{E}_N(s)}+Ce^{-\frac{1}{2}Hs}e^{\frac{3}{2}Hs}e^{\ell Hs}\|\nabla^{(\ell+1)}\widehat{\Phi}\|_{L^2(\Sigma_s,g)}\label{frak.est.G}\\
e^{Hs}e^{\ell Hs}\|\nabla^{(\ell)}\mathfrak{K}\|_{L^2(\Sigma_s,g)}\leq&\,Ce^{-\frac{1}{2}Hs}\sqrt{\mathcal{E}_N(s)}\label{frak.est.K}\\
 e^{\frac{3}{2}Hs}e^{\ell Hs}\|\nabla^{(\ell)}\mathfrak{F}\|_{L^2(\Sigma_s,g)}\leq&\,Ce^{-\frac{1}{2}Hs}\sqrt{\mathcal{E}_N(s)} \label{frak.est.F}
\end{align}
\end{subequations}
for all $s\in[s_0,s_b)$ and $\ell\leq N$.
\end{lemma}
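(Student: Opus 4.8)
The plan is to estimate each of $\mathfrak{G},\mathfrak{K},\mathfrak{F}$ term by term, exploiting the fact that every summand in \eqref{frak.G}, \eqref{frak.K}, \eqref{frak.F} is a product in which at least one factor is a \emph{renormalised} (hatted) variable --- which carries a smallness/decay gain from the bootstrap assumptions \eqref{Boots} --- multiplied by background quantities controlled by Lemma~\ref{lem:ref.properties} and by the ``good'' variables of $g$ controlled by Lemmas~\ref{lem:T.norm}, \ref{lem:WN-2.infty}, \ref{lem:norm.equiv}. The general mechanism is: commute $\nabla^{(\ell)}$ through the product by Leibniz, splitting $\nabla^{(\ell)}(A\star B)=\sum_{\ell_1+\ell_2=\ell}\nabla^{(\ell_1)}A\star\nabla^{(\ell_2)}B$; in each summand put the factor with the larger number of derivatives in $L^2(\Sigma_s,g)$ and the other factor in $W^{\ell_i,\infty}(\Sigma_s,g)$, using the Sobolev embedding of Lemma~\ref{lem:Hw} (which costs two derivatives, hence the requirement $N\geq 4$). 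The weight-bookkeeping is the heart of the matter: each $\nabla$ costs a factor $e^{Hs}$ in the norm \eqref{Hw}, the hatted factors contribute $\sqrt{\mathcal{E}_N(s)}$ with the specific $e^{Hs}$-weights recorded in \eqref{en.def}, and the background factors $\tPhi-1$, $\tk_i{}^j-H\delta_i{}^j$ contribute the $\mathcal{O}(e^{-2Hs})$ decay of Lemma~\ref{lem:ref.properties}. One then checks that the total power of $e^{Hs}$ on the right is strictly negative, yielding the claimed $e^{-\frac12 Hs}$.

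Concretely I would proceed in the order $\mathfrak{K}$, $\mathfrak{F}$, $\mathfrak{G}$. For $\mathfrak{K}$ in \eqref{frak.K}: the first line consists of products like $\hPhi\, k_l{}^l k_i{}^j$ (two or three hatted/background $k$'s, at least one hat), $\hg^{cj}\partial_i\partial_c\tPhi$, $\hg^{cj}\tGamma\partial\tPhi$, $g^{cj}\hGamma\partial\tPhi$; the remaining lines are of the form $\Phi\,g^{\bullet\bullet}\star(\hGamma\star\tGamma$ or $\hnabla\tGamma)$ and $(\hPhi\,g^{\bullet\bullet}+\tPhi\,\hg^{\bullet\bullet})\star\tGamma\star\tGamma$ etc. In every case exactly one hatted factor appears; estimating that factor in $H^{\ell}$ (contributing $e^{-\frac32 Hs}\sqrt{\mathcal{E}_N}$ for $\hPhi$, $e^{-Hs}\sqrt{\mathcal{E}_N}$ for $\hGamma,\hk$, and $\sqrt{\mathcal{E}_N}$ for $\hg,\hg^{-1}$) and the rest in $W^{\bullet,\infty}$ via Lemma~\ref{lem:WN-2.infty} or Lemma~\ref{lem:ref.properties}, one finds the worst contribution is from the $\hg$-type terms (no intrinsic decay) paired with $\partial^2\tPhi=\mathcal{O}(e^{-2Hs})$, giving $e^{Hs}e^{\ell Hs}\cdot e^{-2Hs}\cdot\|\mathcal{T}\|\lesssim e^{-\frac12 Hs}\sqrt{\mathcal{E}_N}$ after the $e^{Hs}e^{\ell Hs}$ prefactor; all other terms are better. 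The factor-of-$\frac12$ in the exponent is the margin left after the background decay beats the derivative weights; I would just verify this is uniform over all summands. The estimate for $\mathfrak{F}$ in \eqref{frak.F} is entirely analogous, with the slightly better prefactor $e^{\frac32 Hs}$ absorbing one less derivative since $\mathfrak{F}$ enters the \emph{parabolic} equation for $\hPhi$; here the quadratic term $\Phi\,\hk_i{}^j\hk_j{}^i$ carries $e^{-2Hs}\sqrt{\mathcal{E}_N}\cdot\sqrt{\mathcal{E}_N}$, which is even smaller.

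The one place that is genuinely different is $\mathfrak{G}$ in \eqref{frak.G}: the terms $k_c{}^a\nabla_i\hPhi$, $k_i{}^a\nabla_c\hPhi$, $g^{ab}g_{cj}k_i{}^j\nabla_b\hPhi$ in the third line involve $\nabla\hPhi$, i.e.\ \emph{one derivative of $\hPhi$ more} than is controlled by $\mathcal{E}_N$ at top order. When $\ell=N$, commuting $\nabla^{(N)}$ onto such a term produces $\nabla^{(N+1)}\hPhi$, which cannot be bounded by $\sqrt{\mathcal{E}_N}$. This is exactly why the right-hand side of \eqref{frak.est.G} is allowed to retain the explicit term $e^{-\frac12 Hs}e^{\frac32 Hs}e^{\ell Hs}\|\nabla^{(\ell+1)}\hPhi\|_{L^2}$ --- it is not absorbed here but later, against the positive term $e^{3Hs}e^{2\ell Hs}|\nabla^{(\ell+1)}\hPhi|_g^2$ on the left of the energy identity \eqref{en.id.Phi}, as flagged in Remark~\ref{rem:symm} and in the discussion around \eqref{eq:hk:hGamma:Er}. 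So for $\mathfrak{G}$ the plan is: isolate precisely the $\nabla\hPhi$-bearing summands, bound them by the advertised top-order term with the sharp coefficient $e^{-\frac12 Hs}$ (this coefficient comes from $k\sim H\delta+\mathcal{O}(e^{-2Hs})$: the bulk $H\delta$ piece is what matches the $e^{\frac32Hs}e^{\ell Hs}$ weighting, and the smallness is supplied by pairing with the background... more carefully, the coefficient is arranged so that after it is absorbed by the coercive $\hPhi$ term, only an $\mathcal{E}_N$-borrowing of relative size $e^{-\frac12 Hs}$ remains), and bound all \emph{other} summands of $\mathfrak{G}$ --- which are again single-hat products, now of $\hk$, $\hGamma$, $\hPhi$ times $\tnabla\tk$, $\tnabla\tPhi$, $\tk$, $\tPhi$ --- exactly as in the $\mathfrak{K}$ case, using Lemma~\ref{lem:WN-2.infty}, Lemma~\ref{lem:ref.properties}, Lemma~\ref{lem:Hw}.

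The main obstacle, then, is not any single estimate but the careful weight accounting across the many summands, together with the clean separation in $\mathfrak{G}$ of the one ``bad'' top-order term $\nabla^{(\ell+1)}\hPhi$ that must be carried forward rather than closed --- and checking that the coefficient of that carried term is small enough ($e^{-\frac12 Hs}$, not merely $O(1)$) for the later absorption against \eqref{en.id.Phi} to succeed when the $\hPhi$-energy equation is multiplied by a large constant, as in the proof of Theorem~\ref{thm:Boots}.
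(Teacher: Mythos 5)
Your overall strategy (Leibniz splitting, top factor in the energy, low factors in $W^{\bullet,\infty}$ via Lemmas~\ref{lem:WN-2.infty} and \ref{lem:ref.properties}, and the correct identification of the $\nabla^{(\ell+1)}\widehat{\Phi}$ term in $\mathfrak{G}$ that must be carried forward with coefficient $e^{-\frac12 Hs}e^{\frac32 Hs}$ rather than closed) is the paper's strategy. But there is a genuine gap in your treatment of the first line of \eqref{frak.G}. The terms $\Phi\hnabla_i\tk_c{}^a+\Phi\hnabla_c\tk_i{}^a-g^{ab}g_{cj}\Phi\hnabla_b\tk_i{}^j$ are, schematically, $\Phi\,\hGamma\star\tk$, and in your scheme the background factor here is $\tk$ itself, which by Lemma~\ref{lem:ref.properties} is $H\delta+\mathcal{O}(e^{-2Hs})$, i.e.\ of size $O(1)$ with \emph{no} decay. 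Pairing $\hGamma$ (whose energy weight in \eqref{en.def} is $e^{Hs}$, so $\|\hGamma\|_{H^\ell}\lesssim e^{-Hs}\sqrt{\mathcal{E}_N(s)}$) with an $O(1)$ background factor gives $e^{Hs}e^{\ell Hs}\|\nabla^{(\ell)}(\Phi\hGamma\star\tk)\|_{L^2}\lesssim\sqrt{\mathcal{E}_N(s)}$ with no $e^{-\frac12 Hs}$ gain, which is not enough: feeding a non-decaying $C\sqrt{\mathcal{E}_N}$ into \eqref{main.en.est.Gamma.k} and then \eqref{Boots:en.est} produces $\partial_s\mathcal{E}_N\lesssim\mathcal{E}_N$ and Gronwall only yields exponentially growing energy, so the bootstrap does not close. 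The missing observation is the cancellation the paper exploits: since $\hnabla(H\delta)=\nabla(H\delta)-\widetilde{\nabla}(H\delta)=0$, one has $\hnabla\tk=\hnabla(\tk-H\delta)=\hGamma\star(\tk-H\delta)$, so the identity part of $\tk$ drops out and these terms carry the full $\mathcal{O}(e^{-2Hs})$ decay of $\tk-H\delta$, giving $Ce^{-2Hs}\sqrt{\mathcal{E}_N(s)}$, far better than needed. Your recipe as written (``single-hat products of $\hk,\hGamma,\hPhi$ times $\tnabla\tk,\tnabla\tPhi,\tk,\tPhi$, estimated exactly as in the $\mathfrak{K}$ case'') skips exactly this point.

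A secondary inaccuracy: your identification of the borderline term in \eqref{frak.K} is wrong. The term $\hg^{cj}\partial_i\partial_c\tPhi$ you single out is in fact very small (the background factor $\widetilde{\nabla}^{(2)}\tPhi$ decays like $e^{-2Hs}$ in components, and after contracting with $g^{-1}$ the $g$-norm gains further powers, giving at worst $e^{-Hs}\sqrt{\mathcal{E}_N}$ and in fact better); the terms that saturate the stated rate are those in the first line of \eqref{frak.K} proportional to $\hPhi$ with $O(1)$ coefficients, e.g.\ $\hPhi k_l{}^l k_i{}^j$ and $\Lambda\delta_i{}^j\hPhi$, which give exactly $e^{Hs}\|\hPhi\|_{H^\ell}\lesssim e^{-\frac12 Hs}\sqrt{\mathcal{E}_N(s)}$ because of the sub-optimal $e^{\frac32 Hs}$ weight on $\hPhi$ in \eqref{en.def}. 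Your statement ``all other terms are better'' is therefore false, although it does not affect the validity of \eqref{frak.est.K} since every term still obeys the stated bound when your recipe is applied with the correct weights. Relatedly, the coefficient $e^{-\frac12 Hs}$ in front of the $\nabla^{(\ell+1)}\hPhi$ term in \eqref{frak.est.G} is not ``arranged'' through any pairing with background decay; it is pure bookkeeping, $e^{Hs}=e^{-\frac12 Hs}\cdot e^{\frac32 Hs}$, i.e.\ the mismatch between the natural $e^{Hs}$ prefactor of $\mathfrak{G}$ and the $e^{\frac32 Hs}$ weight at which $\nabla^{(\ell+1)}\hPhi$ is controlled by the parabolic term in the $\hPhi$ energy identity.
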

\begin{proof}
We begin with \eqref{frak.est.G}.
Going back to \eqref{frak.G}, we notice the following cancellations in the first line:
\begin{equation}
\begin{split}
\Phi\widehat\nabla_i\widetilde k_c{}^a
&+\Phi\widehat\nabla_c\widetilde k_i{}^a-g^{ab}g_{cj}\Phi\widehat\nabla_b \widetilde k_i{}^j=\\
=&\,\Phi\widehat{\Gamma}_{ib}^a\widetilde{k}_c{}^b
-\Phi\widehat{\Gamma}_{ic}^b\widetilde{k}_b{}^a
+\Phi\widehat{\Gamma}_{cb}^a\widetilde{k}_i{}^b
-\Phi\widehat{\Gamma}_{ic}^b\widetilde{k}_b{}^a
-g^{ab}g_{cj}\Phi(\widehat{\Gamma}_{bc}^j\widetilde{k}_i{}^c-\widehat{\Gamma}_{bi}^c\widetilde{k}_c{}^j)\\
=&\,\Phi\widehat{\Gamma}_{ib}^a(\widetilde{k}_c{}^b-H\delta_c{}^b)
-\Phi\widehat{\Gamma}_{ic}^b(\widetilde{k}_b{}^a-H\delta_b{}^a)
+\Phi\widehat{\Gamma}_{cb}^a(\widetilde{k}_i{}^b-H\delta_i{}^b)
-\Phi\widehat{\Gamma}_{ic}^b(\widetilde{k}_b{}^a-H\delta_b{}^a)\\
&-g^{ab}g_{cj}\Phi\big[\widehat{\Gamma}_{bc}^j(\widetilde{k}_i{}^c-H\delta_i{}^c)-\widehat{\Gamma}_{bi}^c(\widetilde{k}_c{}^j-H\delta_c{}^j)\big]
\end{split}
\end{equation}
Using the bootstrap assumptions \eqref{Boots} and Lemma~\ref{lem:WN-2.infty},
as well as the properties of the reference metric in Lemma~\ref{lem:ref.properties},
we deduce that
\begin{align*}
e^{Hs}e^{\ell Hs}\|\nabla^{(\ell)}(\Phi\widehat\nabla_i\widetilde k_c{}^a
+\Phi\widehat\nabla_c\widetilde k_i{}^a-g^{ab}g_{cj}\Phi\widehat\nabla_b \widetilde k_i{}^j)\|_{L^2(\Sigma_s,g)}\leq Ce^{-2Hs}\sqrt{\mathcal{E}_N(s)},
\end{align*}
which of course is much better than the asserted bound for $\mathfrak{G}_{ic}^a$. The least decaying terms are in the third line of \eqref{frak.G}, which we can write as
\begin{equation}
\begin{split}
    k_c{}^a\nabla_i\widehat\Phi
+k_i{}^a\nabla_c\widehat\Phi-g^{ab}g_{cj}k_i{}^j\nabla_b\widehat\Phi=&\hk_c{}^a\nabla_i\widehat\Phi
+\hk_i{}^a\nabla_c\widehat\Phi-g^{ab}g_{cj}\hk_i{}^j\nabla_b\widehat\Phi\\
+(\tk_c{}^a-H\delta_c{}^a)\nabla_i\widehat\Phi
&+(\tk_i{}^a-H\delta_i{}^a)\nabla_c\widehat\Phi-g^{ab}g_{cj}(\tk_i{}^j-H\delta_i{}^j)\nabla_b\widehat\Phi\\
&+\delta_c{}^a\nabla_i\widehat\Phi
+\delta_i{}^a\nabla_c\widehat\Phi-g^{ab}g_{ci}\nabla_b\widehat\Phi\,.
\end{split}
\end{equation}
Thus,
\begin{multline}
e^{Hs}e^{\ell Hs}\|\nabla^{(\ell)}(k_c{}^a\nabla_i\widehat\Phi
+k_i{}^a\nabla_c\widehat\Phi-g^{ab}g_{cj}k_i{}^j\nabla_b\widehat\Phi)\|_{L^2(\Sigma_s,g)}\\
\leq\, Ce^{-Hs}\sqrt{\mathcal{E}_N(s)}
+Ce^{-\frac{1}{2}Hs}e^{\frac{3}{2}Hs}e^{\ell Hs}\|\nabla^{(\ell+1)}\widehat{\Phi}\|_{L^2(\Sigma_s,g)}\,.
\end{multline}
The rest of the terms in \eqref{frak.G} satisfy better higher order estimates and are treated similarly.

For the inequality \eqref{frak.est.K}, the least decaying terms come from the first line of \eqref{frak.K}:
\begin{align*}
e^{Hs}e^{\ell Hs}\|\nabla^{(\ell)}(\widehat{\Phi}k_l{}^lk_i{}^j+\widetilde{\Phi}\widehat{\Phi}k_i{}^j-\Lambda\delta_i{}^j\widehat{\Phi})\|_{L^2(\Sigma_s,g)}\leq Ce^{-\frac{1}{2}Hs}\sqrt{\mathcal{E}_N(s)}
\end{align*}
The rest of the terms in \eqref{frak.K} satisfy better higher order estimates.

Finally, we turn to the estimate \eqref{frak.est.F} with $\mathfrak{F}$ given by \eqref{frak.F}.
While for the first two terms in $\mathfrak{F}$,
\begin{multline*}
    e^{3Hs}e^{2\ell Hs}\int_{\Sigma_s}f^2(t) |\nabla^{(\ell)}(\hPhi \hk_i{}^j \tk_j{}^i +\hPhi\tk_i{}^j\hk_j{}^i)|_g^2  e^{-3Hs}\vol{g}\leq\\
    \leq C e^{Hs}\|\hPhi\|^2_{W^{N-2,\infty}} e^{2Hs}\|\hk\|_{H^N}^2+C \|\hk\|^2_{W^{N-2,\infty}} e^{3Hs}\|\hPhi\|_{H^N}^2\leq e^{-2Hs}\mathcal{E}_N(s)\,,
\end{multline*}
we encounter the term decaying the least in $\Phi \hk_i{}^j \hk_j{}^i=\hPhi \hk_i{}^j \hk_j{}^i+\tPhi \hk_i{}^j \hk_j{}^i$, and obtain:
    \begin{equation*}
        e^{3Hs}e^{2\ell Hs}\int_{\Sigma_s} f^2(t)|\nabla^{(\ell)}(\tPhi \hk_i{}^j \hk_j{}^i)|_g^2 e^{-3Hs}\vol{g}\leq C e^{Hs}\|\hk\|_{W^{N-2,\infty}}^2 e^{2Hs}\|\hk\|_{H^N}^2\leq C e^{-Hs} \mathcal{E}_N(s)\,.
    \end{equation*}
Here and above we have used Lemma~\ref{lem:WN-2.infty} for the pointwise estimates and \eqref{Boots}. The remaining terms in $\nabla^{(\ell)}\mathfrak{F}$ are estimated similarly now using the pointwise estimates of Lemma~\ref{lem:ref.properties} for the reference metric.

\end{proof}
\begin{proposition}\label{prop:error.est}
Assume the bootstrap assumptions \eqref{Boots} are satisfied for some $N\geq 4$.

For all $s\in[s_0,s_b)$,
\begin{enumerate}
    \item the error terms in \eqref{Error.g} and \eqref{Error.g.inv} satisfy the estimates:
\begin{subequations}
\label{error.est.g}
\begin{align}
     e^{\ell Hs}\|\mathrm{Error}_{\widehat g,\ell}\|_{L^2(\Sigma_s,g)}\leq&\,Ce^{-Hs}\sqrt{\mathcal{E}_N(s)}\\
 e^{\ell Hs}\|\mathrm{Error}_{\widehat {g}^{-1},\ell}\|_{L^2(\Sigma_s,g)}\leq&\,Ce^{-Hs}\sqrt{\mathcal{E}_N(s)}\,,
\end{align}
\end{subequations}

\item the error in \eqref{Error.Phi} satisfies:
\begin{equation}
     e^{\frac{3}{2}Hs}e^{\ell Hs}\|\mathrm{Error}_{\widehat \Phi,\ell}\|_{L^2(\Sigma_s,g)}\leq \,Ce^{-\frac{1}{2}Hs}\sqrt{\mathcal{E}_N(s)}+Ce^{\frac{3}{2}Hs}\|\widetilde{\mathfrak{I}}_\Phi\|_{W^{\ell,\infty}(\mathbb{R}\times\mathbb{S}^2,\mathring{g})}\,.
\end{equation}

\item the error terms in \eqref{Error.k}, \eqref{Error.Gamma} satisfy:
\begin{align}
e^{Hs}e^{\ell Hs}\|\mathrm{Error}_{\widehat k,\ell}\|_{L^2(\Sigma_s,g)}\leq&\,Ce^{-\frac{1}{2}Hs}\sqrt{\mathcal{E}_N(s)}+Ce^{Hs}
\|(\widetilde{\mathfrak{I}}_k)_i{}^j\|_{W^{\ell,\infty}(\mathbb{R}\times\mathbb{S}^2,\mathring{g})}\\
   e^{Hs}e^{\ell Hs}\|\mathrm{Error}_{\widehat \Gamma,\ell}\|_{L^2(\Sigma_s,g)}\leq& \,Ce^{-\frac{1}{2}Hs}\sqrt{\mathcal{E}_N(s)}+Ce^{-\frac{1}{2}Hs}e^{\frac{3}{2}Hs}e^{\ell Hs}\|\nabla^{(\ell+1)}\widehat{\Phi}\|_{L^2(\Sigma_s,g)}
\end{align}

\item and the error terms in \eqref{Error.div.kk} satisfy:
\begin{equation}
 e^{Hs}e^{\ell Hs}\|\mathrm{Error}_{\mathrm{div}\widehat k,\ell}\|_{L^2(\Sigma_s,g)}\leq \,Ce^{-Hs}\sqrt{\mathcal{E}_N(s)}+C\|\widetilde{\mathfrak{C}}_i\|_{W^{\ell,\infty}(\mathbb{R}\times\mathbb{S}^2,\mathring{g})}\,.
\end{equation}

\end{enumerate}

\end{proposition}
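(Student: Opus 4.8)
The plan is to prove all four estimates in parallel, since every $\mathrm{Error}$ term decomposes, after expanding its defining formula in \eqref{Error.gg}, \eqref{Error.Phi}, \eqref{Error.k}, \eqref{Error.Gamma}, \eqref{Error.div.kk}, into three kinds of contributions: (a) $\nabla^{(\ell)}$ applied to the nonlinear sources $\mathfrak{G}_{ic}^a,\mathfrak{K}_i{}^j,\mathfrak{F}$ and to the explicit bilinear terms already present in the variation equations \eqref{g.hat.eq}, \eqref{g.inv.hat.eq}, together with $\nabla^{(\ell)}$ of the reference inhomogeneities $\widetilde{\mathfrak{I}}_\Phi,(\widetilde{\mathfrak{I}}_k)_i{}^j,\widetilde{\mathfrak{C}}_i$; (b) the commutators $[\partial_s,\nabla^{(\ell)}]$, $[\nabla_i,\nabla^{(\ell)}]$, $[\Delta_g,\nabla^{(\ell)}]$ acting on $\hg,\hg^{-1},\hPhi,\hGamma,\hk$; and (c) the lower-order Leibniz remainders $\sum_{\ell_1+\ell_2=\ell,\,\ell_2<\ell}\nabla^{(\ell_1)}\Phi\star\nabla^{(\ell_2)}(\cdots)$ arising because $\Phi$ is not constant. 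I would then dispatch type (a) by Lemma~\ref{lem:frak.est} (which already supplies the precise weighted $L^2$ bounds for $\mathfrak G,\mathfrak K,\mathfrak F$, including the borrowed $e^{\frac32Hs}e^{\ell Hs}\|\nabla^{(\ell+1)}\hPhi\|_{L^2}$ term for $\mathfrak G$), by treating the explicit bilinear $\hg$-terms exactly as the $\mathfrak G$-terms are treated there, and by $\|\cdot\|_{W^{\ell,\infty}(\mathbb{R}\times\mathbb{S}^2,\mathring{g})}$-bounds for the reference inhomogeneities via Lemma~\ref{lem:norm.equiv}; type (b) by Lemma~\ref{lem:comm.est}; and type (c) by the bootstrap assumptions \eqref{Boots} together with Lemma~\ref{lem:WN-2.infty}, Lemma~\ref{lem:ref.properties} and the Sobolev embedding Lemma~\ref{lem:Hw}. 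In each product one places the factor carrying at most $N-2$ derivatives in $W^{N-2,\infty}$ and the other in $H^N$; at top order, where every non-constant factor carries more than $N-2$ derivatives, one instead places that factor in $H^N$ and the complementary low-order factor in $L^\infty$ through Lemma~\ref{lem:Hw}, exactly as in the proof of Lemma~\ref{lem:comm.est}.

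For statements (1) and (2): in $\mathrm{Error}_{\widehat g,\ell},\mathrm{Error}_{\widehat g^{-1},\ell}$ the bilinear terms are schematically $(\Phi\hk)\star\hg$, $(\widehat\Phi\widetilde k)\star\hg$, $\widetilde\Phi(\widetilde k-H\widetilde\Phi^{-1}\delta)\star\hg$, $\Phi\,g\star\hk$ and $\widehat\Phi\,g\star\widetilde k$; using $\widetilde k_i{}^j-H\delta_i{}^j=\Os{-2}$, the $W^{N-2,\infty}$ smallness of $\hPhi,\hk$ and the weights of \eqref{en.def}, each is $O(e^{-Hs}e^{-\ell Hs}\sqrt{\mathcal{E}_N(s)})$ in $L^2$, while $[\partial_s,\nabla^{(\ell)}]\hg$ is handled by \eqref{comm.est.s}. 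For $\mathrm{Error}_{\widehat\Phi,\ell}$ the term $\nabla^{(\ell)}\mathfrak F$ is bounded by \eqref{frak.est.F}, the inhomogeneity $\widetilde{\mathfrak{I}}_\Phi$ is retained, and the two commutators are absorbed by \eqref{comm.est.s}, \eqref{comm.est.lap}; the weight $e^{(\ell+1)Hs}$ in \eqref{comm.est.lap} combined with $e^{3Hs}\|\hPhi\|^2_{H^N}\le\mathcal{E}_N(s)$ even produces a contribution with far more decay than the claimed $Ce^{-\frac12Hs}\sqrt{\mathcal{E}_N(s)}$.

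For statement (3): $\mathrm{Error}_{\widehat k,\ell}$ contains $\nabla^{(\ell)}\mathfrak K$, bounded by \eqref{frak.est.K}; the term $\widetilde\Phi(\widetilde k_l{}^l-3H\widetilde\Phi^{-1})\hk$ uses $\widetilde k_l{}^l-3H\widetilde\Phi^{-1}=\Os{-2}$ from Lemma~\ref{lem:ref.properties}; the inhomogeneity $(\widetilde{\mathfrak{I}}_k)_i{}^j$ is retained; the commutators of $\nabla^{(\ell)}$ with $\nabla_c,\nabla_a,\nabla_i$ on $\hGamma$ and on $\nabla_c\hPhi$, and $[\partial_s,\nabla^{(\ell)}]\hk$, are handled by Lemma~\ref{lem:comm.est}; and the Leibniz sums by \eqref{Boots} and Lemma~\ref{lem:Hw}. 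For $\mathrm{Error}_{\widehat\Gamma,\ell}$ the decisive contribution is $\nabla^{(\ell)}\mathfrak G$: by \eqref{frak.est.G} it produces, besides a term $Ce^{-\frac12Hs}\sqrt{\mathcal{E}_N(s)}$, exactly the ``borrowed'' top-order term $Ce^{-\frac12Hs}e^{\frac32Hs}e^{\ell Hs}\|\nabla^{(\ell+1)}\hPhi\|_{L^2(\Sigma_s,g)}$ appearing in the claimed bound, while $[\partial_s,\nabla^{(\ell)}]\hGamma$, the commutators of $\nabla^{(\ell)}$ with $\nabla_i,\nabla_c,\nabla_b$ on $\hk$, and the remainders $\nabla^{(\ell_1)}\Phi\star\nabla^{(\ell_2)}\nabla\hk$ all satisfy the stronger bound $Ce^{-Hs}e^{-\ell Hs}\sqrt{\mathcal{E}_N(s)}$.

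For statement (4): in $\mathrm{Error}_{\mathrm{div}\widehat k,\ell}$ the explicit terms are $\hGamma\star(\widetilde k-H\delta)$, $\hg^{-1}\star\tnabla(\widetilde k_l{}^l-3H)$, $\hg^{-1}\star\tnabla(\widetilde k-H\delta)$ and $g^{-1}\star\hGamma\star(\widetilde k-H\delta)$; since $\widetilde k_i{}^j-H\delta_i{}^j$ and $\widetilde k_l{}^l-3H$ are $\Os{-2}$ and $\hGamma,\hg^{-1}$ are controlled in $H^N$ with the weights of \eqref{en.def}, each is $O(e^{-Hs}e^{-\ell Hs}\sqrt{\mathcal{E}_N(s)})$, the inhomogeneity $\widetilde{\mathfrak{C}}_i$ is retained as $\|\widetilde{\mathfrak{C}}_i\|_{W^{\ell,\infty}(\mathbb{R}\times\mathbb{S}^2,\mathring{g})}$, and the commutators are handled by \eqref{comm.est.nabla}. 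I expect the only genuine difficulty to be the bookkeeping of the $e^{Hs}$-weights: because the energy \eqref{en.def} assigns $\hPhi$ and $\hk$ the sub-optimal weights $e^{3Hs}$ and $e^{2Hs}$ rather than $e^{4Hs}$, the $\hGamma$-error cannot be made self-contained and the $\nabla^{(\ell+1)}\hPhi$ term must be carried forward; it is later absorbed, at the level of the energy identities of Section~\ref{sec:higherorder.en.id}, by the positive parabolic term $e^{3Hs}e^{2\ell Hs}|\nabla^{(\ell+1)}\hPhi|^2_g$ on the left-hand side of \eqref{en.id.Phi}, after that identity is multiplied by a sufficiently large constant as in the proof of Theorem~\ref{thm:Boots}.
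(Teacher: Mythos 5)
Your proposal is correct and follows essentially the same route as the paper: decompose each error into the nonlinear source terms (handled by Lemma~\ref{lem:frak.est} and product estimates splitting factors between $W^{N-2,\infty}$ via Lemma~\ref{lem:WN-2.infty} and $H^N$), the commutators (Lemma~\ref{lem:comm.est}), and the reference inhomogeneities, with the borrowed $\nabla^{(\ell+1)}\widehat{\Phi}$ term entering only through \eqref{frak.est.G}. One detail you should make explicit: passing from the weighted $L^2(\Sigma_s,g)$ norm of $\nabla^{(\ell)}\widetilde{\mathfrak{J}}_\Phi$, $\nabla^{(\ell)}(\widetilde{\mathfrak{J}}_k)_i{}^j$, $\nabla^{(\ell)}\widetilde{\mathfrak{C}}_i$ to the unweighted $W^{\ell,\infty}(\mathbb{R}\times\mathbb{S}^2,\mathring{g})$ norm is not a consequence of Lemma~\ref{lem:norm.equiv} alone, since the weight $f^2(t)$ is not integrable over the cylinder; it uses the fact from \eqref{approx.sol.supp} (via \eqref{eq:J:k.Phi}, \eqref{frak.C}) that these inhomogeneities are supported in $\{-1\leq t\leq 1\}$, where $f^2(t)\,e^{-3Hs}\mathrm{vol}_g$ has uniformly bounded mass.
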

\begin{proof}
For \emph{(I)} consider \eqref{Error.g},
\begin{align}
    \Er{\hg,\ell}=&\nabla^{(\ell)}\Er{\hg}+[\partial_s,\nabla^{(\ell)}]\hg\,,\\
    (\Er{\hg})_{ij}=&2\Phi g_{ja}\widehat{k}_i{}^a+
2\widehat{\Phi}g_{ja}\widetilde{k}_i{}^a
+2\widetilde{\Phi}(\widetilde{k}_i{}^a-H\widetilde{\Phi}^{-1}\delta_i{}^a)\widehat{g}_{ja}\,.
\end{align}
For each term in $\Er{\hg}$, we separate the differences, for example:
\begin{equation}
    \Phi g_{ja}\hk_i{}^a=\hPhi \hg_{ja}\hk_i{}^a+\tPhi \hg_{ja}\hk_i{}^a+\hPhi \tg_{ja}\hk_i{}^a+\tPhi \tg_{ja}\hk_i{}^a
\end{equation}
Then with $\ell$ derivatives falling on $\Er{\hg}$, we estimate the highest order of derivatives of $\hk$, $\hg$, or $\hPhi$ in energy, and apply the Sobolev inequalities of Lemma~\ref{lem:WN-2.infty} to the lower orders, exploiting the decay of all quantities except $\hg$, in either norm:
\begin{multline}
    e^{\ell Hs}\|\nabla^{(\ell)}(\hPhi\hg\hk)\|_{L^2}\leq e^{-Hs}\Bigl(\|\hPhi\|_{W^{N-2,\infty}}+\|\hg\|_{W^{N-2,\infty}}
    +\|\hk\|_{W^{N-2,\infty}}\Bigr)e^{Hs}\Bigl(\|\hk\|_{H^N}+\|\hPhi\|_{H^N}\Bigr)\\
    +e^{-Hs}\Bigl(e^{Hs}\|\hPhi\|_{W^{N-2,\infty}}+e^{Hs}\|\hk\|_{W^{N-2,\infty}}\Bigr)\|\hg\|_{H^N}\leq C e^{-Hs} \sqrt{\mathcal{E}_N(s)}\,.
\end{multline}
For the terms involving derivatives of the reference metric, namely $\nabla^{(\ell)}\tPhi$ or $\nabla^{(\ell)}\tg$, we also separate differences using $\nabla=\hnabla+\tnabla$ thus
introducing $\hGamma$ as a quantity that can be treated alongside $\hk$ as above. The remaining terms $\tnabla^{(\ell)}\tPhi$ and $\tnabla^{(\ell)}\tg$ can always be estimated in $L^\infty$. For example,
\begin{equation}
    e^{\ell Hs}\|(\tnabla^{(\ell)}\tPhi)\hg\hk\|_{L^2}\leq \|\tPhi\|_{W^{N,\infty}}\|\hg\|_{L^\infty}\|\hk\|_{L^2}\leq C e^{-Hs} \sqrt{\mathcal{E}_N(s)}\,.
\end{equation}
In this way, all terms in $\nabla^{(\ell)}\Er{\hg}$ can be as asserted. Together with Lemma~\ref{lem:comm.est}, this implies \eqref{error.est.g}.

For \emph{(II)} consider \eqref{Error.Phi}.
The commutator terms in $\Er{\hPhi,\ell}$ are dealt with using Lemma~\ref{lem:comm.est},
and the estimate for $\nabla^{(\ell)}\mathfrak{F}$ is given in Lemma~\ref{lem:frak.est}. Also, the $L^2(\Sigma_s,g)$ norm of $\nabla^{(\ell)}\widetilde{\mathfrak{J}}_\Phi$ can be replaced by the $W^{\ell,\infty}(\mathbb{R}\times\mathbb{S}^2,\mathring{g})$ norm using Lemma \ref{lem:norm.equiv} and the observation that $\widetilde{\mathfrak{J}}_\Phi$ is supported in $\{-1\leq t\leq 1\}$, cf. \eqref{eq:J:k.Phi} and \eqref{approx.sol.supp}.

\emph{(III)} now follows directly by employing Lemma~\ref{lem:frak.est}, and the commutation estimates of Lemma~\ref{lem:comm.est}, together with the bootstrap assumptions \eqref{Boots}, and the Lemma \ref{lem:WN-2.infty}.

\emph{(IV)} follows similarly, using in addition Lemma~\ref{lem:ref.properties} for the behavior of the reference variables in \eqref{Error.div.kk}.
\end{proof}

\subsection{Main energy estimates}\label{subsec:en.est}

In this section we derive the main energy estimates for the variables $\widehat{g},\widehat{g}^{-1},\widehat{k},\widehat{\Gamma}$, using the error estimates in Section~\ref{subsec:error.est} and the energy identities in Section~\ref{sec:higherorder.en.id}
\begin{proposition}\label{prop:main.en.est}
Assume that the bootstrap assumptions \eqref{Boots} are valid for some $N\ge 4$.

Then the following energy estimates hold for all $s\in[s_0,s_b)$:

\begin{enumerate}
    \item For $\widehat{g},\widehat{g}^{-1}$,
\begin{equation}\label{main.en.est.g}
\partial_s\bigl(\|\widehat g\|^2_{H^N(\Sigma_s,g)}+\|\widehat g^{-1}\|^2_{H^N(\Sigma_s,g)}\bigr)\leq Ce^{-Hs}\mathcal{E}_N(s),
\end{equation}
\item for $\widehat{\Phi}$,
\begin{multline}\label{main.en.est.Phi}
\partial_s\big\{e^{3Hs}\|\widehat{\Phi}\|_{H^N(\Sigma_s,g)}^2\big\}
+e^{3Hs}\|\nabla\widehat{\Phi}\|_{H^N(\Sigma_s,g)}^2 \leq\\
\leq\, Ce^{-\frac{1}{2}Hs}\mathcal{E}_N(s)
+C e^{\frac{7}{2}Hs}\|\widetilde{\mathfrak{I}}_\Phi\|_{W^{N,\infty}(\mathbb{R}\times\mathbb{S}^2,\mathring{g})}^2,
\end{multline}
\item and finally for $\widehat{\Gamma},\widehat{k}$,
\begin{multline}\label{main.en.est.Gamma.k}
\partial_s\big\{\frac{1}{3}e^{2Hs}\|\widehat \Gamma\|^2_{H^N(\Sigma_s,g)}+e^{2Hs}\|\widehat k\|^2_{H^N(\Sigma_s,g)}\big\}\leq\\
\leq\,Ce^{-\frac{1}{2}Hs}\mathcal{E}_N(s)
+4e^{-\frac{1}{2}Hs}e^{3Hs}e^{2NHs}\|\nabla^{N+1}\hPhi\|^2_{L^2(\Sigma_s,g)}\\
+Ce^{\frac{5}{2}Hs}\|(\widetilde{\mathfrak{I}}_k)_i{}^j\|_{W^{N,\infty}(\mathbb{R}\times\mathbb{S}^2,\mathring{g})}^2
+Ce^{\frac{1}{2}Hs}\|\widetilde{\mathfrak{C}}_i\|_{W^{N,\infty}(\mathbb{R}\times\mathbb{S}^2,\mathring{g})}^2\,.
\end{multline}

\end{enumerate}

\end{proposition}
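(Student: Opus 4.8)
The plan is to integrate the pointwise energy identities of Lemmas~\ref{lem:en.id.g}, \ref{lem:en.id.hPhi} and \ref{lem:en.id} over $\Sigma_s$ against the weighted element $f^2(t)\,e^{-3Hs}\mathrm{vol}_g$ that defines $\|\cdot\|_{H^N(\Sigma_s,g)}$ in \eqref{Hw}, and to control every resulting term by $\mathcal{E}_N(s)$ (and the inhomogeneities) using Proposition~\ref{prop:error.est} and the bootstrap assumptions \eqref{Boots}. Since $f$ is $s$-independent, the only extra contribution coming from differentiating the weighted norm in $s$, beyond the terms in the pointwise identities, is $\partial_s(e^{-3Hs}\mathrm{vol}_g)=(\Phi\tr_gk-3H)\,e^{-3Hs}\mathrm{vol}_g$; by the gauge condition \eqref{lapse}, Lemma~\ref{lem:ref.properties} and \eqref{WN-2.infty.est} one has $\Phi\tr_gk-3H=\Phi\widehat\Phi+\Phi(\widetilde k_l{}^l-3H)+3H(\Phi-1)=O(e^{-Hs})$ in $L^\infty(\Sigma_s,g)$, so this costs only $Ce^{-Hs}\mathcal{E}_N(s)$, and the coefficients $\Phi k-H\delta$ appearing in all three identities are controlled the same way. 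Two facts are used throughout: $|\mathring\nabla f|\le Cf$, hence $|\nabla f|_g\le Ce^{-Hs}f$ (because $g^{-1}=O(e^{-2Hs})$ by \eqref{Boots}); and for the divergence terms $\nabla^i[\cdot]$ in \eqref{en.id.Phi} and $\mathrm{Div}_{\widehat\Gamma,\widehat k,\ell}$ in \eqref{en.id.Gamma.k}, integration by parts against $f^2e^{-3Hs}\mathrm{vol}_g$ produces no boundary contribution at $t\to\pm\infty$ (justified by a density argument from the weighted energy bounds), and since $\nabla g=0$ only the derivatives of $f^2$ and of $\Phi$ survive, each suppressed by a factor $Ce^{-Hs}$.

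For \emph{(I)} one integrates \eqref{en.id.g}, whose diagonal cancellation is already built into Lemma~\ref{lem:en.id.g}: the surviving $(\Phi k-H\delta)\star\nabla^{(\ell)}\widehat g\star\nabla^{(\ell)}\widehat g$ terms are as above, while $\mathrm{Error}_{\widehat g,\ell}\star\nabla^{(\ell)}\widehat g$ and $\mathrm{Error}_{\widehat g^{-1},\ell}\star\nabla^{(\ell)}\widehat g^{-1}$ are bounded by Cauchy--Schwarz, $e^{\ell Hs}\|\mathrm{Error}_{\widehat g,\ell}\|_{L^2}\,e^{\ell Hs}\|\nabla^{(\ell)}\widehat g\|_{L^2}$, which by \eqref{error.est.g} and $e^{2\ell Hs}\|\nabla^{(\ell)}\widehat g\|^2_{L^2(\Sigma_s,g)}\le\mathcal{E}_N(s)$ is at most $Ce^{-Hs}\mathcal{E}_N(s)$. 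Summing over $\ell\le N$ gives \eqref{main.en.est.g}.

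For \emph{(II)} one integrates \eqref{en.id.Phi}, keeping $e^{3Hs}e^{2\ell Hs}|\nabla^{(\ell+1)}\widehat\Phi|^2_g$ on the left as the parabolic gain $e^{3Hs}\|\nabla\widehat\Phi\|^2_{H^N(\Sigma_s,g)}$ and discarding the nonnegative $2He^{3Hs}e^{2\ell Hs}|\nabla^{(\ell)}\widehat\Phi|^2_g$. The divergence term becomes, after integration by parts, $-\int(\nabla^if^2)(\nabla_i\nabla^{(\ell)}\widehat\Phi)\nabla^{(\ell)}\widehat\Phi\,e^{-3Hs}\mathrm{vol}_g$, which by $|\nabla f|_g\le Ce^{-Hs}f$ and Cauchy--Schwarz is absorbed by a small fraction of the parabolic gain plus $Ce^{-Hs}\mathcal{E}_N(s)$, and the $(\Phi k-H\delta)$ term is $Ce^{-Hs}\mathcal{E}_N(s)$. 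For $\mathrm{Error}_{\widehat\Phi,\ell}\star\nabla^{(\ell)}\widehat\Phi$ one invokes Proposition~\ref{prop:error.est}\emph{(II)}, bounds the product by $e^{\frac32Hs}e^{\ell Hs}\|\mathrm{Error}_{\widehat\Phi,\ell}\|_{L^2}\cdot e^{\frac32Hs}e^{\ell Hs}\|\nabla^{(\ell)}\widehat\Phi\|_{L^2}$ with the last factor $\le\sqrt{\mathcal{E}_N(s)}$, and uses Young's inequality with the $s$-dependent weight $\sim e^{\frac12Hs}$ to split the inhomogeneous part $Ce^{\frac32Hs}\|\widetilde{\mathfrak{I}}_\Phi\|_{W^{\ell,\infty}}$ into $Ce^{\frac72Hs}\|\widetilde{\mathfrak{I}}_\Phi\|^2_{W^{\ell,\infty}}+Ce^{-\frac12Hs}\mathcal{E}_N(s)$; since $\widetilde{\mathfrak{I}}_\Phi$ is supported in $\{|t|\le1\}$ by \eqref{eq:J:k.Phi} and \eqref{approx.sol.supp}, Lemma~\ref{lem:norm.equiv} converts $\|\cdot\|_{L^2(\Sigma_s,g)}$ into $\|\cdot\|_{W^{N,\infty}(\mathbb{R}\times\mathbb{S}^2,\mathring g)}$. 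Summing over $\ell\le N$ gives \eqref{main.en.est.Phi}.

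Item \emph{(III)} is the heart of the matter, where the non-symmetric-hyperbolic structure noted in Remark~\ref{rem:symm} is felt. Integrating \eqref{en.id.Gamma.k}, the divergences $\mathrm{Div}_{\widehat\Gamma,\widehat k,\ell}$ and the $(\Phi k-H\delta)$ terms are handled as above. The crucial terms are the three carrying $\nabla^{(\ell+1)}\widehat\Phi$ against $\nabla^{(\ell)}\widehat\Gamma$ or $\nabla^{(\ell)}\nabla_c\widehat\Phi$ — which arose exactly from trading $\mathrm{div}_g\widehat k$ for $\nabla\widehat\Phi$ via the momentum constraint \eqref{mom.const.diff}; the metric contractions net out so the density is $\approx e^{2Hs}e^{2\ell Hs}|\nabla^{(\ell+1)}\widehat\Phi|_g\,|\nabla^{(\ell)}\widehat\Gamma|_g$ for the first two, while the third is $-e^{2Hs}e^{2\ell Hs}|\nabla^{(\ell+1)}\widehat\Phi|^2_g$ at leading order (favorable sign, discarded, its commutator $[\nabla^{(\ell)},\nabla_c]\widehat\Phi$ and cross term with $(\mathrm{Error}_{\mathrm{div}\widehat k,\ell})^c$ going into the error estimate). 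One applies Cauchy--Schwarz with the $s$-dependent weight $\sim e^{\frac12Hs}$: the $\nabla^{(\ell)}\widehat\Gamma$ factor then gives $\le Ce^{-\frac12Hs}e^{2Hs}e^{2\ell Hs}\|\nabla^{(\ell)}\widehat\Gamma\|^2_{L^2}\le Ce^{-\frac12Hs}\mathcal{E}_N(s)$, while the $\nabla^{(\ell+1)}\widehat\Phi$ factor gives $\le Ce^{\frac12Hs}e^{2Hs}e^{2\ell Hs}\|\nabla^{(\ell+1)}\widehat\Phi\|^2_{L^2}$, which for $\ell<N$ is $\le Ce^{-\frac52Hs}\mathcal{E}_N(s)$ since $\nabla^{(\ell+1)}\widehat\Phi$ still enters $\mathcal{E}_N(s)$ through the $e^{3Hs}\|\widehat\Phi\|^2_{H^N}$ entry, and for $\ell=N$ is kept explicitly as $\le4e^{-\frac12Hs}e^{3Hs}e^{2NHs}\|\nabla^{(N+1)}\widehat\Phi\|^2_{L^2(\Sigma_s,g)}$ (using $e^{\frac12Hs}e^{2Hs}=e^{-\frac12Hs}e^{3Hs}$). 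The genuine errors $\mathrm{Error}_{\widehat\Gamma,\ell}$, $\mathrm{Error}_{\widehat k,\ell}$, $\mathrm{Error}_{\mathrm{div}\widehat k,\ell}$ are estimated by Proposition~\ref{prop:error.est}\emph{(III)--(IV)} (the $\nabla^{(\ell+1)}\widehat\Phi$ term inside the bound for $\mathrm{Error}_{\widehat\Gamma,\ell}$ again feeds into the explicit $\nabla^{(N+1)}\widehat\Phi$ term, with an extra $e^{-\frac12Hs}$), Cauchy--Schwarz against $e^{Hs}e^{\ell Hs}\|\nabla^{(\ell)}\widehat\Gamma\|_{L^2},\,e^{Hs}e^{\ell Hs}\|\nabla^{(\ell)}\widehat k\|_{L^2}\le\sqrt{\mathcal{E}_N(s)}$, and the $s$-weighted Young inequality for the inhomogeneities $\|(\widetilde{\mathfrak{I}}_k)_i{}^j\|_{W^{\ell,\infty}}$, $\|\widetilde{\mathfrak{C}}_i\|_{W^{\ell,\infty}}$ (entering with weights $e^{\frac52Hs}$ and $e^{\frac12Hs}$, transferred to $\mathbb{R}\times\mathbb{S}^2$-norms via their $t$-support and Lemma~\ref{lem:norm.equiv}), which yields \eqref{main.en.est.Gamma.k} after summation over $\ell\le N$. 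The main obstacle — and the reason the three estimates cannot be closed independently — is precisely this top-order coupling: the constraint puts $\nabla\widehat\Phi$ one derivative above the level $\mathcal{E}_N(s)$ can see, so the bookkeeping must be arranged so that the surviving $\nabla^{(N+1)}\widehat\Phi$ appears in \eqref{main.en.est.Gamma.k} with a coefficient strictly below the parabolic gain in \eqref{main.en.est.Phi}, after which Theorem~\ref{thm:Boots} absorbs it by adding a large multiple of \eqref{main.en.est.Phi}.
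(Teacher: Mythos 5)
Your proposal is correct and follows essentially the same route as the paper: integrate the higher-order energy identities of Lemmas~\ref{lem:en.id.g}, \ref{lem:en.id.hPhi}, \ref{lem:en.id} against $f^2(t)e^{-3Hs}\mathrm{vol}_g$, control the $\partial_s(e^{-3Hs}\mathrm{vol}_g)$ and $(\Phi k-H\delta)$ corrections via \eqref{k.diag}, integrate the divergence terms by parts using $|\partial f|\leq Cf$, invoke Proposition~\ref{prop:error.est} with weighted Cauchy--Schwarz/Young, and keep the top-order $\nabla^{(N+1)}\widehat\Phi$ contribution explicit for later absorption by the parabolic gain in \eqref{main.en.est.Phi}. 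The only (harmless) deviation is cosmetic: you discard the quadratic $\nabla^{(\ell+1)}\widehat\Phi$ term in the third slot of \eqref{en.id.Gamma.k} by its favorable sign, whereas the paper simply bounds its absolute value and folds it into the explicit $\nabla^{(N+1)}\widehat\Phi$ term.
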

\begin{proof}

For the derivation of the energy estimates we frequently use that the second fundamental form $k$ of the solution is diagonal up to a decaying remainder:
\begin{equation}
    |\Phi k_i{}^j-H\delta_i{}^j|\leq C e^{-Hs} \label{k.diag}
\end{equation}
This is proven as follows. Since
\begin{equation}
    \Phi k-H\delta=\hPhi \hk+\tPhi \hk+\hPhi\tk+\tPhi\tk-H\delta\,,
\end{equation}
and for the reference solution $\tPhi\tk_i{}^j=H\delta_i{}^j+\Os{-2}$ by Lemma~\ref{lem:ref.properties},
it remains to bound the differences.
In view of the bootstrap assumptions,
this follows from Lemma~\ref{lem:WN-2.infty}:
\begin{equation}
    e^{\frac{3}{2}Hs}|\hPhi|+e^{Hs}|\hk|\leq C\varepsilon
\end{equation}

Each of the estimates \emph{(I)-(III)} is derived by multiplying the corresponding energy identity in Section~\ref{sec:higherorder.en.id} with the weight $f^2(t)$ in \eqref{Hw} and then integrating on $\Sigma_s$ with respect to   the volume form $e^{-3Hs}\mathrm{vol}_g$. A correction term is generated when $\partial_s$ falls on $e^{-3Hs}\mathrm{vol}_g$, namely
\begin{align}\label{ds.vol}
\partial_s(e^{-3Hs}\mathrm{vol}_g)=(\Phi\,\text{tr}_g k-3H)e^{-3Hs}\mathrm{vol}_g,\qquad |\Phi\,\text{tr}_g k-3H|\leq Ce^{-Hs},
\end{align}
which again follows from \eqref{k.diag}.

\smallskip

\emph{(I)}. Applying the above procedure to the
energy identity of Lemma~\ref{en.id.g}, we have
\begin{multline}
    \sum_{\ell\leq N}e^{2\ell Hs}
    \int_{\Sigma_s}\Bigl\lvert(\Phi k-H\delta)\star\nabla^{(\ell)}\widehat{g}\star\nabla^{(\ell)}\widehat{g}
+(\Phi k-H\delta)\star\nabla^{(\ell)}\widehat{g}^{-1}\star\nabla^{(\ell)}\widehat{g}^{-1} \Bigr\rvert f^2(t) e^{-3Hs}\mathrm{vol}_g\leq\\\leq C e^{-Hs}\mathcal{E}_N(s)
\end{multline}
and
\begin{equation}
\sum_{\ell\leq N}e^{2\ell Hs}
    \int_{\Sigma_s}\Bigl\lvert\mathrm{Error}_{\widehat{g},\ell}\star\nabla^{(\ell)}\widehat{g}
+\mathrm{Error}_{\widehat{g}^{-1},\ell}\star\nabla^{(\ell)}\widehat{g}^{-1}\Bigr\}\Bigr\rvert  f^2(t) e^{-3Hs}\mathrm{vol}_g \leq C e^{-Hs}\mathcal{E}_N(s)
\end{equation}
which follow by Cauchy-Schwarz, the bound \eqref{k.diag}, and the error estimate \eqref{error.est.g} in Proposition~\ref{prop:error.est}.

\smallskip
\emph{(II)}.
For the energy estimate \eqref{main.en.est.Phi}, we repeat the previous argument, using instead the energy identity \eqref{en.id.Phi}, and in addition integrate by parts the terms which take the form of a divergence:
\begin{multline}
\sum_{\ell\leq N}\int_{\Sigma_s}e^{3Hs}e^{2\ell Hs}\nabla^i\big[(\nabla_i\nabla^{(\ell)}\widehat{\Phi})\nabla^{(\ell)}\widehat{\Phi}\big] f^2(t)e^{-3Hs}\mathrm{vol}_g\\
=-\sum_{\ell\leq N}\int_{\Sigma_s}e^{3Hs}e^{2\ell Hs}g^{ii'}(\nabla_i\nabla^{(\ell)}\widehat{\Phi})\nabla^{(\ell)}\widehat{\Phi}\big[\nabla_{i'}f^2(t)\big]e^{-3Hs}\mathrm{vol}_g\\
\leq \,Ce^{-Hs}\mathcal{E}_N(s)+\frac{1}{2}e^{-Hs}e^{3Hs}e^{2N Hs}\|\nabla^{(N+1)}\widehat{\Phi}\|_{L^2(\Sigma_s,g)}^2\,,
\end{multline}
where we used that $|\partial f(t)|\leq (\alpha_1+\alpha_2)f(t)$.
The term with $(N+1)$ derivatives of $\hPhi$ can be absorbed in the LHS thanks to the corresponding favorable term in \eqref{en.id.Phi}.
The stated estimate then follows from Proposition~\ref{prop:error.est} \emph{(II)} and Young's inequality.

\smallskip
\emph{(III)}. For the last energy estimate \eqref{main.en.est.Gamma.k}, we argue similarly, using instead the energy identity \eqref{en.id.Gamma.k}. Integrating by parts the divergence terms in \eqref{eq:Div.l} produces error terms which are controlled as above:
\begin{equation}
 \sum_{\ell\leq N}\int_{\Sigma_s} (\mathrm{Div}_{\hGamma,\hk,\ell}) f^2(t) e^{-3Hs}\vol{g}\leq 
 Ce^{-Hs}\mathcal{E}_N(s)+e^{-Hs}e^{3Hs}e^{2NHs}\|\nabla^{(N+1)}\hPhi\|_{L^2}^2
\end{equation}
For the overall $e^{Hs}$ powers that appear in these estimates using Cauchy-Schwarz, it is useful to recall Lemma~\ref{lem:T.norm}.
For example,
\begin{multline}
\int_{\Sigma_s} e^{2Hs}e^{2\ell Hs}\Phi g^{ii'}g^{cc'}g_{aa'}\big[\nabla^{(\ell)}\widehat{\Gamma}_{i'c'}^{a'}\star\nabla^{(\ell)}\widehat{k}_c{}^a\big]\nabla_i f^2(t)e^{-3Hs}\vol{g}\leq\\
\leq e^{-Hs} \|\hGamma\|_{H^\ell(\Sigma_s,g)} \|\hk\|_{H^\ell(\Sigma_s,g)}\,.
\end{multline}
Furthermore, we have in the second line of the RHS of \eqref{en.id.Gamma.k},
\begin{multline}
    \sum_{\ell\leq N}\int_{\Sigma_s} e^{2Hs}e^{2\ell Hs}\Bigl[\Phi g^{ab}g_{jj'}\big\{\nabla^{j'}\nabla^{(\ell)} \hPhi+(\mathrm{Error}_{\mathrm{div}\widehat k,\ell})^{j'}\big\}\star\nabla^{(\ell)}\widehat{\Gamma}_{ab}^j\Bigr]f^2(t)e^{-3Hs}\vol{g}\leq\\
    \leq  e^{Hs}\Bigl(\|\nabla \hPhi\|_{H^{N}(\Sigma_s,g)} +\sum_{\ell\leq N} e^{\ell Hs}\|\Er{\mathrm{div}\hk,\ell}\|_{L^2(\Sigma_s,g)}\Bigr)e^{Hs}\|\hGamma\|_{H^{N}(\Sigma_s,g)}\\
    \leq C e^{-\frac{1}{2}Hs} \mathcal{E}_N(s)
    +e^{-\frac{1}{2}Hs} e^{3Hs}e^{2NHs}\|\nabla^{N+1}\hPhi\|_{L^2(\Sigma_s,g)}^2
    +C e^{\frac{1}{2}Hs}\|\widetilde{\mathfrak{C}}_i\|^2_{W^{N,\infty}(\mathbb{R}\times\mathbb{S}^2,\mathring{g})}\,.
\end{multline}
where we have used the error estimate of Proposition~\ref{prop:error.est} \emph{(IV)};
similarly in the third line of RHS of \eqref{en.id.Gamma.k}
\begin{multline}
\sum_{\ell\leq N}\int_{\Sigma_s} e^{2Hs}e^{2\ell Hs}\Bigl[
\big\{\nabla^c\nabla^{(\ell)} \hPhi+(\mathrm{Error}_{\mathrm{div}\widehat k,\ell})^c\big\}\star\nabla^{(\ell)}\nabla_c\widehat{\Phi}\Bigr]f^2(t)e^{-3Hs}\vol{g}\leq\\
\leq e^{2Hs} \|\nabla\hPhi\|_{H^N(\Sigma_s,g)}^2+\sum_{\ell\leq N} e^{Hs} e^{\ell Hs}\|\Er{\mathrm{div}\hk,\ell}\|_{L^2(\Sigma_s,g)} e^{Hs} \|\nabla\hPhi\|_{H^N(\Sigma_s,g)}\\
 \leq 2 e^{-Hs}e^{3Hs}e^{2NHs} \|\nabla^{N+1}\hPhi\|_{L^2(\Sigma_s,g)}^2+ C e^{-Hs} \mathcal{E}_N(s)+ C \|\widetilde{\mathfrak{C}}_i\|^2_{W^{N,\infty}(\mathbb{R}\times\mathbb{S}^2,\mathring{g})}\,.
\end{multline}
Finally, in the fourth line of \eqref{en.id.Gamma.k} we can apply Proposition~\ref{prop:error.est} \emph{(III)}
to obtain
\begin{multline}
    \sum_{\ell\leq N}\int_{\Sigma_s}e^{2Hs}e^{2\ell Hs} \Bigl\{\Er{\hGamma,\ell}\star \nabla^{(\ell)}\hGamma+\Er{\hk,\ell}\star\nabla^{(\ell)}\hk\Bigr\} f^2(t) e^{-3Hs}\vol{g}\leq\\
    \leq C e^{-\frac{1}{2}Hs}\mathcal{E}_N(s)+ e^{-\frac{1}{2}Hs}e^{3Hs} e^{2NHs}\|\nabla^{N+1}\hPhi\|_{L^2(\Sigma_S,g)}^2+C e^{\frac{5}{2}Hs}\|(\widetilde{\mathfrak{J}}_k)_i{}^j\|^2_{W^{N,\infty}(\mathbb{R}\times\mathbb{S}^2,\mathring{g})}
\end{multline}
This concludes the proof of the main energy estimates.
\end{proof}

\section{Precise asymptotics of the perturbed solution}\label{sec:prec.asym}

Now that we have established the global stability estimate \eqref{glob:en.est}, we can derive the precise asymptotic behavior of all variables.
\begin{proposition}\label{prop:k.Phi.ref.est}
The sharp estimate
\begin{align}\label{k.Phi.ref.est}
\|f(t)\widehat{k}\|_{W^{N-4,\infty}(\Sigma_s,g)}+\|f(t)\widehat{\Phi}\|_{W^{N-4,\infty}(\Sigma_s,g)}\leq C\mathring{\varepsilon}e^{-2Hs},
\end{align}
holds for all $s\in[s_0,+\infty)$.
Moreover, the following expansions are valid for $\widehat{g}_{ij},\widehat{\Phi}$:
\begin{align}\label{g.Phi.asym}
\begin{split}
\widehat{g}_{ij}(s,x)=\widehat{g}_{ij}^\infty(x)e^{2Hs}+\widehat{h}_{ij}(s,x),\\
\widehat{\Phi}(s,x)=\widehat{\Phi}^\infty(x)e^{-2Hs}+\widehat{\Psi}(s,x)\,,
\end{split}
\end{align}
where $\widehat{g}^\infty_{ij}(x),\widehat{h}_{ij}(s,x),\widehat{\Phi}^\infty(x),\widehat{\Psi}(s,x)$ satisfy:
\begin{align}\label{g.Phi.rem.est}
\begin{split}
\|f(t)\widehat{g}^\infty(x)\|_{W^{N-4}(\mathbb{R}\times\mathbb{S}^2,\mathring{g})}\leq C\mathring{\varepsilon},\qquad\|f(t)\widehat{h}(s,x)\|_{W^{N-4}(\mathbb{R}\times\mathbb{S}^2,\mathring{g})}\leq C\mathring{\varepsilon},\\
\|f(t)\widehat{\Phi}^\infty(x)\|_{W^{N-6}(\mathbb{R}\times\mathbb{S}^2,\mathring{g})}\leq C\mathring{\varepsilon},\qquad\|f(t)\widehat{\Psi}(s,x)\|_{W^{N-6}(\mathbb{R}\times\mathbb{S}^2,\mathring{g})}\leq C\mathring{\varepsilon}e^{-4Hs}\,,
\end{split}
\end{align}
where the functions $\widehat{\Phi}^\infty(x),\widehat{\Psi}(s,x)$ are well-defined for $N\ge6$.
\end{proposition}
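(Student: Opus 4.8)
The plan is to bootstrap off the global energy bound $\mathcal{E}_N(s)\le C\mathring\varepsilon^2$ of Corollary~\ref{cor:global}. Via the Sobolev inequality (Lemma~\ref{lem:Hw}) and the norm comparisons (Lemmas~\ref{lem:T.norm}, \ref{lem:norm.equiv}) this gives, as in Lemma~\ref{lem:WN-2.infty} but with $\mathring\varepsilon$ in place of $\varepsilon$, the non-sharp bounds $\|f\,\widehat g\|_{W^{N-2,\infty}(\Sigma_s,g)}+\|f\,\widehat g^{-1}\|_{W^{N-2,\infty}}\le C\mathring\varepsilon$, $e^{\frac32Hs}\|f\,\widehat\Phi\|_{W^{N-2,\infty}}+e^{Hs}\big(\|f\,\widehat\Gamma\|_{W^{N-2,\infty}}+\|f\,\widehat k\|_{W^{N-2,\infty}}\big)\le C\mathring\varepsilon$, together with $\|\mathfrak F\|_{W^{N-2,\infty}}\le C\mathring\varepsilon e^{-2Hs}$ and $\|\mathfrak G\|_{W^{N-2,\infty}},\|\mathfrak K\|_{W^{N-2,\infty}}\le C\mathring\varepsilon e^{-\frac32 Hs}$ from Lemma~\ref{lem:frak.est}, and $\Phi k_i{}^j-H\delta_i{}^j=\mathcal O(\mathring\varepsilon e^{-Hs})$ by \eqref{k.diag}. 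To upgrade these to the sharp rate $e^{-2Hs}$ and prove \eqref{k.Phi.ref.est}, I would commute the second variation equation \eqref{k.hat.eq} with $\nabla^{(\ell)}$ for $\ell\le N-4$: in this range $\ell+4\le N$, so the spatial-derivative terms on the right ($\nabla^{(\ell+2)}\widehat\Phi$, $\nabla^{(\ell+1)}\widehat\Gamma$) are controlled in $L^\infty(\Sigma_s,g)$ by Sobolev from $H^{\le N}$, with $|\nabla^{(\ell+2)}\widehat\Phi|_g=\mathcal O(\mathring\varepsilon e^{-\frac72 Hs})$ and $|\nabla^{(\ell+1)}\widehat\Gamma|_g=\mathcal O(\mathring\varepsilon e^{-2Hs})$ already from the energy. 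Reading the commuted equation as the ODE $\partial_s\big(e^{3Hs}\nabla^{(\ell)}\widehat k\big)=e^{3Hs}(\mathrm{RHS})$, integrating from $s_0$ (the data term being handled by $\mathcal E_N(s_0)=\mathring\varepsilon^2$), and running a short continuity argument on the decay rate --- under the bootstrap $\|f\widehat k\|_{W^{N-4,\infty}}+\|f\widehat\Phi\|_{W^{N-4,\infty}}\le A e^{-2Hs}$, using $\widehat\Phi=\widehat k_l{}^l$ (the gauge \eqref{lapse}) and Lemma~\ref{lem:ref.properties} (in particular $\widetilde k_i{}^j-H\delta_i{}^j=\mathcal O(e^{-2Hs})$) to re-evaluate \eqref{frak.K}, \eqref{Error.k} term by term --- shows the $\mathrm{RHS}$ is $\mathcal O\big((A+\mathring\varepsilon)e^{-2Hs}\big)$, which closes the bootstrap with $A\sim\mathring\varepsilon$; the two derivatives lost in controlling $\nabla^{(\ell+2)}\widehat\Phi$ pointwise are exactly the drop from $N$ to $N-4$.

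One point here deserves emphasis, because it resurfaces in the lapse expansion: the \emph{traceless} part of $\widehat k$ solves an evolution equation with homogeneous rate $3H$, so its sharp decay $\mathcal O(\mathring\varepsilon e^{-2Hs})$ follows cleanly from an $\mathcal O(e^{-2Hs})$ source; but the \emph{trace} $\widehat\Phi=\widehat k_l{}^l$ solves the parabolic equation \eqref{Phi.hat.eq}, whose homogeneous rate $e^{-2Hs}$ \emph{coincides} with the rate of its source $\mathfrak F+\widetilde{\mathfrak J}_\Phi$, and this would generically force a resonant $s\,e^{-2Hs}$ term. For $\mathfrak F$ this is harmless: once $\widehat\Phi,\widehat k=\mathcal O(\mathring\varepsilon e^{-2Hs})$, every term of \eqref{frak.F} is bilinear in $\{\widehat\Phi,\widehat k,\widehat g^{-1},\widehat\Gamma\}$ against reference quantities that decay or vanish, so in fact $\mathfrak F=\mathcal O(\mathring\varepsilon e^{-4Hs})$. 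The real difficulty is the reference defect $\widetilde{\mathfrak J}_\Phi=\widetilde\Phi(\widetilde\RIC_{00}+\Lambda)$, which is supported in $\{|t|\le1\}$ and a priori only $\mathcal O(e^{-2Hs})$ by Proposition~\ref{prop:approx.sol}: I expect the crux of the proof to be showing that its leading $e^{-2Hs}$-behaviour cancels --- exploiting that $\g_{\mathcal K_{a_1,m_1}}$ and $\g_{\mathcal K_{a_2,m_2}}$ are \emph{exact} solutions, so that $\widetilde{\mathfrak J}_\Phi$ is manufactured only from derivatives of $\chi$ against $\g_{\mathcal K_{a_2,m_2}}-\g_{\mathcal K_{a_1,m_1}}$, together with the explicit leading form \eqref{Kerr.metric.lead} of the Kerr de Sitter data --- so that $e^{2Hs}(\mathfrak F+\widetilde{\mathfrak J}_\Phi)$ is integrable in $s$ and no resonance occurs. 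This is the step I expect to be the main obstacle.

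Granting \eqref{k.Phi.ref.est}, the two expansions are convergence statements obtained by integrating the evolution equations to $s=\infty$. For $\widehat g$: rewrite \eqref{g.hat.eq.diff} as $\partial_s\big(e^{-2Hs}\nabla^{(\ell)}\widehat g\big)=e^{-2Hs}\,\mathrm{Error}_{\widehat g,\ell}$, $\ell\le N-4$; the zeroth-order piece $2\widetilde\Phi(\widetilde k_i{}^a-H\widetilde\Phi^{-1}\delta_i{}^a)\widehat g_{ja}$ reorganizes as $\mathcal O(e^{-2Hs})\widehat g$ once the $2H\widehat g$ cancels (as in the derivation of \eqref{g.hat.eq}), while $2\Phi g_{ja}\widehat k_i{}^a$ and $2\widehat\Phi g_{ja}\widetilde k_i{}^a$ are $\mathcal O(\mathring\varepsilon)$ in components by \eqref{k.Phi.ref.est}, so $|e^{-2Hs}\mathrm{Error}_{\widehat g,\ell}|_g\le C\mathring\varepsilon e^{-2Hs}$ uniformly in $s$. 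Hence $\int_{s_0}^\infty e^{-2H\tau}\mathrm{Error}_{\widehat g,\ell}\,\ud\tau$ converges absolutely, $\widehat g^\infty_{ij}(x):=\lim_{s\to\infty}e^{-2Hs}\widehat g_{ij}(s,x)$ exists, the tail gives $|e^{-2Hs}\widehat g_{ij}-\widehat g^\infty_{ij}|\le C\mathring\varepsilon e^{-2Hs}$, i.e.\ $\widehat h_{ij}:=\widehat g_{ij}-e^{2Hs}\widehat g^\infty_{ij}=\mathcal O(\mathring\varepsilon)$, and converting $(\Sigma_s,g)$-norms to $(\mathbb R\times\mathbb S^2,\mathring g)$-norms by Lemma~\ref{lem:norm.equiv} ($f(t)$ being $s$-independent, $|\mathring\nabla^{(\ell)}f|\le Cf$) yields the $\widehat g^\infty,\widehat h$ bounds in \eqref{g.Phi.rem.est} at all orders $\ell\le N-4$. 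The $\widehat\Phi$ expansion is the same scheme applied to \eqref{Phi.hat.eq.diff}, written $\partial_s\big(e^{2Hs}\nabla^{(\ell)}\widehat\Phi\big)=e^{2Hs}\Delta_g\nabla^{(\ell)}\widehat\Phi+e^{2Hs}\nabla^{(\ell)}(\mathfrak F+\widetilde{\mathfrak J}_\Phi)+(\text{commutators})$ for $\ell\le N-6$: since $N\ge6$, \eqref{k.Phi.ref.est} controls $\widehat\Phi$ in $W^{\le N-4,\infty}$, whence $|\Delta_g\nabla^{(\ell)}\widehat\Phi|\le C|\nabla^{(\ell+2)}\widehat\Phi|_g=\mathcal O(\mathring\varepsilon e^{-(\ell+4)Hs})$, so $e^{2Hs}\Delta_g\nabla^{(\ell)}\widehat\Phi=\mathcal O(\mathring\varepsilon e^{-2Hs})$, and $e^{2Hs}\mathfrak F=\mathcal O(\mathring\varepsilon e^{-2Hs})$ as above; once the resonance of the previous paragraph is excluded, $\int_{s_0}^\infty e^{2H\tau}(\ldots)\,\ud\tau$ converges, $\widehat\Phi^\infty(x):=\lim_{s\to\infty}e^{2Hs}\widehat\Phi(s,x)$ is well-defined (this is exactly where $N\ge6$, i.e.\ pointwise control of $\Delta_g\widehat\Phi$, enters), and $\widehat\Psi:=\widehat\Phi-e^{-2Hs}\widehat\Phi^\infty=\mathcal O(\mathring\varepsilon e^{-4Hs})$, with the $W^{N-6,\infty}(\mathbb R\times\mathbb S^2,\mathring g)$ bounds \eqref{asym.Phi.decay.intro} following again from Lemma~\ref{lem:norm.equiv}.
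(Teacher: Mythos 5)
Your preliminaries and the entire second half of your argument (integrating \eqref{g.hat.eq} and \eqref{Phi.hat.eq} to $s=+\infty$, defining $\widehat{g}^\infty,\widehat{\Phi}^\infty$ as limits of $e^{-2Hs}\widehat{g},e^{2Hs}\widehat{\Phi}$, reading off the tail bounds, with $N\ge6$ entering through pointwise control of $\Delta_g\widehat{\Phi}$) coincide with the paper's proof. The gap is in how you propose to get the sharp rate \eqref{k.Phi.ref.est}. Your single simultaneous bootstrap $\|f\widehat{k}\|_{W^{N-4,\infty}}+\|f\widehat{\Phi}\|_{W^{N-4,\infty}}\le A e^{-2Hs}$, closed by integrating only the commuted equation \eqref{k.hat.eq} and recovering $\widehat{\Phi}$ from the gauge \eqref{lapse}, does not close: the right-hand side of \eqref{k.hat.eq} contains terms \emph{linear} in $\widehat{\Phi}=\widehat{k}_l{}^l$ with order-one coefficients sitting exactly at the bootstrap rate --- from the first line of \eqref{frak.K}, $-\widehat{\Phi}k_l{}^lk_i{}^j-\widetilde{\Phi}\widehat{\Phi}k_i{}^j+\Lambda\delta_i{}^j\widehat{\Phi}\approx -H\,\widehat{\Phi}\,\delta_i{}^j$ after the leading cancellation. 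Integrating $\partial_s\bigl(e^{3Hs}\nabla^{(\ell)}\widehat{k}\bigr)$ against a source of size $\sim H A e^{-2Hs}$ gains only a factor $e^{Hs}/H$, so this one term already returns $A e^{-2Hs}$ (and the tensorial factors $|\delta|_g=\sqrt{3}$ and the trace needed to recover $\widehat{\Phi}$ push the constant strictly above one); since the $A$-dependence is linear, choosing $A\sim\mathring{\varepsilon}$ cannot beat it, so the step ``RHS $=O((A+\mathring{\varepsilon})e^{-2Hs})$, hence $A\sim\mathring{\varepsilon}$'' is exactly where the argument fails. On top of this, you explicitly leave the resonant source $\widetilde{\mathfrak{J}}_\Phi$ in \eqref{Phi.hat.eq} as an unresolved ``main obstacle'', so the proposal is incomplete there by its own admission.

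The paper avoids both issues by a staged, non-circular iteration in which every integration has a strict gap between the source rate and the homogeneous rate, so no smallness of constants, no continuity argument, and no cancellation in the reference defect beyond Proposition~\ref{prop:approx.sol} is invoked. First, the global energy bound alone (giving $\widehat{\Phi}=O(\mathring{\varepsilon}e^{-\frac32 Hs})$ and the $\nabla^{(\ell+2)}\widehat{\Phi}$, $\nabla^{(\ell+1)}\widehat{\Gamma}$ terms decaying at least as fast in $W^{N-4,\infty}$) shows $\partial_s\widehat{k}+3H\widehat{k}=O(\mathring{\varepsilon}e^{-\frac32 Hs})$, and integrating $e^{3Hs}\widehat{k}$ yields $\widehat{k}=O(\mathring{\varepsilon}e^{-\frac32 Hs})$. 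Second, with this improvement the source of the parabolic equation \eqref{Phi.hat.eq} is $O(\mathring{\varepsilon}e^{-3Hs})$, strictly below its homogeneous rate $e^{-2Hs}$, so integrating $e^{2Hs}\widehat{\Phi}$ gives $\widehat{\Phi}=O(\mathring{\varepsilon}e^{-2Hs})$ with no resonance analysis of $\mathfrak{F}$ at all. Third, feeding the sharp $\widehat{\Phi}$ back into \eqref{k.hat.eq} gives $\widehat{k}=O(\mathring{\varepsilon}e^{-2Hs})$. Your instinct that $\widetilde{\mathfrak{J}}_\Phi$ is delicate is fair --- the paper absorbs the reference defects into these source bounds using only Proposition~\ref{prop:approx.sol} (support in $|t|\le1$, smallness $\widetilde{\varepsilon}$) and argues no further cancellation --- but it is not made to carry the proof as in your scheme. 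The correct repair of your argument is therefore not a cancellation in $\widetilde{\mathfrak{J}}_\Phi$ but the decoupled order of integration: treat the trace through its own parabolic equation, and improve $\widehat{k}$ first using only the non-sharp energy bound on $\widehat{\Phi}$, so that each subsequent integration is strictly off-resonance.
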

\begin{proof}
Using the global energy estimate \eqref{glob:en.est} to control the RHS of \eqref{k.hat.eq}, we deduce that
\begin{align*}
\|f(t)(\partial_s\widehat{k}+3H\widehat{k})\|_{W^{N-4,\infty}(\Sigma_s,g)}\leq C\mathring{\varepsilon}e^{-\frac{3}{2}Hs}.
\end{align*}
Hence, we have 
\begin{align*}
\bigg\|\int_{s_0}^s\partial_\tau(f(t)e^{3H\tau}\widehat{k})\ud\tau\bigg\|_{W^{N-4,\infty}(\Sigma_s,g)}\leq &\,\int_{s_0}^se^{3H\tau}\|f(t)(\partial_\tau \widehat{k}+3H\widehat{k})\|_{W^{N-4,\infty}(\Sigma_\tau,g)} \ud\tau\leq C\mathring{\varepsilon}e^{\frac{3}{2}Hs}\\
\Rightarrow\quad \|f(t)\widehat{k}\|_{W^{N-4,\infty}(\Sigma_s,g)}\leq&\, e^{-3Hs}e^{3Hs_0}\|f(t)\widehat{k}\|_{W^{N-4,\infty}(\Sigma_{s_0},g)}+ C\mathring{\varepsilon}e^{-\frac{3}{2}Hs}\leq C\mathring{\varepsilon}e^{-\frac{3}{2}Hs},
\end{align*}
for all $s\in[s_0,+\infty)$. Going back to the equation \eqref{Phi.hat.eq} for $\widehat{\Phi}$,
we employ the latter improved estimate for $\widehat{k}$, together with the global estimate \eqref{glob:en.est} to infer that 
\begin{align*}
\|f(t)(\partial_s\widehat{\Phi}+2H\widehat{\Phi})\|_{W^{N-4,\infty}(\Sigma_s,g)}\leq C\mathring{\varepsilon}e^{-3Hs}.
\end{align*}
Repeating the above argument, integrating in $[s_0,s]$, gives 
\begin{align*}
\|f(t)\widehat{\Phi}\|_{W^{N-4,\infty}(\Sigma_s,g)}\leq C\mathring{\varepsilon}e^{-2Hs}.
\end{align*}
Using the latter to bound the RHS of \eqref{k.hat.eq} once more, we obtain the improved estimate
\begin{align*}
\|f(t)(\partial_s\widehat{k}+3H\widehat{k})\|_{W^{N-4,\infty}(\Sigma_s,g)}\leq C\mathring{\varepsilon}e^{-2Hs}.
\end{align*}
Integrating in $[s_0,s]$ and repeating the above argument gives
\begin{align*}
\|f(t)\widehat{k}\|_{W^{N-4,\infty}(\Sigma_s,g)}\leq C\mathring{\varepsilon}e^{-2Hs},
\end{align*}
which completes the proof of \eqref{k.Phi.ref.est}.

Next, we employ the already derived \eqref{k.Phi.ref.est}, together with \eqref{glob:en.est}, to estimate the RHS of \eqref{g.hat.eq}:
\begin{align*}
\|f(t)(\partial_s\widehat{g}-2H\widehat{g})\|_{W^{N-4,\infty}(\mathbb{R}\times\mathbb{S}^2,\mathring{g})}\leq C\mathring{\varepsilon},
\end{align*}
for all $(s,x)\in\mathcal{M}$.
Hence, it follows that 
\begin{align*}
\bigg\|\int_{s_1}^{s_2}\partial_s(f(t)e^{-2Hs}\widehat{g})\ud s\bigg\|_{W^{N-4,\infty}(\mathbb{R}\times\mathbb{S}^2,\mathring{g})}\leq &\,C\mathring{\varepsilon}e^{-2Hs_1},\qquad s_1<s_2.
\end{align*}
This implies that $e^{-2Hs}\widehat{g}_{ij}$ has a $W^{N-4,\infty}(\mathbb{R}\times\mathbb{S}^2,\mathring{g})$ limit, as $s\rightarrow+\infty$, denoted by $\widehat{g}_{ij}^\infty(x)$. 

On the other hand, integrating in $[s,+\infty)$ gives:
\begin{equation}
\bigg\|\int_s^{+\infty}\partial_s(f(t)e^{-2Hs}\widehat{g})ds\bigg\|_{W^{N-4,\infty}(\mathbb{R}\times\mathbb{S}^2,\mathring{g})}\leq C\mathring{\varepsilon}e^{-2Hs}
\end{equation}
Then with $\widehat{h}_{ij}=\hg_{ij}-\hg_{ij}^\infty e^{2Hs}$, it follows
\begin{align*}
\Rightarrow\quad&\left\{\begin{array}{ll}
\|f(t)e^{-2Hs}\widehat{h}\|_{W^{N-4,\infty}(\mathbb{R}\times\mathbb{S}^2,\mathring{g})}\leq C\mathring{\varepsilon}e^{-2Hs}  \\
\|f(t)\widehat{g}^{\infty}\|_{W^{N-4,\infty}(\mathbb{R}\times\mathbb{S}^2,\mathring{g})}\leq e^{-2Hs}\|f(t)\widehat{g}\|_{W^{N-4,\infty}(\mathbb{R}\times\mathbb{S}^2,\mathring{g})}+C\mathring{\varepsilon}e^{-2Hs}
\end{array}\right.\\
\Rightarrow\quad &\left\{\begin{array}{ll}
\|f(t)\widehat{h}\|_{W^{N-4,\infty}(\mathbb{R}\times\mathbb{S}^2,\mathring{g})}\leq C\mathring{\varepsilon}  \\
\|f(t)\widehat{g}^{\infty}\|_{W^{N-4,\infty}(\mathbb{R}\times\mathbb{S}^2,\mathring{g})}
\leq C\mathring{\varepsilon}\,. 
\end{array}\right.
\end{align*}

The expansion for $\widehat{\Phi}$ is derived similarly, using the equation \eqref{Phi.hat.eq}.
The refined bounds~\eqref{k.Phi.ref.est}, together with the global estimate, imply that
\begin{align*}
\|f(t)(\partial_s\widehat{\Phi}+2H\widehat{\Phi})\|_{W^{N-6}(\mathbb{R}\times\mathbb{S}^2,\mathring{g})}\leq C\mathring{\varepsilon}e^{-4Hs}.
\end{align*}
Hence, it follows that 
\begin{align*}
\bigg\|\int_{s_1}^{s_2}\partial_s(f(t)e^{2Hs}\widehat{\Phi})\ud s\bigg\|_{W^{N-6,\infty}(\mathbb{R}\times\mathbb{S}^2,\mathring{g})}\leq &\,C\mathring{\varepsilon}e^{-2Hs_1},\qquad s_1<s_2.   
\end{align*}
Hence, $e^{2Hs}\widehat{\Phi}$ has a $W^{N-6,\infty}(\mathbb{R}\times\mathbb{S}^2,\mathring{g})$ limit, as $s\rightarrow+\infty$, denoted by $\widehat{\Phi}^\infty(x)$. 
Moreover, with $\widehat{\Psi}=\hPhi-\hPhi^\infty e^{-2Hs}$,
integrating in $[s,+\infty)$ gives:
\begin{align*}
\bigg\|\int_s^{+\infty}\partial_s(f(t)e^{2Hs}\widehat{\Phi})\ud s&\bigg\|_{W^{N-6,\infty}(\mathbb{R}\times\mathbb{S}^2,\mathring{g})}\leq C\mathring{\varepsilon}e^{-2Hs}\\
\Rightarrow\quad&\left\{\begin{array}{ll}
\|f(t)e^{2Hs}\widehat{\Psi}\|_{W^{N-6,\infty}(\mathbb{R}\times\mathbb{S}^2,\mathring{g})}\leq C\mathring{\varepsilon}e^{-2Hs}  \\
\|f(t)\widehat{\Phi}^{\infty}\|_{W^{N-6,\infty}(\mathbb{R}\times\mathbb{S}^2,\mathring{g})}\leq e^{2Hs}\|f(t)\widehat{\Phi}\|_{W^{N-6,\infty}(\mathbb{R}\times\mathbb{S}^2,\mathring{g})}+C\mathring{\varepsilon}e^{-2Hs}
\end{array}\right.\\
\Rightarrow\quad &\left\{\begin{array}{ll}
\|f(t)\widehat{\Psi}\|_{W^{N-6,\infty}(\mathbb{R}\times\mathbb{S}^2,\mathring{g})}\leq C\mathring{\varepsilon}e^{-4Hs} \\
\|f(t)\widehat{\Phi}^{\infty}\|_{W^{N-6,\infty}(\mathbb{R}\times\mathbb{S}^2,\mathring{g})}
\leq C\mathring{\varepsilon} \,.
\end{array}\right.
\end{align*}
This completes the proof of the proposition.
\end{proof}

\begin{remark}[Asymptotic expansion] \label{rmk:expansion}
Higher order terms in the asymptotic expansion can be obtained --- assuming successively more derivatives for the solution --- by plugging \eqref{g.Phi.asym} into the equations \eqref{g.hat.eq}, \eqref{Phi.hat.eq} and repeating the arguments in the proof of the previous proposition. 
Similarly, we can derive asymptotic expansions for $\widehat{\Gamma}_{ic}^a,\widehat{k}_i{}^j$; see for example \cite[Theorem 1.3]{fournodavlos:FLRW}. 
A complete asymptotic expansion, and smoothness of the asymptotically rescaled metric across the conformal boundary at infinity $\mathcal{I}^+=\{s=\infty\}$ has recently been obtained in \cite{hintz:vasy:csm:24}.
\end{remark}

\appendix

\section{Local well-posedness}
\label{sec:app}

For completeness, we outline a proof of the local well-posedness of the Einstein vaccum equations with cosmological constant in parabolic gauge.

The argument for \emph{local} existence is qualitatively different from the proof of \emph{global} existence for the following reason: In the derivation of the energy identity \eqref{eq:hk:hGamma:Er} for the variables $\widehat{k},\widehat{\Gamma}$ (which lies at the heart of the global existence argument), we use the momentum constraint \eqref{mom.const.exp}, \eqref{mom.const.exp2}; see discussion in Section~\ref{subsec:en.id.discussion}. Unfortunately, such an identity is not available in a contraction mapping argument (with the aim of proving local well-posedness), since the constraint equations are not valid at each step of an iteration scheme. A remedy for this problem is to consider a system of wave equations for $k_{ij}$, instead of the propagation equation \eqref{eq:secondvariationalformula} in the first place. 

For the resulting \emph{reduced} system of equations,
standard results for quasilinear wave type systems can be used to establish local well-posedness. After showing that the solution produced is actually a solution to the Einstein vacuum equations \eqref{eq:EVE}, we infer as a corollary the local well-posedness of the original evolution equations in parabolic gauge (Section \ref{sec:evoleqs}), i.e. \emph{existence, uniqueness and continuous dependence of solutions to initial data}.

This type of argument was already employed in \cite[Section 10.2.2]{ch:kl} to prove the local well-posedness of the first and second variation equations \eqref{1stvar}, \eqref{2ndvar}, in the maximal gauge, where the corresponding equation for the lapse is elliptic:\[\tr k=0\qquad \triangle\Phi=|k|^2\Phi\,,\] see also \cite{FSm:maximal} for the extension to an initial-\emph{boundary} value problem. Similarly this argument was adapted in \cite[Section 14]{rod:speck:bang} for the CMC gauge $\tr k=f(t)$. 

Now consider the equations \eqref{2ndvar.2}, \eqref{Gamma.eq} for $k_i{}^j,\Gamma_{ic}^a$ which follow from \eqref{1stvar}, \eqref{2ndvar}. The parabolic gauge \eqref{lapse} can be written in the form $\tr k=\Phi+\text{a known source term}$. Since it makes no difference for local well-posedness, we drop the source term and consider for convenience the gauge condition:
\begin{align}\label{lapse:app}
\Phi=\tr k. 
\end{align}
First we derive a second order (in time) equation satisfied by $k_{ij}$.
Here the Einstein equations are \emph{not} imposed. Instead we derive a geometric condition, adapted to \eqref{lapse:app} which is equivalent to a wave equation for $k_{ij}$.
\begin{proposition}\label{lem:boxk}
Let $s$ be a time function for the spacetime metric $\g$,
and $g,k$ the first and second fundamental forms as in  \eqref{metric} and \eqref{eq:firstvariation}.
Then $k_{ij}$ satisfies the wave equation
\begin{equation}
    \label{boxk}
    e_0^2k_{ij}-\Delta_g k_{ij} = I_{ij}+H_{ij}\,,
\end{equation}
where 
\begin{subequations} \label{eq:boxk:H}
\begin{align}
H_{ij}=&-\Phi^{-3}\partial_s\Phi\nabla_i\nabla_j\Phi+\Phi^{-2}\nabla_i\nabla_j\partial_s\Phi-\nabla_i\nabla_j\Phi-\Phi^{-2}\partial_s\Gamma^l_{ij}\partial_l\Phi-e_0(k_{ij}\mathrm{tr}k-2{k_i}^lk_{jl})\\
\notag&+\Phi^{-1}k_{ij}\Delta_g\Phi
-\Phi^{-1}k_i{}^a\nabla_a\nabla_j\Phi
-\Phi^{-1}k_j{}^a\nabla_a\nabla_i\Phi\\
\notag&-\Phi^{-1}\nabla^a\Phi(\nabla_jk_{ia}+\nabla_ik_{ja}-2\nabla_ak_{ij})
+\Phi^{-1}\mathrm{tr}k\nabla_j\nabla_i\Phi\\
\notag&+\Phi^{-1}\nabla_j\Phi(\nabla_i\mathrm{tr}k-\nabla_ak_i{}^a)
+\Phi^{-1}\nabla_i\Phi(\nabla_j\mathrm{tr}k-\nabla_ak_j{}^a)\,,\\
I_{ij}=&-3(k_{ci}\mathrm{Ric}_j{}^c+k_{cj}\mathrm{Ric}_i{}^c) +2\mathrm{tr}k\mathrm{Ric}_{ji}
+2 g_{ji}\mathrm{Ric}_a{}^ck_c{}^a+(k_{ij}-g_{ji}\mathrm{tr}k)R+2\Lambda k_{ij} \,,
\end{align}
\end{subequations}
if and only if the following propagation equation holds:
\begin{align}\label{e0Rij4}
e_0({\RIC}_{ij}-\Lambda g_{ij})=\nabla_i\mathcal{G}_j+\nabla_j\mathcal{G}_i-\nabla_i\nabla_j(\Phi-\mathrm{tr}k),\qquad\mathcal{G}_i:={\RIC}_{0i},\qquad e_0=\Phi^{-1}\partial_s\,.
\end{align}
\end{proposition}
\begin{proof}
The wave equation for $k_{ij}$ is obtained from the second variation equation \eqref{2ndvar}. Indeed, multiplying \eqref{2ndvar} by $\Phi^{-1}$, and differentiating in $s$, it remains to compute the time derivative of the Ricci curvature, $\partial_s\Ric_{ij}$, directly from \eqref{eq:Ricci:local}. This is done in the proof of \cite[Proposition 2.1]{FSm:maximal} (as in \cite[Proof 10.2.2]{ch:kl}), and leads to \cite[(2.16)]{FSm:maximal}. (Note that \cite{FSm:maximal} uses the opposite sign convention in the definition of $k_{ij}$.) The only difference is that here we have written \eqref{e0Rij4} as a propagation equation for $\RIC_{ij}-\Lambda g_{ij}$, in terms of the gauge quantity $\tr k-\Phi$, with corresponding terms in \eqref{eq:boxk:H}.
\end{proof}
We couple the wave equation \eqref{boxk} to the following evolution equations:
\begin{align}
\label{1stvar.app}\partial_sg_{ij}=&\,2\Phi k_{ij}\,,\\
\label{lapse.eq.app}    \partial_s\Phi=&\,\Delta_g\Phi+\Lambda\Phi-\Phi |k|^2\,.
\end{align} 
Together, they form closed system for $(\Phi,g_{ij},k_{ij})$.
We now turn to the local well-posedness of the system (\ref{boxk}, \ref{1stvar.app}, \ref{lapse.eq.app}).
\begin{proposition}\label{prop:loc.well}
The system \eqref{boxk}, \eqref{1stvar.app}, \eqref{lapse.eq.app} is locally well-posed in the spaces
\begin{subequations} \label{loc.well.sp}
\begin{align}
 &k_{ij}\in C^0\big([s_0,s_1];H^N(\Sigma_s,g)\big)\cap C^1\big([s_0,s_1];H^{N-1}(\Sigma_s,g)\big),\\ 
&g_{ij}\in C^0\big([s_0,s_1];H^N(\Sigma_s,g)\big),\\ 
&\Phi\in C^0\big([s_0,s_1];H^N(\Sigma_s)\big)\cap L^2\big([s_0,s_1];H^{N+2}(\Sigma_s)\big),
\end{align}
\end{subequations}
for $N\ge4$.
\end{proposition}
\begin{proof}[Sketch of the proof]
The solution is obtained from a Picard iteration in the given spaces; see \cite[Proof 10.2.2]{ch:kl} and \cite[Section 3]{FSm:maximal} for the maximal case. Each iterate $(k^n,g^n,\Phi^n)$ is then subject to a linear equation. Note that $N\ge 4$ allows for control of up to two spatial derivatives of $k_{ij},g_{ij},\Phi$ in $L^\infty$ (by the Sobolev inequality of Lemma~\ref{lem:WN-2.infty}), which is sufficient to handle the coefficients of the quasi-linear system.

The argument primarily relies on the energy estimate for $k_{ij}$.
Once $(k_{ij},\partial_s k_{ij})$ is controlled in $H^N\times H^{N-1}$ on $\Sigma_s$, then from the transport equation \eqref{1stvar.app}, and the parabolic equation \eqref{lapse.eq.app}, we obtain control over $(g_{ij},\Phi)$ in the stated norms through energy and parabolic estimates. It remains to show that the energy estimate for $k_{ij}$ can be completed at the given Sobolev regularity. To obtain an $H^N$ energy estimate for $k_{ij}$, we commute the equation \eqref{boxk} with $\nabla^{(\ell)}$, $\ell\leq N-1$, multiply the commuted equation with $e_0\nabla^{(\ell)}k_{ij}$, and integrate in spacetime. All terms are estimated directly in energy, apart from the ones involving the spatial curvature and the Hessian of $\partial_s\Phi$ in \eqref{eq:boxk:H}, at top order $\ell=N-1$. These are of the form
\begin{align}\label{prob.terms.well}
\nabla^N\Gamma\, e_0\nabla^{N-1}k,\qquad \nabla^{N+1}\partial_s\Phi \,e_0\nabla^{N-1}k\,,
\end{align}
and cannot be directly estimated in energy, since they contain $N+1$ derivatives of $g$ and $N+3$ spatial derivatives of $\Phi$, if plug in \eqref{lapse.eq.app}. However, they can be treated as follows: $(i)$~In the first term, we integrate by parts, once in $\nabla$ and once in $e_0$, to produce a spacetime integrand of the form $e_0\nabla^{N-1}\Gamma\, \nabla^N k$. Now we observe that \eqref{1stvar.app} can be used to eliminate the $e_0$ derivative and obtain a quadratic term in $\nabla^Nk$, which is at the level of the $H^N$ energy of $k$. The boundary terms generated from integrating by parts in $e_0$ can be absorbed using Young's inequality.
$(ii)$~For the second term in \eqref{prob.terms.well} we differentiate \eqref{lapse.eq.app} in $\partial_s$, instead of replacing $\partial_s\Phi$ with $\Delta_g\Phi$. Then we observe that $\partial_s\Phi$ satisfies a parabolic equation with coefficients that contain at most one derivative of $k$. Hence, the standard parabolic estimate gives us control over $\partial_s\Phi$ in $L^2_sH^{N+1}$, i.e. two orders more regular in space than $\nabla k,e_0k\in H^{N-1}$. 
This is sufficient to complete the energy estimate for $k$ in $H^N\times H^{N-1}$. 
\end{proof}
We conclude this appendix with the argument that solutions to the reduced system of equations, with specific initial data, are in fact solutions to the Einstein vacuum equations with positive cosmological constant. Initial data for the reduced system of equations treated in Proposition~\ref{prop:loc.well} consists of $(\Phi,g_{ij},k_{ij},\partial_s k_{ij})$ on $\Sigma_{s_0}$.
In order to obtain a solution to Einstein's equations in parabolic gauge, the lapse $\Phi$ needs to satisfy \eqref{lapse:app} initially, and $g$ and $k$ must satisfy the constraint equations \eqref{mom.const}, and \eqref{Ham.const}.
Moreover $\partial_s k_{ij}$ must be prescribed in agreement with the second variation equation \eqref{2ndvar}:
\begin{align}\label{EVE.init.app}
\begin{split}
\Phi=\mathrm{tr}k,\qquad
 \partial_s k_{ij}=\nabla_i\nabla_j\Phi-\Phi(\Ric_{ij}+\tr k\, k_{ij}-2k_i{}^mk_{mj}-\Lambda g_{ij}),\\
 R-\lvert k\rvert^2+(\tr k)^2=2\Lambda,\qquad
\mathrm{div}_g\, k - \ud \tr_g k=0.\qquad\qquad\qquad
\end{split}
\end{align}
Finally we show that for such initial data a solution to the Einstein equations $G_{\mu\nu}=0$ is obtained, where $G_{\mu\nu}$ is the Einstein tensor:
\begin{align}\label{Gij}
G_{\mu\nu}={\RIC}_{\mu\nu}-\frac{1}{2}{\g}_{\mu\nu}{\R}+\Lambda {\g}_{\mu\nu}\,.
\end{align}
\begin{proposition}\label{prop:EVE}
Suppose $\g$ is a metric of the form \eqref{metric},
and $(g_{ij},\Phi, k_{ij})$ is a solution to the reduced system  \eqref{1stvar.app}, \eqref{lapse.eq.app}, and \eqref{boxk}. 
Then 
\begin{align}\label{EVE.rec.syst1a}
{\bf Ric}_{00}+\Lambda=&\,e_0(\Phi-\mathrm{tr}k),
\end{align}
\begin{align}\label{EVE.rec.syst1b}
e_0^2(\Phi-\mathrm{tr}k)-\Delta_g(\Phi-\mathrm{tr}k)=&-2\mathrm{tr}k\,e_0(\Phi-\mathrm{tr}k)
+4\Phi^{-1}\nabla^i\Phi\,\mathcal{G}_i,
\end{align}
\begin{align}\label{EVE.rec.syst2}
\notag e_0G_{ij}=&\,\nabla_i\mathcal{G}_j+\nabla_j\mathcal{G}_i-g_{ij}\nabla^a\mathcal{G}_a\\
&+\frac{1}{2}g_{ij}\,e_0^2(\Phi-\mathrm{tr}k)
+\frac{1}{2}g_{ij}\,\Delta_g(\Phi-\mathrm{tr}k)-\nabla_i\nabla_j(\Phi-\mathrm{tr}k)\\
\notag&+(2k_{ij}-g_{ij}\mathrm{tr}k)G_a{}^a+g_{ij} k^{ab}G_{ab}+
(g_{ij}\mathrm{tr}k-2k_{ij})\,e_0(\Phi-\mathrm{tr}k),
\end{align}
and
\begin{align}\label{EVE.rec.syst3}
e_0^2\mathcal{G}_i-\Delta_g \mathcal{G}_i= L_i\bigl[\nabla^{a_1} e_0^{a_2}(\Phi-\mathrm{tr}k),\nabla^{a_3}\mathcal{G},\nabla^{a_3}G: a_1+a_2\leq 2,\, a_3\leq1 \bigr],
\end{align}
where $L_i$ is linear in the given variables, with coefficients that depend on up to two derivatives of the metric $\g$.

Furthermore with initial data satisfying \eqref{EVE.init.app}, the solution to the reduced system  gives rise to a trivial solution of the system (\ref{EVE.rec.syst1b}-\ref{EVE.rec.syst3}).
In particular, $G_{\mu\nu}=0$.

%
%
\end{proposition}
\begin{proof}
The first equation  \eqref{EVE.rec.syst1a} 
follows from the propagation equation for the lapse  \eqref{lapse.eq.app}  and the propagation equation \eqref{eq:trk} for $\tr k$.
The equation \eqref{EVE.rec.syst1b} follows by taking the trace of the wave equation for $k_{ij}$ \eqref{boxk}, and using equations \eqref{1stvar.app}, \eqref{lapse.eq.app}, along with the identities \eqref{1stvar.inv.2} and \eqref{eq:perppara}:
\begin{align*}
e_0 g^{ij}=-2 k^{ij},\qquad \mathcal{G}_i=\nabla^jk_{ij}-\nabla_i\mathrm{tr}k.
\end{align*}
The computations are long, but note for example that $g^{ij} I_{ij}=2\Lambda \tr k$. 
To derive the equation \eqref{EVE.rec.syst2}, we use the propagation equation \eqref{e0Rij4}, together with \eqref{EVE.rec.syst1a}:
\begin{align*}
e_0G_{ij}=&\,e_0({\bf Ric}_{ij}-\Lambda g_{ij})-\frac{1}{2}g_{ij}\,e_0({\bf R}-4\Lambda)-k_{ij}{\bf R}+4\Lambda k_{ij}\\
=&\,\nabla_i\mathcal{G}_j+\nabla_j\mathcal{G}_i
-\nabla_i\nabla_j(\Phi-\mathrm{tr}k)
+g_{ij} k^{ab}(\RIC_{ab}-\Lambda g_{ab})\\
&+\frac{1}{2}g_{ij}\,e_0^2(\Phi-\mathrm{tr}k)
-g_{ij}\nabla^a\mathcal{G}_a
+\frac{1}{2}g_{ij}\,\Delta_g(\Phi-\mathrm{tr}k)
+k_{ij}(4\Lambda-{\bf R}),
\end{align*}
and then substitute the algebraic relations
\begin{multline*}
4\Lambda-{\bf R}=G_\mu{}^\mu,\qquad
G_{\alpha\beta}-\frac{1}{2}{\boldsymbol g}_{\alpha\beta}G_\mu{}^\mu={\bf Ric}_{\alpha\beta}-\Lambda {\boldsymbol g}_{\alpha\beta},\qquad
G_{00}+\frac{1}{2}G_\mu{}^\mu={\bf Ric}_{00}+\Lambda,\\
\Longrightarrow\quad G_{00}=2e_0(\Phi-\mathrm{tr}k)-G_a{}^a\,,\qquad 4\Lambda-{\bf R}=2G_a{}^a-2e_0(\Phi-\mathrm{tr}k).
\end{multline*}
For the derivation of \eqref{EVE.rec.syst3} we use the second Bianchi identity:
\begin{equation}\label{Einstein.Bianchi}
{\Nabla}_0 G_{0i}={\boldsymbol\nabla}^j G_{ji}
\quad\Rightarrow\quad e_0\mathcal{G}_i=\nabla^jG_{ji}-\mathrm{tr}k\,\mathcal{G}_i+
\Phi^{-1}\nabla^j\Phi\, G_{ji}+\Phi^{-1}\nabla_i\Phi\, G_{00}+k_i{}^j\mathcal{G}_j
\end{equation}
Taking the divergence of \eqref{EVE.rec.syst2} and plugging in \eqref{Einstein.Bianchi}, \eqref{EVE.rec.syst1b} gives \eqref{EVE.rec.syst3}.

Initial data for the equations (\ref{EVE.rec.syst1b}-\ref{EVE.rec.syst3}) consists of $\Phi-\mathrm{tr}k$, $e_0(\Phi-\tr k)$, $G_{ij}$, $\mathcal{G}_i$, $e_0\mathcal{G}_i$ on $\Sigma_{s_0}$. For a solution to (\ref{boxk}, \ref{1stvar.app}, \ref{lapse.eq.app}) whose initial data satisfies \eqref{EVE.init.app}, we have that
\begin{align}\label{EVE.init.app2}
{\RIC}_{\mu\nu}=\Lambda {\g}_{\mu\nu}\Longrightarrow e_0(\Phi-\mathrm{tr}k)=0\,, G_{ij}=0\,, \mathcal{G}_i=0,\quad \text{on $\Sigma_{s_0}$}.
\end{align}
Going back to \eqref{Einstein.Bianchi}, we then infer that $e_0\mathcal{G}_i=0$ on $\Sigma_{s_0}$ as well. Hence, the initial data for the system \eqref{EVE.rec.syst1b}-\eqref{EVE.rec.syst3} is trivial.

Finally, 
we observe that solutions to the system (\ref{EVE.rec.syst1b})-(\ref{EVE.rec.syst3}) are unique, with $(G_{ij},\mathcal{G}_i,\Phi-\tr k)$ in $L^2\times H^1 \times H^2$.
Indeed, we note that two derivatives of $\Phi-\mathrm{tr}k$ can be controlled from one spatial derivative of $\mathcal{G}_i$, at the level of an $H^2$ energy estimate using \eqref{EVE.rec.syst1b}. In turn, $G_{ij}$ is controlled in $\mathrm{L}^2$ by the $\mathrm{H}^1$ energy of $\mathcal{G}_i$ and the $\mathrm{H}^2$ energy of $\Phi-\mathrm{tr}k$, using \eqref{EVE.rec.syst2}. From \eqref{EVE.rec.syst3}, we notice that the $\mathrm{H}^1$ energy of $\mathcal{G}_i$ is at the level of the $\mathrm{H}^2$ energy of $\Phi-\mathrm{tr}k$ and the $\mathrm{H}^1$ energy of $G_{ij}$. The latter term loses a derivative in an energy argument, since we only control it in $\mathrm{L}^2$. To circumvent the problem, after multiplying \eqref{EVE.rec.syst3} with $e_0\mathcal{G}_i$ and integrating in spacetime, we replace the term $\nabla G e_0\mathcal{G}_i$ with $e_0G\nabla\mathcal{G}_i$ by integrating by parts. Then, we use \eqref{EVE.rec.syst2} to eliminate $e_0 G$ in favor of derivatives of $\mathcal{G}$, and $\Phi-\mathrm{tr}k$. 

Note that in the process, we only assume that up to two derivatives of ${\boldsymbol g}$ are bounded. This is consistent with the global existence statement in Theorem \ref{thm:fs}, cf. Lemma \ref{lem:WN-2.infty}, $N\ge4$.
\end{proof}

\section{Einstein equations in parabolic gauge}
\label{app.eqs}

We collect the equations derived in Section~\ref{sec:eqs} for a solution $\g$ to the Einstein vacuum equations in parabolic gauge, with respect to an arbitrary reference metric $\widetilde{\g}$.

\smallskip
\textit{Metric:}
\begin{equation}\label{app.metric}
\g=-\Phi^2 \ud s^2+ {g}_{ij}\ud x^i \ud x^j
\end{equation}
The spacetime metric $\g$ satisfies $\RIC[\g]=\Lambda \g$.
Here $\Lambda>0$, and we set $H=\sqrt{\frac{\Lambda}{3}}$.

\smallskip
\textit{Reference metric:}
\begin{align}\label{app.metric.ref}
\widetilde{\g} =-\widetilde{\Phi}^2\ud s^2+ \widetilde{g}_{ij}\ud x^i\ud x^j
\end{align}
The components of the Ricci curvature of $\widetilde{\g}$ are denoted by $\widetilde{\RIC}_{\mu\nu}=\RIC[\widetilde{\g}]_{\mu\nu}$.

\smallskip
\textit{Differences:}
\begin{align}\label{app.diff}
\begin{split}
\widehat\Phi=\Phi-\widetilde\Phi,\qquad \widehat{g}_{ij}=g_{ij}-\widetilde{g}_{ij},\qquad \widehat{g}^{ij}=g^{ij}-\widetilde{g}^{ij},\\
\widehat\nabla=\nabla-\widetilde{\nabla},\qquad\widehat\Gamma_{ij}^a=\Gamma_{ij}^a-\widetilde{\Gamma}_{ij}^a,\qquad\widehat{k}_i{}^j=k_i{}^j-\widetilde{k}_i{}^j\,.
\end{split}
\end{align}
Here $k_i{}^j=g^{cj}k_{ic}$ and $\widetilde{k}_i{}^j=\widetilde{g}^{cj}\widetilde{k}_{ic}$.

\smallskip
\textit{Parabolic gauge:}
\begin{equation}
    \label{app.lapse}
    \widehat\Phi=\widehat{k}_l{}^l
\end{equation}

\smallskip
\textit{First variation formula:}
\begin{equation}
\label{app.g.hat.eq}\partial_s\widehat{g}_{ij}-2H\widehat{g}_{ij}=\,2\Phi g_{ja}\widehat{k}_i{}^a+
2\widehat{\Phi}g_{ja}\widetilde{k}_i{}^a
+2\widetilde{\Phi}(\widetilde{k}_i{}^a-H\widetilde{\Phi}^{-1}\delta_i{}^a)\widehat{g}_{ja}  
\end{equation}
Similarly, $\widehat{g}^{ij}$ satisfies \eqref{g.inv.hat.eq}.

\smallskip
\textit{Lapse equation:}
\begin{equation}
\label{app.Phi.hat.eq}\partial_s\widehat{\Phi}-\Delta_g\widehat{\Phi}+2H\widehat{\Phi}=\mathfrak{F}+\tPhi\Bigl(\widetilde{\RIC}_{00}+\Lambda\Bigr)
  \end{equation}
Here $\mathfrak{F}$ is given by \eqref{frak.F}.

\smallskip
\textit{Second variation equation:}
  \begin{align}
\label{app.k.hat.eq}\partial_s\widehat{k}_i{}^j+3H\widehat{k}_i{}^j=&
\,g^{cj}\nabla_i\nabla_c\widehat\Phi-\widetilde{\Phi}(\widetilde{k}_l{}^l-3H\widetilde{\Phi}^{-1})\widehat{k}_i{}^j+\mathfrak{K}_i{}^j-\widetilde{\Phi}\Bigl(\widetilde{\RIC}_i{}^j-\Lambda \delta_i{}^j\Bigr)\\
\notag&+\frac{1}{3}\Phi g^{cj}(\nabla_c\widehat\Gamma_{ia}^a-\nabla_a\widehat\Gamma_{ci}^a)
+\frac{2}{3}\Phi g^{ab}(\nabla_a\widehat\Gamma^j_{bi}-\nabla_i\widehat\Gamma^j_{ab})
  \end{align} 
The expression for the remaining terms  $\mathfrak{K}_i{}^j$ is given in \eqref{frak.K}.

\smallskip
\textit{Connection:}
\begin{equation}
    \label{app.Gamma.hat.eq} \partial_s\widehat{\Gamma}_{ic}^a=\,\Phi\nabla_i\widehat{k}_c{}^a
+\Phi\nabla_c\widehat{k}_i{}^a-g^{ab}g_{cj}\Phi\nabla_b\widehat{k}_i{}^j+\mathfrak{G}_{ic}^a
\end{equation}
The expression for $\mathfrak{G}_{ic}^a$ is given in \eqref{frak.G}.

\smallskip
\textit{Momentum constraint:}
\begin{equation}
    \label{app.mom.const.hat}
\nabla_j\widehat k_i{}^j=\,\partial_i\widehat{\Phi}-\widehat{\Gamma}_{jc}^j(\widetilde{k}_i{}^c-H\delta_i{}^c)+\widehat{\Gamma}_{ji}^c(\widetilde{k}_c{}^j-H\delta_c{}^j)
-\widetilde{\RIC}_{0i}
\end{equation}

\providecommand{\bysame}{\leavevmode\hbox to3em{\hrulefill}\thinspace}
\providecommand{\MR}{\relax\ifhmode\unskip\space\fi MR }
\providecommand{\MRhref}[2]{%
  \href{http://www.ams.org/mathscinet-getitem?mr=#1}{#2}
}
\providecommand{\href}[2]{#2}

\end{document}